\newtheorem{thm}{Theorem}
\newtheorem{lem}{Lemma}
\newtheorem{cor}{Corollary}
\newtheorem{prop}{Proposition}
\newtheorem{defi}{Definition}
\newcommand{\R}{\mathbb{R}}
\newcommand{\e}{\begin{equation}}
\newcommand{\ee}{\end{equation}}
\newcommand{\en}{\begin{equation*}}
\newcommand{\een}{\end{equation*}}
\newcommand{\eqn}{\begin{eqnarray}}
\newcommand{\eeqn}{\end{eqnarray}}
\newcommand{\bmat}{\begin{bmatrix}}
\newcommand{\emat}{\end{bmatrix}}
\renewcommand{\P}[1]{\Pr\left[#1\right]}
\newcommand{\E}{\operatorname{E}}
\newcommand{\vct}[1]{\boldsymbol{#1}}
\newcommand{\mtx}[1]{\boldsymbol{#1}}
\newcommand{\T}{\mathrm{T}}
\newcommand{\starT}{{\star\mathrm{T}}}
\newcommand{\rank}{\operatorname{rank}}
\newcommand{\Sign}{\operatorname{Sign}}
\newcommand{\dist}{\operatorname{dist}}
\newcommand{\set}[1]{\mathbb{#1}}
\DeclareMathOperator*{\minimize}{\text{minimize}}
\newcommand{\cmt}[1]{}
\newcommand{\calA}{\mathcal{A}}
\newcommand{\calB}{\mathcal{B}}
\newcommand{\calO}{\mathcal{O}}
\newcommand{\calP}{\mathcal{P}}
\newcommand{\calU}{\mathcal{U}}
\newcommand{\calW}{\mathcal{W}}
\newcommand{\calX}{\mathcal{X}}
\newcommand{\va}{\vct{a}}
\newcommand{\vd}{\vct{d}}
\newcommand{\vs}{\vct{s}}
\newcommand{\vw}{\vct{w}}
\newcommand{\vx}{\vct{x}}
\newcommand{\vy}{\vct{y}}
\newcommand{\vz}{\vct{z}}
\newcommand{\vzero}{\vct{0}}
\newcommand{\mA}{\mtx{A}}
\newcommand{\mB}{\mtx{B}}
\newcommand{\mC}{\mtx{C}}
\newcommand{\mD}{\mtx{D}}
\newcommand{\mE}{\mtx{E}}
\newcommand{\mP}{\mtx{P}}
\newcommand{\mQ}{\mtx{Q}}
\newcommand{\mR}{\mtx{R}}
\newcommand{\mT}{\mtx{T}}
\newcommand{\mU}{\mtx{U}}
\newcommand{\mV}{\mtx{V}}
\newcommand{\mW}{\mtx{W}}
\newcommand{\mX}{\mtx{X}}
\newcommand{\mDelta}{\mtx{\Delta}}
\newcommand{\mPhi}{\mtx{\Phi}}
\newcommand{\mPsi}{\mtx{\Psi}}
\newcommand{\mSigma}{\mtx{\Sigma}}
\newcommand{\mId}{{\bm I}}
\newcommand{\mzero}{{\bm 0}}
\newcommand{\mPi}{{\bf \Pi}}
\newcommand{\setI}{\set{I}}
\newcommand{\setS}{\set{S}}
\newlength{\imgwidth}
\newcommand{\notexli}[1]{\textcolor{blue}{\bf [{\em Note: xli:} #1]}}
\newcommand{\twoCol}[2]{\ifthenelse{\boolean{twoColVersion}} {#1} {#2} }
\crefname{hypothesis}{Hypothesis}{Hypotheses}
\title{Nonconvex Robust Low-rank Matrix Recovery\thanks{Submitted to the editors \today. The first and second authors contributed equally to this paper.
		\funding{Z.~Zhu and R.~Vidal were partially supported by NSF Grant 1704458. A.~M.-C.~So was partially supported by the Hong Kong Research Grants Council (RGC) General Research Fund (GRF) Project CUHK 14208117.}
}
}
\author{Xiao Li\thanks{Department of Electronic Engineering, The Chinese University of Hong Kong. 
  (\email{xli@ee.cuhk.edu.hk}, \url{http://www.ee.cuhk.edu.hk/\~xli/}).}
\and Zhihui Zhu \thanks{Center for Imaging Science, Mathematical Institute for Data Science, Johns Hopkins University. (\email{zzhu29@jhu.edu}, \url{http://cis.jhu.edu/\~zhihui/}; \email{rvidal@jhu.edu}, \url{http://cis.jhu.edu/\~rvidal/}).}
\and Anthony Man-Cho So \thanks{Department of Systems Engineering and Engineering Management, The Chinese University of Hong Kong. (\email{manchoso@se.cuhk.edu.hk}, \url{http://www.se.cuhk.edu.hk/\~manchoso/}).   }
\and Ren\'{e} Vidal\footnotemark[3]
}
\begin{document}

\maketitle

\begin{abstract}
In this paper we study the problem of recovering a low-rank matrix from a number of random linear measurements that are corrupted by outliers taking arbitrary values. We consider a nonsmooth nonconvex formulation of the problem, in which we explicitly enforce the low-rank property of the solution by using a factored representation of the matrix variable and employ an $\ell_1$-loss function to robustify the solution against outliers. We show that even when a constant fraction (which can be up to almost half) of the measurements are arbitrarily corrupted, as long as certain measurement operators arising from the measurement model satisfy the so-called $\ell_1/\ell_2$-restricted isometry property, the ground-truth matrix can be exactly recovered from any global minimum of the resulting optimization problem. Furthermore, we show that the objective function of the optimization problem is sharp and weakly convex. Consequently, a subgradient Method (SubGM) with geometrically diminishing step sizes will converge linearly to the ground-truth matrix when suitably initialized. We demonstrate the efficacy of the SubGM for the nonconvex robust low-rank matrix recovery problem with various numerical experiments.
\end{abstract}

\begin{keywords}
 robust low-rank matrix recovery, sharpness, weak convexity, subgradient method, robust PCA
\end{keywords}

\begin{AMS}
65K10,  90C26, 68Q25, 68W40, 62B10.
\end{AMS}

\section{Introduction}

Low-rank matrices are ubiquitous in computer vision~\cite{candes2011robust,haeffele2014structured}, machine learning~\cite{srebro2004maximum}, and signal processing~\cite{davenport2016overview} applications. One fundamental computational task is to recover a low-rank matrix $\mX^\star\in\R^{n_1\times n_2}$ from a small number of linear measurements
\e
\vy = \calA(\mX^\star),	
\label{eq:ms model}
\ee
where $\calA:\R^{n_1 \times n_2}\rightarrow \R^{m}$ is a known linear operator.  Such a task arises in quantum tomography~\cite{aaronson2007learnability}, face recognition~\cite{candes2011robust}, linear system identification~\cite{fazel2004rank}, collaborative filtering~\cite{candes2009exact}, etc. We refer the interested reader to~\cite{ZSY12,davenport2016overview} for more detailed discussions.

Although in many interesting scenarios the number of linear measurements $m$ is much smaller than $n_1n_2$, the low-rank property of $\mX^\star$ suggests that its degrees of freedom can also be much smaller than $n_1n_2$, thus making the task of recovering $\mX^\star$ possible. This has been demonstrated in, e.g.,~\cite{candes2009exact}, where a nuclear norm minimization appproach for recovering a low-rank matrix from random linear measurements is studied. Despite the strong theoretical guarantees of such approach (see also~\cite{gross11}), most existing methods for solving the nuclear norm minimization problem do not scale well with the problem size (i.e., $n_1$, $n_2$, and $m$). To overcome this computational bottleneck, one approach is to enforce the low-rank property explicitly by using a factored representation of the matrix variable in the optimization formulation. Such an approach
has already been explored in some early works on low-rank semidefinite programming (see, e.g.,~\cite{burer2003nonlinear,burer2005local} and the references therein) but has gained renewed interest lately in the study of low-rank matrix recovery problems. For the purpose of illustration, let us first consider the case where the ground-truth matrix $\mX^\star$ is symmetric positive semidefinite with rank $r$. Instead of optimizing, say, an $\ell_2$-loss function involving an $n\times n$ symmetric positive semidefinite matrix variable $\mX$ with either a constraint or a regularization term controlling the rank of $\mX$, we consider the factorization $\mX=\mU\mU^\T$ and optimize the loss function over the $n\times r$ matrix variable $\mU$:
\begin{align}
\minimize_{\mU\in\R^{n\times r}} \left\{ \xi(\mU) := \frac{1}{m}\|\vy - \calA(\mU\mU^\T)\|_2^2\right\}.
\label{eq:ms factorization}
\end{align}
There are two obvious advantages with the formulation~\eqref{eq:ms factorization}. First, the recovered matrix will automatically satisfy the rank and positive semidefinite constraints. Second, when the rank of the ground-truth matrix is small, the size of the variable $\mU$ can be much smaller than that of $\mX$. Although the quadratic nature of $\mU\mU^\T$ renders the objective function $\xi$ in \eqref{eq:ms factorization} nonconvex, recent advances in the analysis of the landscapes of structured  nonconvex functions allow one to show that when the linear measurement operator $\mathcal{A}$ satisfies certain restricted isometry property (RIP), local search algorithms (such as gradient descent) are guaranteed to find a global minimum of~\eqref{eq:ms factorization} and exactly recover the underlying low-rank matrix $\mX^\star$~\cite{sun2015guaranteed,bhojanapalli2016lowrankrecoveryl,ge2016matrix,park2016non,zhu2018global}. Moreover, it was shown in~\cite{zheng2015convergent,tu2015low} that~\eqref{eq:ms factorization} satisfies an error bound condition, indicating that simple gradient descent with an appropriate initialization will converge to a global minimum at a linear rate; see \cite{chi2018review} for a comprehensive review.

\subsection{Our Goal and Main Results}
In this paper, we consider the {\em robust low-rank matrix recovery problem}, in which the measurements are corrupted by {\em outliers}. Specifically, we assume that
\e
\vy = \calA(\mX^\star) + \vs^\star,
\label{eq:rms model}\ee
where  $\vs^\star\in\R^m$  is an outlier vector such that a small fraction of its entries (the outliers) have an arbitrary magnitude and the remaining entries are zero. Moreover, the set of nonzero entries is assumed to be unknown. Outliers are prevalent in the context of sensor calibration~\cite{li2017low} (because of sensor failure), face recognition~\cite{de2003framework} (due to self-shadowing, specularity, or saturations in brightness), video surveillance~\cite{li2004statistical} (where the foreground objects are modeled as outliers), etc.
\begin{figure}[!h]
	\begin{subfigure}{0.31\linewidth}
		\centerline{
			\includegraphics[height=1.2in]{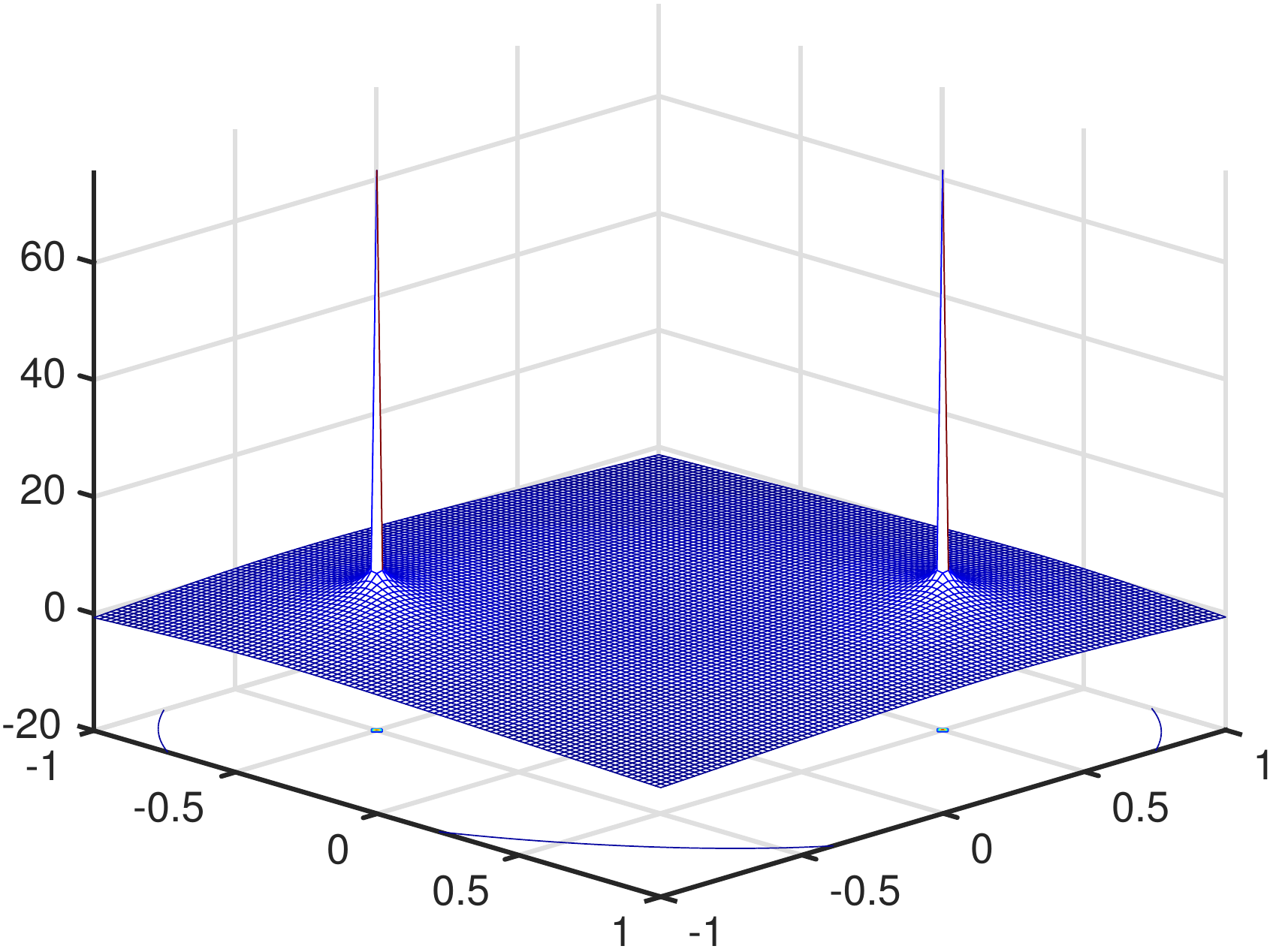}}
		\caption{No outliers }
	\end{subfigure}
	\begin{subfigure}{0.31\linewidth}
		\centerline{
			\includegraphics[height=1.2in]{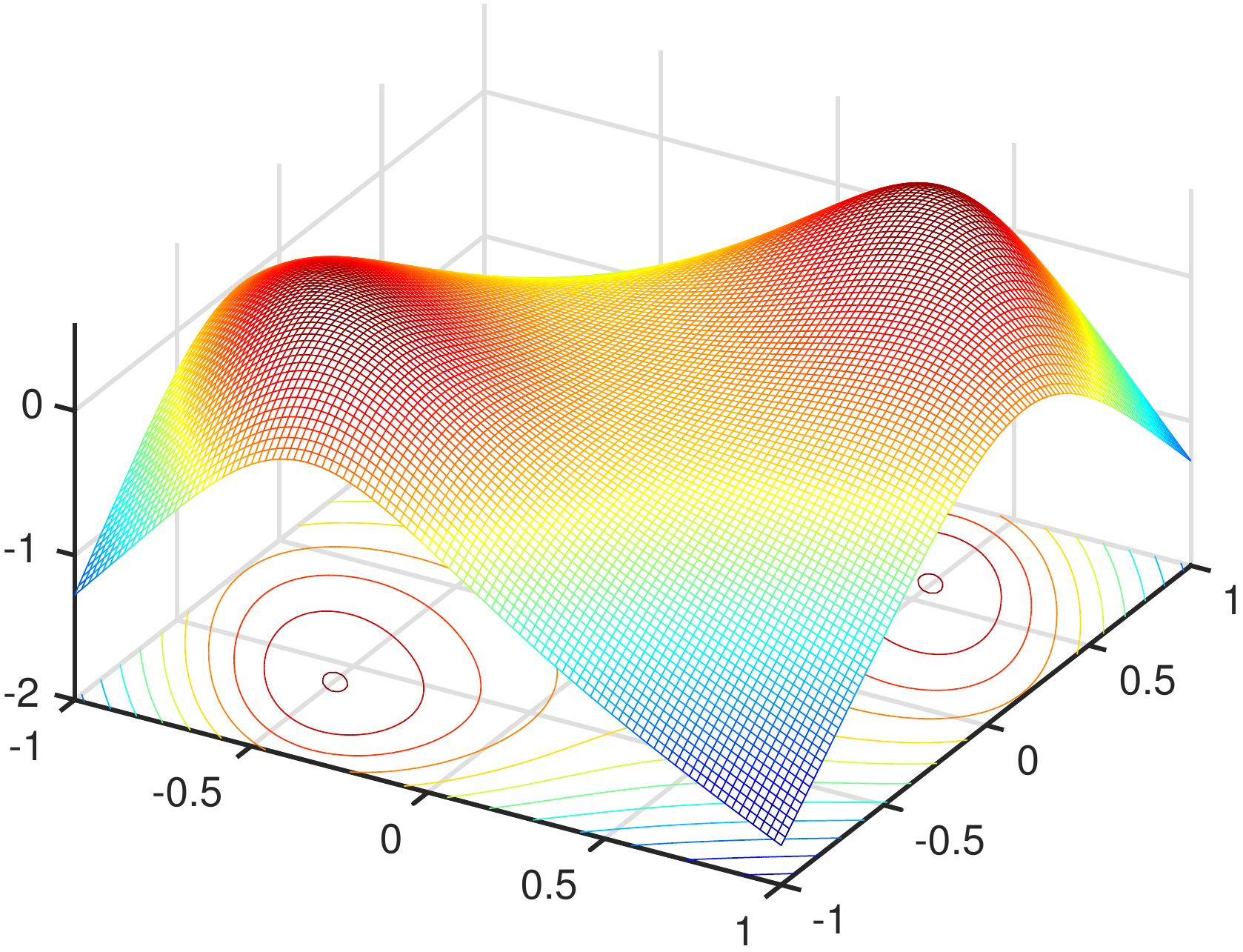}}
		\caption{$5\%$ outliers}
	\end{subfigure}
	\begin{subfigure}{0.31\linewidth}
		\centerline{
			\includegraphics[height=1.2in]{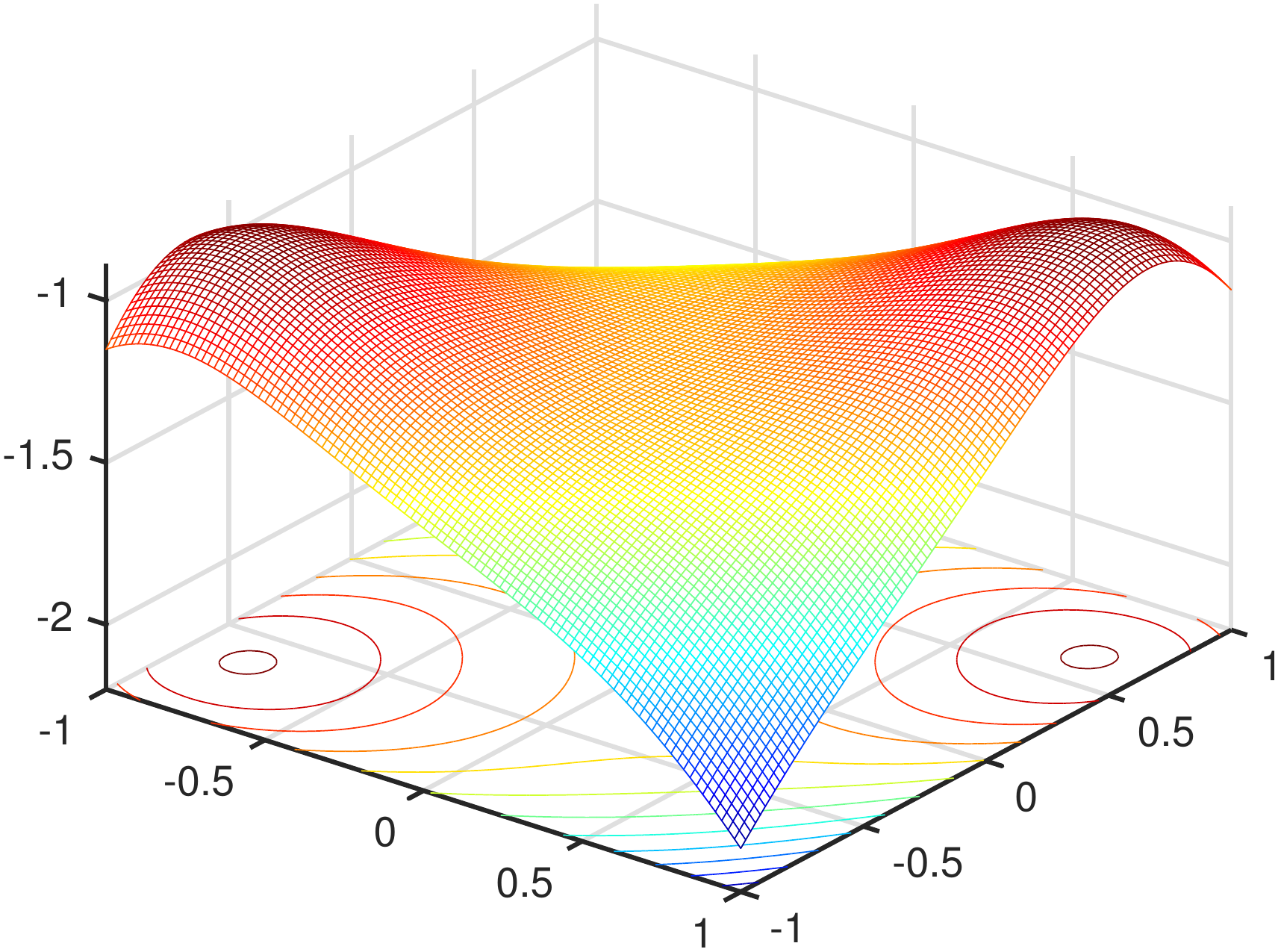}}
		\caption{$10\%$ outliers}
	\end{subfigure}
	\vfill
	\begin{subfigure}{0.31\linewidth}
		\centerline{
			\includegraphics[height=1.2in]{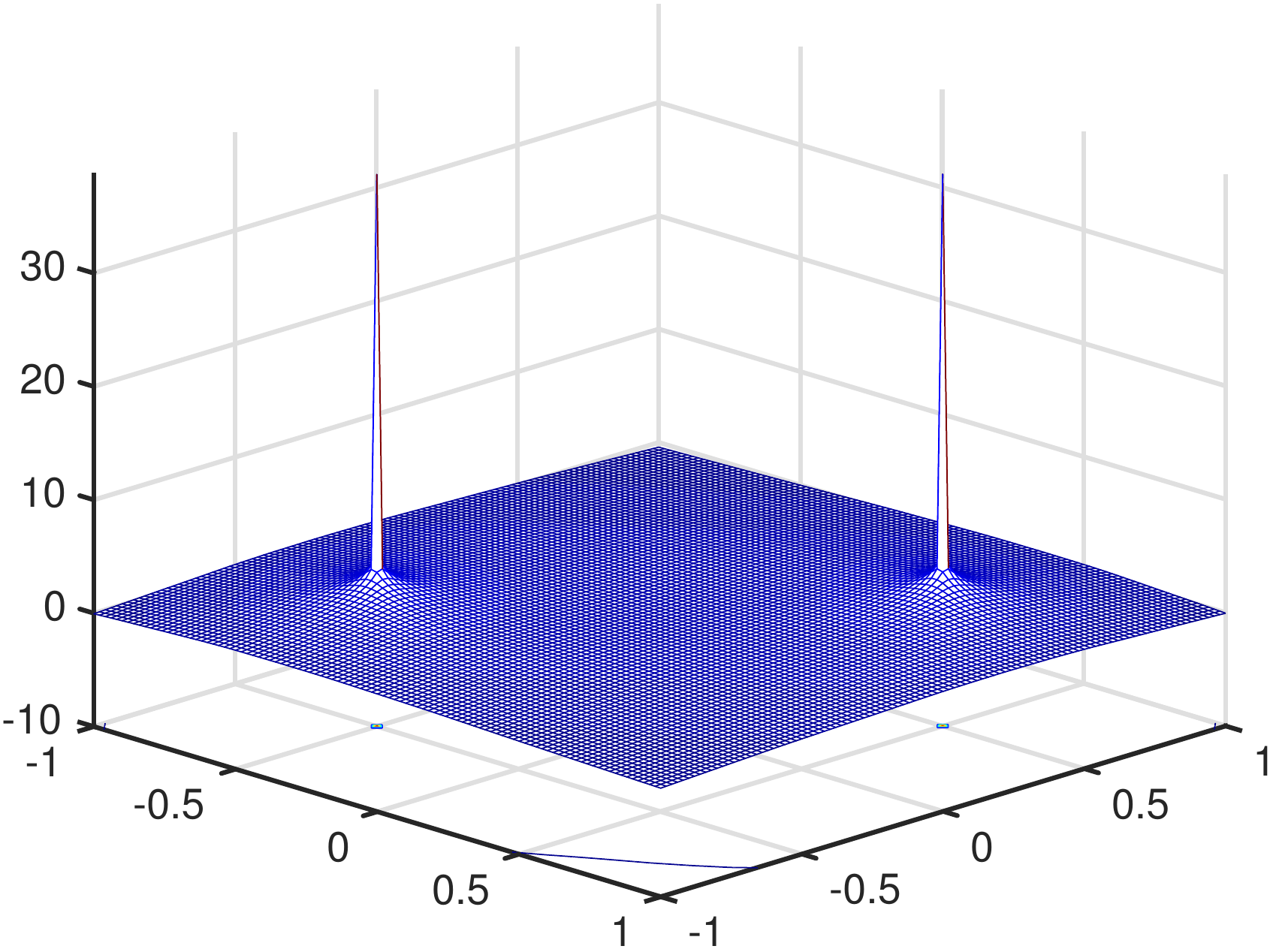}}
		\caption{No outliers }
	\end{subfigure}
	\begin{subfigure}{0.31\linewidth}
		\centerline{
			\includegraphics[height=1.2in]{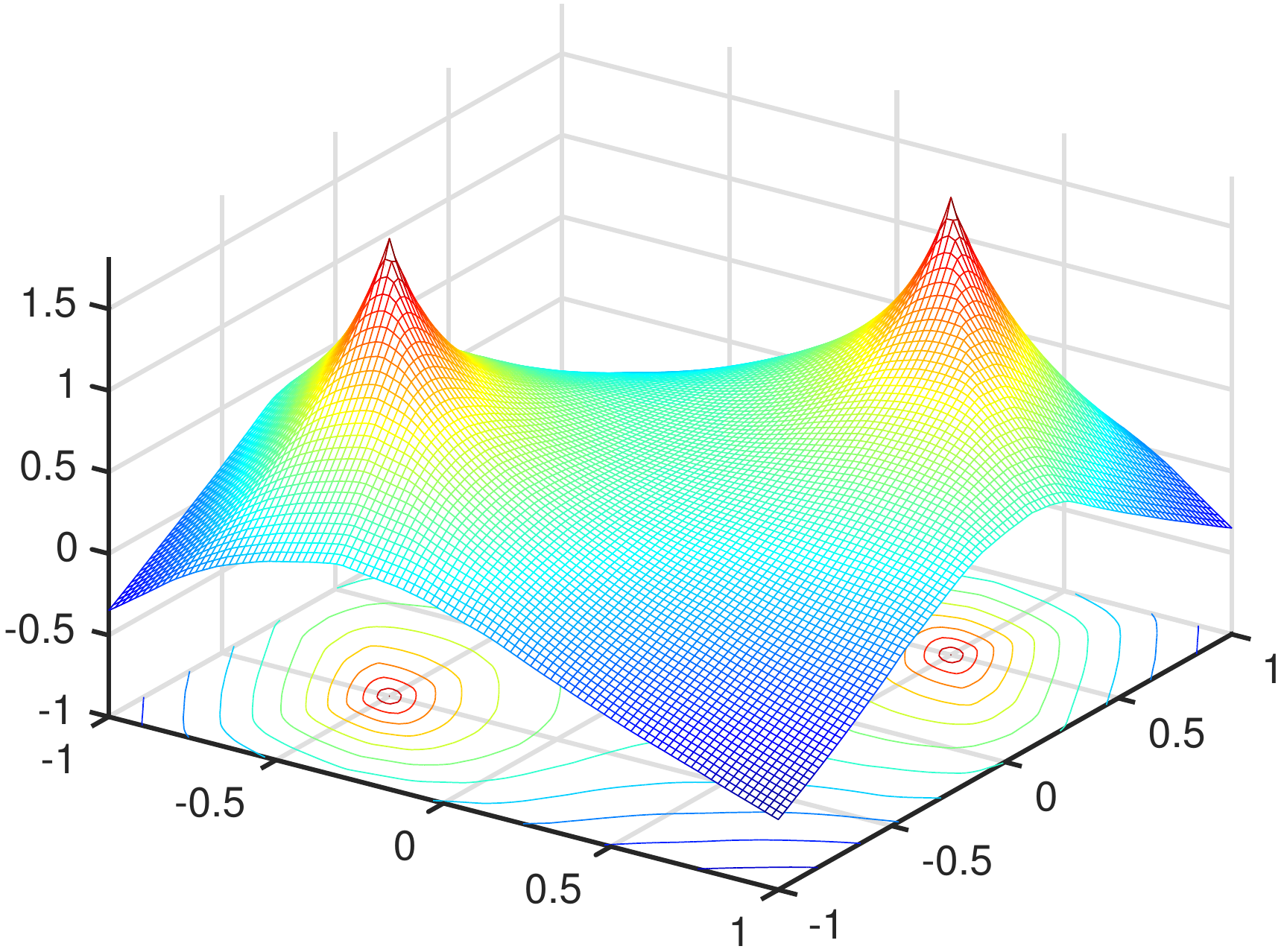}}
		\caption{$5\%$ outliers}
	\end{subfigure}
	\begin{subfigure}{0.31\linewidth}
		\centerline{
			\includegraphics[height=1.2in]{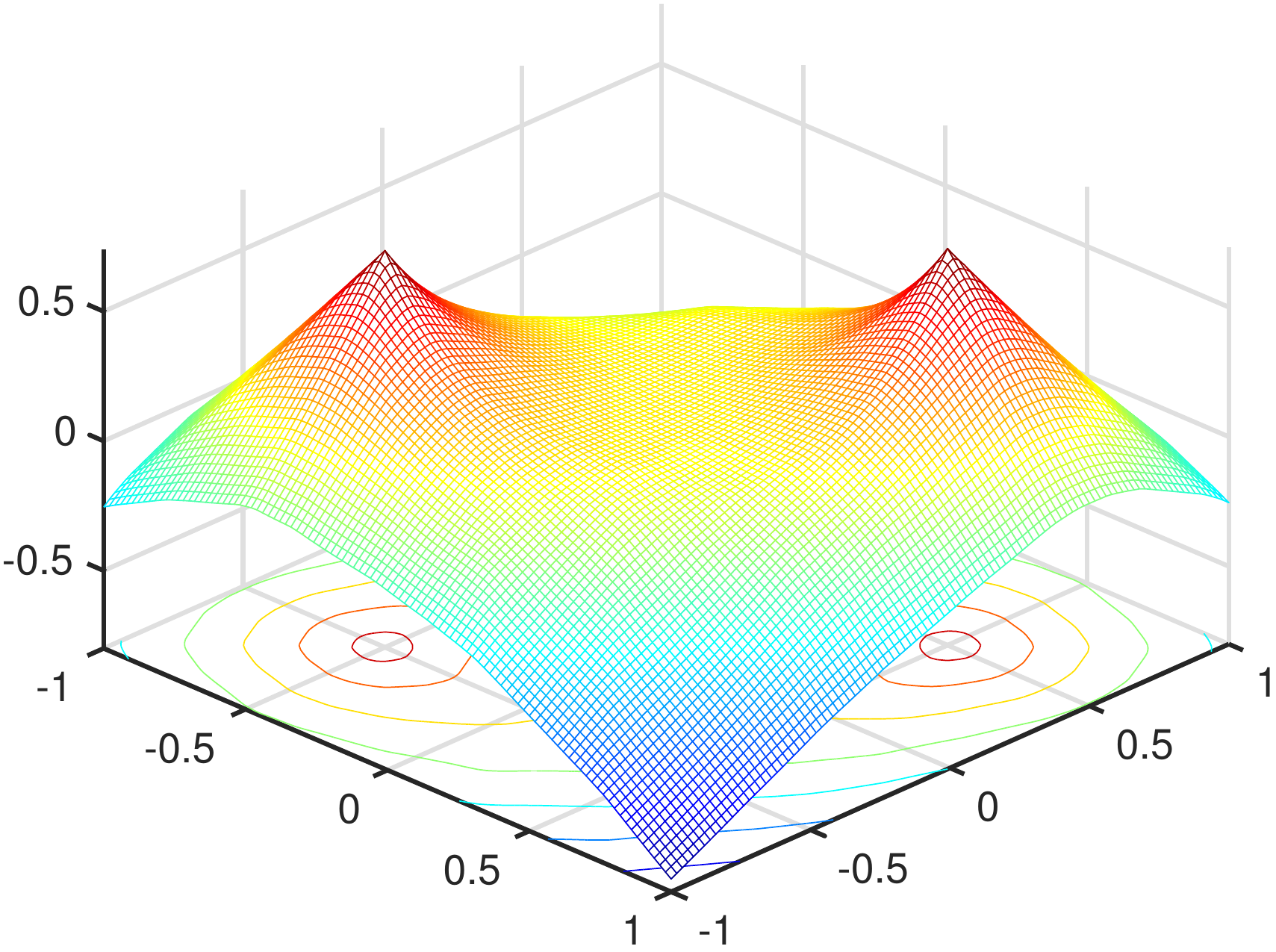}}
		\caption{$10\%$ outliers}
	\end{subfigure}
	\caption{\small Landscapes of the objective functions $\mU \mapsto \xi(\mU)= \frac{1}{m}\|\vy - \calA(\mU\mU^\T)\|_2^2$ (top row) and $\mU \mapsto f(\mU)= \frac{1}{m}\|\vy - \calA(\mU\mU^\T) \|_1$ (bottom row) for low-rank matrix recovery with different percentages of outliers in the measurement vector $\vy$ \eqref{eq:rms model}. Here, the ground-truth matrix $\mX^\star$ is given by $\mX^\star = \mU^\star \mU^{\star\T}$ with $\mU^\star = [0.5 \quad 0.5]^\T$ and $40$ measurements are taken to form $\vy$. For display purpose, we plot $-\log(\xi(\mU))$ and $-\log(f(\mU))$ instead of $\xi(\mU)$ and $f(\mU)$.}\label{fig:compare l1 and l2 loss}
\end{figure}
It is well known that the $\ell_2$-loss function is sensitive to outliers, thus rendering~\eqref{eq:ms factorization} ineffective for recovering the underlying low-rank matrix.  As illustrated in the top row of \Cref{fig:compare l1 and l2 loss}, the global minima of $\xi$ in \eqref{eq:ms factorization} are perturbed away from the underlying low-rank matrix because of the outliers, and a larger fraction of outliers leads to a larger perturbation.  By contrast, the $\ell_1$-loss function is more robust against outliers and has been widely utilized for outlier detection~\cite{candes2011robust,li2017low,josz2018theory}. This motivates us to adopt the $\ell_1$-loss function together with the factored representation of the matrix variable to tackle the robust low-rank matrix recovery problem:
\e
\minimize_{\mU\in\R^{n\times r}} \left\{ f(\mU):= \frac{1}{m}\|\vy - \calA(\mU\mU^\T) \|_1 \right\}.
\label{eq:rms factorization}
\ee
The robustness of the $\ell_1$-loss function against outliers can be seen from the bottom row of \Cref{fig:compare l1 and l2 loss}, where the global minima of \eqref{eq:rms factorization} correspond precisely to the underlying low-rank matrix $\mX^\star$ even in the presence of outliers. However, compared with~\eqref{eq:ms factorization}, the exact recovery property of~\eqref{eq:rms factorization} (i.e., when the global minima of~\eqref{eq:rms factorization} yield the ground-truth matrix $\mX^\star$) and the convergence behavior of local search algorithms for solving~\eqref{eq:rms factorization} are much less understood. This stems in part from the fact that~\eqref{eq:rms factorization} is a nonsmooth nonconvex optimization problem, but most of the algorithmic and analysis techniques developed in the recent literature on structured nonconvex optimization problems apply only to the smooth setting.


In view of the above discussion, we aim to (i) provide conditions in terms of the number of linear measurements $m$ and the fraction of outliers that can guarantee the exact recovery property of \eqref{eq:rms factorization} and (ii) design a first-order method to solve \eqref{eq:rms factorization} and establish guarantees on its convergence performance. To achieve (i), we utilize the notion of $\ell_1/\ell_2$-restricted isometry property ($\ell_1/\ell_2$-RIP), which has been introduced previously in the context of low-rank matrix recovery~\cite{ZHZ13,YS16} and covariance estimation~\cite{chen2015exact}. We show that if the fraction of outliers is slightly less than $\frac{1}{2}$, then as long as the measurement operator $\calA$ and its restriction $\calA_{\Omega^c}$ onto the complement of the support set $\Omega$ of the outlier vector $\vs^\star$ possess the $\ell_1/\ell_2$-RIP, any global minimum $\mU^\star$ of~\eqref{eq:rms factorization} must satisfy $\mU^\star \mU^{\star \T} = \mX^\star$. To tackle (ii), we propose to use a subgradient method (SubGM) to solve~\eqref{eq:rms factorization}. As a key step in our convergence analysis of the SubGM, we show that under the aforementioned setting for the fraction of outliers and the $\ell_1/\ell_2$-RIP of the operators $\calA$ and $\calA_{\Omega^c}$, the objective function $f$ in \eqref{eq:rms factorization} is \emph{sharp} (see \Cref{def:shaprness}) and \emph{weakly convex} (see \Cref{def:weak convexity}). Consequently, we can apply (a slight variant of) the analysis framework in~\cite{davis2018subgradient} to show that when initialized close to the set of global minima of~\eqref{eq:rms factorization}, the SubGM with geometrically diminishing step sizes will converge $R$-linearly to a global minimum. To the best of our knowledge, this is the first time an exact recovery condition (i.e., the $\ell_1/\ell_2$-RIP of $\calA$ and $\calA_{\Omega^c}$) for the optimization formulation~\eqref{eq:rms factorization} is shown to also imply its regularity (i.e., sharpness and weak convexity). We summarize the above results in the following theorem:

\begin{thm}[informal;  see \Cref{thm:rms} for the formal statement] 
Consider the measurement model~\eqref{eq:rms model}, where the ground-truth matrix $\mX^\star$ is symmetric positive semidefinite with rank $r$. Suppose that the fraction of outliers is less than half and both operators $\calA$ and $\calA_{\Omega^c}$ possess the $\ell_1/\ell_2$-RIP (see~\Cref{subsec:L1_2 RIP} and~\Cref{subsec:sharp}). Then, every global minimum of \eqref{eq:rms factorization} corresponds to the ground-truth matrix $\mX^\star$ and the objective function $f$ is sharp (see \Cref{def:shaprness}) and weakly convex (see \Cref{def:weak convexity}). Consequently, when applied to~\eqref{eq:rms factorization}, the SubGM with an appropriate initialization will converge to the ground-truth matrix $\mX^\star$ at a linear rate.
\label{thm:global optimality informal}
\end{thm}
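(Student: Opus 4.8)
The plan is to establish the three assertions in turn, noting that the $\ell_1/\ell_2$-RIP is used in a different way for each.

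\textbf{Exact recovery.} Since $\mX^\star$ is PSD of rank $r$, write $\mX^\star=\mU_0\mU_0^\T$; then $f(\mU_0)=\tfrac1m\|\vs^\star\|_1$, so $f(\mU^\star)\le\tfrac1m\|\vs^\star\|_1$ for every global minimizer $\mU^\star$. Set $\mD:=\mU^\star\mU^{\star\T}-\mX^\star$, which has rank at most $2r$. Splitting the $\ell_1$-norm over the outlier support $\Omega$ and its complement, using that $\vs^\star$ vanishes off $\Omega$, and applying the reverse triangle inequality,
\begin{align*}
f(\mU^\star)=\tfrac1m\|\calA(\mD)-\vs^\star\|_1 &=\tfrac1m\big(\|\calA_\Omega(\mD)-\vs^\star_\Omega\|_1+\|\calA_{\Omega^c}(\mD)\|_1\big)\\
&\ge\tfrac1m\big(\|\vs^\star\|_1-\|\calA_\Omega(\mD)\|_1+\|\calA_{\Omega^c}(\mD)\|_1\big).
\end{align*}
Comparing with $f(\mU^\star)\le\tfrac1m\|\vs^\star\|_1$ gives $\|\calA_{\Omega^c}(\mD)\|_1\le\|\calA_\Omega(\mD)\|_1$, and adding $\|\calA_{\Omega^c}(\mD)\|_1$ to both sides, $2\|\calA_{\Omega^c}(\mD)\|_1\le\|\calA(\mD)\|_1$. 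I then invoke the $\ell_1/\ell_2$-RIP: its lower bound for $\calA_{\Omega^c}$ (which sees only a $(1-p)$-fraction of the measurements, $p$ being the outlier fraction) gives $\tfrac1m\|\calA_{\Omega^c}(\mD)\|_1\ge l_2\|\mD\|_F$ with $l_2\approx 1-p$, and its upper bound for $\calA$ gives $\tfrac1m\|\calA(\mD)\|_1\le u_1\|\mD\|_F$ with $u_1\approx 1$. Hence $2l_2\|\mD\|_F\le u_1\|\mD\|_F$, which forces $\mD=\vzero$ (i.e.\ $\mU^\star\mU^{\star\T}=\mX^\star$) whenever $p<\tfrac12$ and the RIP constants are sharp enough that $2l_2>u_1$. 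In particular $f^\star=\tfrac1m\|\vs^\star\|_1$ and $\calX^\star=\{\mU_0\mR:\mR\in O(r)\}$.

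\textbf{Sharpness and weak convexity.} Running the same chain for an arbitrary $\mU$ with $\mD=\mU\mU^\T-\mX^\star$ gives $f(\mU)-f^\star\ge\tfrac1m\big(2\|\calA_{\Omega^c}(\mD)\|_1-\|\calA(\mD)\|_1\big)\ge\mu_0\|\mU\mU^\T-\mX^\star\|_F$ with $\mu_0:=2l_2-u_1>0$. It remains to pass from $\|\mU\mU^\T-\mX^\star\|_F$ to $\dist(\mU,\calX^\star):=\min_{\mR\in O(r)}\|\mU-\mU_0\mR\|_F$; this is supplied by a standard low-rank perturbation inequality of the form $\|\mU\mU^\T-\mX^\star\|_F\ge c\,\sigma_r(\mX^\star)^{1/2}\dist(\mU,\calX^\star)$ valid for $\mU$ in a neighborhood of $\calX^\star$, which yields the sharpness of \Cref{def:shaprness}. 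For weak convexity I would use the composite structure $f=h\circ c$ with the smooth map $c(\mU):=\calA(\mU\mU^\T)$ and the convex, Lipschitz outer function $h(\vz):=\tfrac1m\|\vy-\vz\|_1$. Because $c$ is quadratic, its linearization residual is exactly $c(\mV)-c(\mU)-Dc(\mU)[\mV-\mU]=\calA\big((\mV-\mU)(\mV-\mU)^\T\big)$, a rank-$2r$ matrix, so combining the convexity of $h$, the subgradient chain rule $\mG=Dc(\mU)^{*}\eta$ with $\eta\in\partial h(c(\mU))$, and the $\ell_1/\ell_2$-RIP upper bound gives $f(\mV)\ge f(\mU)+\langle\mG,\mV-\mU\rangle-(1+\delta)\|\mV-\mU\|_F^2$ for every $\mG\in\partial f(\mU)$; that is, $f$ is $2(1+\delta)$-weakly convex in the sense of \Cref{def:weak convexity}.

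\textbf{Linear convergence of the SubGM, and the main obstacle.} With $f$ sharp (parameter $\mu$), $\tau$-weakly convex, and $\calX^\star$ compact, I would apply a slight variant of the SubGM analysis of~\cite{davis2018subgradient}: for the geometrically decaying step size $\alpha_k=\alpha_0 q^k$ with $q$ chosen in terms of $\mu$ and $\tau$, and any initialization whose distance to $\calX^\star$ is below an explicit multiple of $\mu/\tau$, a one-step contraction estimate shows the iterates stay in the sharpness tube and $\dist(\mU,\calX^\star)$ --- equivalently $\|\mU\mU^\T-\mX^\star\|_F$ --- decays $R$-linearly. The delicate steps of the whole argument are: (i) keeping the $\ell_1/\ell_2$-RIP constants sharp enough that the recovery/sharpness threshold is genuinely ``outlier fraction below $\tfrac12$'' --- which is exactly why the bookkeeping above is routed through $\calA$ and $\calA_{\Omega^c}$ and never through $\calA_\Omega$, on which no RIP is assumed; and (ii) the low-rank perturbation inequality relating $\|\mU\mU^\T-\mX^\star\|_F$ to $\dist(\mU,\calX^\star)$, which holds only near $\calX^\star$ and whose neighborhood must be reconciled with the SubGM initialization radius (and with the manifold, rather than single-point, structure of $\calX^\star$) so that the iterates never leave the region where sharpness is available.
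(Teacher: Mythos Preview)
Your approach is essentially the same as the paper's: the sharpness/exact-recovery argument via the $\Omega/\Omega^c$ split and the two RIPs is exactly \Cref{lem:sharpness rms}; the weak-convexity argument via the convex-composite structure and the RIP upper bound on the quadratic residual $\calA\big((\mV-\mU)(\mV-\mU)^\T\big)$ is exactly \Cref{prop:no outlier weak convex}; and the SubGM convergence is handled by the Davis--Drusvyatskiy-type result, here \Cref{thm:linear convergence of SubGM}.

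One correction that simplifies your outline considerably: the Procrustes-type inequality you invoke, $\|\mU\mU^\T-\mX^\star\|_F \ge c\,\sigma_r(\mX^\star)^{1/2}\dist(\mU,\calX^\star)$, is \emph{not} merely local. The paper quotes it as \Cref{lem:Procrustes problem} (Tu et al., Lemma~5.4), and it holds for \emph{every} $\mU\in\R^{n\times r}$. Consequently the sharpness bound $f(\mU)-f^\star\ge\alpha\,\dist(\mU,\calU)$ is global, and your ``delicate step~(ii)'' --- reconciling a sharpness neighborhood with the SubGM initialization radius --- disappears entirely. The only reason the SubGM guarantee is local is weak convexity (the $\tfrac{2\alpha}{\tau}$ tube in \Cref{thm:linear convergence of SubGM}), not sharpness. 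A minor side remark: $(\mV-\mU)(\mV-\mU)^\T$ has rank at most $r$, not $2r$, though this only affects which RIP order you need.
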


Before we proceed, several remarks are in order. First, for various random measurement operators $\calA$, such as sub-Gaussian measurement operators and the quadratic measurement operators in~\cite{chen2015exact}, as long as the number of measurements is sufficiently large, the operators $\calA$ and $\calA_{\Omega^c}$ will possess the $\ell_1/\ell_2$-RIP with high probability. This is the case, for instance, when $\calA$ is a  Gaussian measurement operator with $m \gtrsim nr$ measurements.\footnote{See \Cref{subsec:not} for the meaning of the notation $\gtrsim$.} In particular, when combined with~\Cref{thm:global optimality informal}, we see that the low-rank matrix $\mX^\star$ in~\eqref{eq:rms model} can be recovered using an information-theoretically optimal number of measurements. Second, although at first glance \eqref{eq:rms factorization} seems to be more difficult to solve than \eqref{eq:ms factorization} because of nonsmoothness, \Cref{thm:global optimality informal} implies that~\eqref{eq:rms factorization} can be solved \emph{as efficiently as} its smooth counterpart \eqref{eq:ms factorization}, in the sense that both can be solved by first-order methods that have a linear convergence guarantee.

Although \Cref{thm:global optimality informal} is concerned with the setting where $\mX^\star$ is symmetric positive semidefinite, it can be extended to the general setting where $\mX^\star$ is a rank-$r$ $n_1\times n_2$ matrix. Specifically, by using the factorization $\mX = \mU\mV^\T$ with $\mU\in \R^{n_1\times r}$, $\mV \in \R^{n_2\times r}$ and utilizing the nonsmooth regularizer $\|\mU^\T\mU - \mV^\T\mV\|_F$ (or $\|\mU^\T\mU - \mV^\T\mV\|_1$) to account for the ambiguities in the factorization caused by invertible transformations, we formulate the general robust low-rank matrix recovery problem as follows:
\e
\minimize_{\mU\in\R^{n_1 \times r},\mV\in\R^{n_2 \times r}} \left\{ g(\mU,\mV) := \frac{1}{m}\|\vy - \calA(\mU\mV^\T) \|_1 + \lambda \|\mU^\T\mU - \mV^\T\mV\|_F \right\}.
\label{eq:nonsq-rms}
\ee
Here, $\lambda>0$ is a regularization parameter. We remark that the regularizer used in the above formulation is motivated by but different from that used in~\cite{tu2015low,park2016non,zhu2018global}. The latter, which is given by $\|\mU^\T\mU - \mV^\T\mV\|_F^2$, is smooth but is not as well suited for robustifying the solution against outliers. In~\Cref{sec:nonsymmetric} we show that all the results established for~\eqref{eq:rms factorization} in \Cref{thm:global optimality informal} carry over to~\eqref{eq:nonsq-rms} for any $\lambda>0$ (but the choice of $\lambda$ affects  the sharpness and weak convexity parameters; see the discussion after \Cref{lem:weak convex nonsymmetric}).

\subsection{Related Work}
By analyzing the optimization geometry, recent works~\cite{tu2015low,bhojanapalli2016lowrankrecoveryl,ge2016matrix,park2016non,li2016symmetry} have shown that many local search algorithms with either an appropriate initialization or a random initialization can provably solve the low-rank matrix recovery problem~\eqref{eq:ms factorization} when the measurement operator $\calA$ satisfies the RIP. In particular, gradient descent with an appropriate initialization is shown to converge to a global optimum at a linear rate \cite{tu2015low,zhu2017global}, while quadratic convergence is established for the cubic regularization method~\cite{yue2018quadratic}. Key to these results is certain error bound conditions, which elucidate the regularity properties of the underlying optimization problem. Recently, the above results have been extended to cover general smooth low-rank matrix optimization problems whose objective functions satisfy the restricted strong convexity and smoothness properties~\cite{zhu2018global,li2016,zhu2017global}.

For the robust low-rank matrix recovery problem, existing solution methods can be classified into two categories. The first is based on the convex approach~\cite{ke2005robust,candes2011robust,li2017low}. Although such approach enjoys strong statistical guarantees, it is computational expensive and thus not scalable to practical problems. The second category is based on the nonconvex approach. This includes the alternating minimization methods\cite{netrapalli2014non,yi2016fast,gu2016low,zhang2017unified}, which typically use projected gradient descent for low-rank matrix recovery and thresholding-based truncation for identification of outliers. However, these methods typically require performing an SVD in each iteration for projection onto the set of low-rank matrices. Recently, a median-truncated gradient descent method has been proposed in~\cite{li2017nonconvex} to tackle~\eqref{eq:ms factorization}, where the gradient is modified to alleviate the effect of outliers.  The median-truncated gradient descent is shown to have a local linear convergence rate~\cite{li2017nonconvex}, but such guarantee requires $m \gtrsim nr\log n$ measurements. Moreover, the maximum number of outliers that can be tolerated is not explicitly given. By contrast, our result only requires $m \gtrsim nr$ measurements (which matches the optimal information-theoretic bound) and explicitly bounds the fraction of outliers that can be present. We also note that a SubGM has been proposed in~\cite{li2017low} for solving~\eqref{eq:rms factorization} in the setting where $\mathcal{A}$ is a certain quadratic measurement operator. As reported in~\cite{li2017low}, the SubGM exhibits excellent empirical performance in terms of both computational efficiency and accuracy. In this paper, we provide a rigorous justification for the empirical success of the SubGM, thus answering a question that is left open in~\cite{li2017low}.

Finally, we remark that our work is closely related to the recent works~\cite{davis2017nonsmooth,davis2018subgradient,Zhu18DPCP,bai2018DL} on subgradient methods for nonsmooth nonconvex optimization. A projected subgradient method is proven to converge linearly for the robust subspace recovery problem~\cite{Zhu18DPCP} and sublinearly for orthonormal dictionary learning~\cite{bai2018DL}. It is shown in~\cite{davis2017nonsmooth,davis2018subgradient} that if the optimization problem at hand is sharp (see \Cref{def:shaprness}) and weakly convex (see \Cref{def:weak convexity}), various subgradient methods for solving it will converge at a linear rate. Currently, only a few applications are known to give rise to sharp and weakly convex optimization problems, such as robust phase retrieval~\cite{davis2017nonsmooth,duchi2017solving} and robust covariance estimation with quadratic sampling~\cite{davis2018subgradient}. Thus, our result expands the repertoire of optimization problems that are sharp and weakly convex and contributes to the growing literature on the geometry of structured nonsmooth nonconvex optimization problems.


\subsection{Notation} \label{subsec:not}
Let us introduce the notations used in this paper. Finite-dimensional vectors and matrices are indicated by bold characters. The symbols $\mId$ and $\mzero$ represent the identity matrix and zero matrix/vector, respectively. The set of $r\times r$ orthogonal matrices is denoted by $\calO_r:=\{\mR\in\R^{r\times r}:\mR^\T\mR = \mId\}$. The subdifferential of the absolute value function $|\cdot|$ is denoted by $\Sign$; i.e.,
\[
\Sign(a):=\left\{\begin{matrix}a/|a|, & a\neq 0,\\ [-1,1], & a= 0.\end{matrix}\right.
\]
We use $\Sign(\mA)$ to denote the matrix obtained by applying the Sign function to each element of the matrix $\mA$. Furthermore, we use $\|\mA\|_F$ to denote the Frobenius norm of the matrix $\mA$ and $\|\va\|$ to denote the $\ell_2$-norm of the vector $\va$. Finally, we use $x\lesssim y$ (resp.~$x \gtrsim y$) to indicate that $x \leq cy$ (resp.~$x \ge cy$) for some universal constant $c>0$.

\section{Problem Setup and Preliminaries} \label{sec:preliminaries}

Consider the general optimization problem
\e
\inf_{\vx\in \R^n} h(\vx),
\label{eq:min h}
\ee
where $h:\R^n\rightarrow \R$ is a lower semi-continuous, possibly nonsmooth and nonconvex, function.  Let $h^\star$ denote the optimal value of~\eqref{eq:min h} and
\[ \calX: = \{\vz\in \R^n: h(\vz)\leq h(\vx), \ \forall \vx \in \R^n\} \]
denote the set of global minima of $h$.  We assume that $\calX \not= \emptyset$. Given any $\vx\in\R^n$, the distance between $\vx$ and $\calX$ is defined as
\[ \dist(\vx,\calX) := \inf_{\vz\in\calX}\|\vx - \vz\|. \]
Since $h$ can be nonsmooth, we utilize tools from generalized differentiation to formulate the optimality condition of~\eqref{eq:min h}. The (Fr\'{e}chet) subdifferential of $h$ at $\vx$ is defined as
\e
\partial h(\vx) : = \left\{\vd\in\R^n: \liminf_{\vy\rightarrow \vx}\frac{h(\vy) - h(\vx) - \langle \vd, \vy - \vx \rangle}{\|\vy - \vx\|}\geq 0   \right\},
\label{eq:subdifferential}\ee
where each $\vd\in\partial h(\vx)$ is called a subgradient of $h$ at $\vx$. We say that $\vx$ is a critical point of $h$ if $\vzero \in \partial h(\vx)$.

\subsection{Sharpness and Weak Convexity}
Since our goal is to consider a set of problems that can be solved by the SubGM with a linear rate of convergence, let us introduce two regularity notions for $h$ that are central to our study.
\begin{defi}[sharpness; cf.~\cite{burke1993weak}]
	We say that $h:\R^n \rightarrow \R$ is sharp with parameter $\alpha>0$ if
	\e
	h(\vx) - h^\star \geq \alpha \dist(\vx,\calX)
	\label{eq:sharp} 
	\ee
	for all $\vx \in \R^n$.
	\label{def:shaprness}
\end{defi}
\begin{defi}[weak convexity; see, e.g.,~\cite{V83}]
	We say that $h:\R^n \rightarrow \R$ is weakly convex with parameter $\tau\ge0$ if $\vx \mapsto h(\vx) + \tfrac{\tau}{2} \| \vx \|^2$ is convex.
	\label{def:weak convexity}
\end{defi}
It is worth noting that the function $h$ is weakly convex   with parameter $\tau \geq 0$ if and only if 
\e
h(\vw) - h(\vx) \geq   \langle \vd, \vw-\vx\rangle -    \frac{\tau}{2} \|\vw - \vx\|^2, \  \forall \ \vd \in \partial h(\vx).
\label{eq:weak convexity}
\ee
for any $\vw,\vx\in\R^n$, see, e.g., \cite[Lemma 2.1]{davis2019stochastic}. Indeed, this can be shown quickly by applying the convex subgradient inequality to $h(\vx) + \tfrac{\tau}{2} \| \vx \|^2$. 

Suppose that $h$ is sharp and weakly convex with parameters $\alpha > 0$ and $\tau \ge 0$, respectively. It is known that for any $\vx\notin \calX$ with $\dist(\vx,\calX) < \frac{2\alpha}{\tau}$, we have $\vzero\notin \partial h(\vx)$; i.e., $\vx$ is not a critical point of $h$~\cite[Lemma 3.1]{davis2018subgradient}. This suggests the possibility of finding a global minimum of $h$ by initializing local search algorithms with a point that is close to $\calX$. To explore such possibility, let us consider using the SubGM in \Cref{alg:sgd for general problem} to solve the nonsmooth nonconvex optimization problem \eqref{eq:min h}.

\begin{algorithm}[htb!]
	\caption{Subgradient Method (SubGM) for Solving~\eqref{eq:min h}}
	{\bf Initialization:} set $\vx_0$ and $\mu_0$;
	
	\begin{algorithmic}[1]
		\FOR{$k=0,1,\ldots$}
		\STATE compute a subgradient
		$\vd_k \in \partial h(\vx_k)$;
		\STATE update the step size $\mu_{k}$ according to a certain rule;
		\STATE update $\vx_{k+1} = \vx_{k} - \mu_{k}\vd_k$;
		\ENDFOR
	\end{algorithmic}
	\label{alg:sgd for general problem}
\end{algorithm}

\subsection{Convergence of SubGM for Sharp Weakly Convex Functions}

Unlike gradient descent, the SubGM with a constant step size may not converge to a critical point of a nonsmooth function in general, even when the function is convex~\cite{Shor85}. As a simple example, consider $h(x) = |x|$ and suppose that we take $x_0=0.01$ and $\mu_k = 0.02$ for all $k\ge0$ in~\Cref{alg:sgd for general problem}. Then, the iterates $\{x_k\}_{k\ge0}$ will oscillate between the two points $x_+=0.01$ and $x_-=-0.01$ and never converge to the global minimum $x^\star=0$.
 At best, one can only show that the SubGM with a constant step size will converge to a \emph{neighborhood} of the set of global optima of $h$ (with rate guarantees if $h$ satisfies additional regularity conditions); see, e.g.,~\cite{Shor85,NB01,B12,davis2018subgradient}. To ensure the convergence of the SubGM, a set of diminishing step sizes is generally needed~{\cite{Shor85,goffin1977convergence}}. As it turns out, for a sharp weakly convex function $h$, the SubGM with step sizes that are diminishing at a geometric rate can still be shown to converge linearly to a global minimum when initialized close to $\calX$. Specifically, let
\e
\kappa :=\sup\left\{\|\vd\| : \vd  \in \partial h(\vx),  \dist(\vx,\calX) < \frac{2\alpha}{\tau}\right\},
\label{eq:kappa}
\ee
which can be shown to satisfy $\kappa \ge \alpha$; cf.~\cite[Lemma 3.2]{davis2018subgradient}. Then, we have the following result:

\begin{thm}[local linear convergence of SubGM] 
Suppose that the function $h:\R^n \rightarrow \R$ is sharp and weakly convex with parameters $\alpha>0$ and $\tau\ge0$, respectively. Suppose further that the SubGM in~\Cref{alg:sgd for general problem} is initialized with a point $\vx_0$ satisfying $\dist(\vx_0,\calX)< \frac{2\alpha}{\tau}$ and uses the geometrically diminishing step sizes \label{thm:linear convergence of SubGM}
	\e
	\mu_k = \rho^k \mu_0,
	\label{eq:linear step size}
	\ee
	where the initial step size $\mu_0$ satisfies
	\e
	\mu_0 \leq \frac{\alpha^2}{2\tau \kappa^2}\left(1 - \left(\max\left\{\frac{\tau}{\alpha}\dist(\vx_0,\calX)-1,0\right\}\right)^2   \right)
	\label{eq:requirement on mu0}\ee
	and the decay rate $\rho$ satisfies
	\e
	1>	\rho \geq \underline\rho:= \sqrt{1-   \left( \frac{2\alpha}{\overline\dist_0} - \tau \right)\mu_0 + \frac{\kappa^2}{\overline\dist_0^2} \mu_0^2   }
	\label{eq:requirement on rho}	\ee
	with
	\e
	\overline\dist_0  = \max\left\{\dist(\vx_0,\calX), \mu_0\frac{\max\{\kappa^2,2\alpha^2\}}{\alpha} \right\}.
	\label{eq:bar dist0}\ee
	Then, the iterates $\{\vx_k\}_{k\ge0}$ generated by the SubGM will converge linearly to a point in $\calX$:
	\e
	\dist(\vx_k,\calX) \leq \rho^k \overline\dist_0, \ \forall k\geq 0.
	\label{eq:R linear decay}
	\ee
\end{thm}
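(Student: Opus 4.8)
The plan is to prove the distance estimate \eqref{eq:R linear decay} by induction on $k$, at each step reducing to a scalar recursion for $d_k:=\dist(\vx_k,\calX)$ obtained from sharpness and weak convexity, and then checking that the prescribed step size \eqref{eq:requirement on mu0} and decay rate \eqref{eq:requirement on rho} make this recursion contract at rate $\rho$. For the base case, $d_0\le\overline\dist_0$ holds by the definition \eqref{eq:bar dist0} of $\overline\dist_0$; the role of \eqref{eq:requirement on mu0} is to additionally guarantee $\overline\dist_0<\tfrac{2\alpha}{\tau}$ (treating separately $d_0\le\tfrac{\alpha}{\tau}$ and $\tfrac{\alpha}{\tau}<d_0<\tfrac{2\alpha}{\tau}$, and using $\kappa\ge\alpha$), which is the regime in which the subgradient-norm bound $\kappa$ of \eqref{eq:kappa} applies, and also to ensure $\underline\rho<1$ so that a feasible $\rho$ exists.

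For the inductive step, let $\bar\vx_k$ be a nearest point to $\vx_k$ in $\calX$. Since $d_{k+1}\le\|\vx_{k+1}-\bar\vx_k\|$ and $\vx_{k+1}=\vx_k-\mu_k\vd_k$, expanding the square gives
\[
d_{k+1}^2\ \le\ d_k^2-2\mu_k\langle\vd_k,\vx_k-\bar\vx_k\rangle+\mu_k^2\|\vd_k\|^2 .
\]
I would lower-bound the cross term by the weak-convexity inequality \eqref{eq:weak convexity} with $\vw=\bar\vx_k$, $\vx=\vx_k$, $\vd=\vd_k$, which (using $h(\bar\vx_k)=h^\star$) gives $\langle\vd_k,\vx_k-\bar\vx_k\rangle\ge h(\vx_k)-h^\star-\tfrac{\tau}{2}d_k^2$, and then apply sharpness \eqref{eq:sharp} to get $h(\vx_k)-h^\star\ge\alpha d_k$; combined with $\|\vd_k\|\le\kappa$ (valid because the inductive hypothesis gives $d_k<\tfrac{2\alpha}{\tau}$), this yields
\[
d_{k+1}^2\ \le\ (1+\mu_k\tau)d_k^2-2\mu_k\alpha\,d_k+\mu_k^2\kappa^2\ =:\ \phi_k(d_k).
\]

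It then remains to show $\phi_k(d_k)\le(\rho^{k+1}\overline\dist_0)^2$ whenever $d_k\le D_k:=\rho^k\overline\dist_0$. Since $\phi_k$ is a convex quadratic, its maximum over $[0,D_k]$ is attained at an endpoint, so it suffices to bound $\phi_k(0)$ and $\phi_k(D_k)$. The bound $\phi_k(0)=\rho^{2k}\mu_0^2\kappa^2\le\rho^{2k+2}\overline\dist_0^2$ reduces to $\mu_0^2\kappa^2\le\rho^2\overline\dist_0^2$, which follows from $\rho\ge\underline\rho$ together with the elementary identity $\underline\rho^2-\tfrac{\kappa^2\mu_0^2}{\overline\dist_0^2}=1-\tfrac{2\alpha\mu_0}{\overline\dist_0}+\tau\mu_0$, nonnegative by the clause $\overline\dist_0\ge2\alpha\mu_0$ inside \eqref{eq:bar dist0}. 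For $\phi_k(D_k)$, substituting $D_k=\rho^k\overline\dist_0$, $\mu_k=\rho^k\mu_0$ and using $\rho^{3k}\le\rho^{2k}$ to discard the one over-order term gives $\phi_k(D_k)\le\rho^{2k}\overline\dist_0^2\bigl(1+\mu_0\tau-\tfrac{2\alpha\mu_0}{\overline\dist_0}+\tfrac{\kappa^2\mu_0^2}{\overline\dist_0^2}\bigr)=\rho^{2k}\overline\dist_0^2\,\underline\rho^2\le\rho^{2k+2}\overline\dist_0^2$, the last step being exactly $\rho\ge\underline\rho$ from \eqref{eq:requirement on rho}. Hence $d_{k+1}\le\rho^{k+1}\overline\dist_0\le\overline\dist_0<\tfrac{2\alpha}{\tau}$, closing the induction and proving \eqref{eq:R linear decay}. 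To upgrade this to convergence to a single point of $\calX$, observe $\|\vx_{k+1}-\vx_k\|=\mu_k\|\vd_k\|\le\rho^k\mu_0\kappa$ is summable, so $\{\vx_k\}$ is Cauchy; its limit $\vx_\infty$ satisfies $\dist(\vx_\infty,\calX)=\lim_k d_k=0$, and since $\calX$ is closed (a sublevel set of the lower semicontinuous $h$) we get $\vx_\infty\in\calX$.

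I expect the main obstacle to be not conceptual but the bookkeeping that ties the thresholds together: one must check that \eqref{eq:requirement on mu0} genuinely forces both $\overline\dist_0<\tfrac{2\alpha}{\tau}$ and $\underline\rho<1$, and that the two endpoint values of $\phi_k$ on $[0,D_k]$ are simultaneously controlled. It is precisely the interplay of these two endpoint bounds with the two clauses in the definition \eqref{eq:bar dist0} of $\overline\dist_0$ that pins down the exact algebraic form of $\underline\rho$ in \eqref{eq:requirement on rho} and of the constraint on $\mu_0$.
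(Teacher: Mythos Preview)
Your plan is correct and matches the paper's proof almost exactly: the paper also proceeds by induction, derives the same one-step recursion $d_{k+1}^2\le(1+\tau\mu_k)d_k^2-2\mu_k\alpha d_k+\mu_k^2\kappa^2$ from weak convexity, sharpness, and the $\kappa$-bound, and then verifies $\underline\rho\in(0,1)$ via the same two-case analysis of \eqref{eq:requirement on mu0}. The only cosmetic difference is that the paper, instead of bounding both endpoints of the quadratic, uses the clause $\overline\dist_0\ge 2\alpha\mu_0$ in \eqref{eq:bar dist0} to argue directly that the maximum on $[0,D_k]$ is attained at $D_k$ (your endpoint check at $0$ is equivalent to this); your additional Cauchy-sequence step to pin down a single limit in $\calX$ is a nice touch the paper's proof omits.
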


 Before proceeding to the proof, we note that a similar result has been established in~\cite[Corollary 6.1]{davis2018subgradient}. Nevertheless, compared with \cite[Corollary 6.1]{davis2018subgradient}, which requires $\tfrac{\alpha}{\kappa} \le \sqrt{\tfrac{1}{2-\gamma}}$ and $\dist(\vx_0,\calX) \le \tfrac{\gamma\alpha}{\tau}$ for some $\gamma \in (0,1)$,~\Cref{thm:linear convergence of SubGM} is less restrictive and allows the larger initialization region $\dist(\vx_0,\calX)< \frac{2\alpha}{\tau}$. In particular, as $\tfrac{\alpha}{\kappa}$ tends to $1$, so does $\gamma$, and the decay rate $\rho$ in~\cite[Corollary 6.1]{davis2018subgradient} approaches $1$. Thus, one can no longer use~\cite[Corollary 6.1]{davis2018subgradient} to conclude that the SubGM converges linearly when $\tfrac{\alpha}{\kappa}=1$. By contrast, the linear convergence result in~\Cref{thm:linear convergence of SubGM} is still valid in this case.~\Cref{thm:linear convergence of SubGM} can be proven by refining the arguments in the proof of~\cite[Theorem 6.1]{davis2018subgradient}. 

\begin{proof}[Proof of \Cref{thm:linear convergence of SubGM}]
	We first show that $\underline \rho$ in \eqref{eq:requirement on rho} is well defined and satisfies $0< \underline \rho <1$. On one hand, we have
	\[ {\underline\rho}^2 > 1- \frac{2\alpha}{\overline\dist_0}\mu_0 \geq 1- \frac{2\alpha^2}{\max\{\kappa^2,2\alpha^2\}}\ge 0 \]
	by~\eqref{eq:bar dist0}. On the other hand, let $\theta(t) = \frac{\kappa^2}{t^2} \mu_0^2 - \left( \frac{2\alpha}{t} - \tau \right) \mu_0$ and note that $\underline\rho < 1$ if and only if $\theta(\overline\dist_0)<0$. Noting that $\alpha^2-\tau\mu_0\kappa^2>0$, the latter is equivalent to
	\e
	\frac{\alpha - \sqrt{\alpha^2 - \tau\mu_0\kappa^2}}{\tau} < \overline\dist_0 < \frac{\alpha+\sqrt{\alpha^2 - \tau\mu_0\kappa^2}}{\tau}.
	\label{eq:region for bar dist0}
	\ee
	To prove~\eqref{eq:region for bar dist0}, we first observe that
	\[ \frac{\alpha - \sqrt{\alpha^2 - \tau\mu_0\kappa^2}}{\tau} = \frac{\mu_0\kappa^2}{\alpha + \sqrt{\alpha^2 - \tau\mu_0\kappa^2}} < \frac{\mu_0\kappa^2}{\alpha} \le \mu_0\frac{\max\{\kappa^2,2\alpha^2\}}{\alpha} \le \overline\dist_0 \]
	by~\eqref{eq:bar dist0}. Now, due to \eqref{eq:requirement on mu0} and the fact that $\kappa\ge\alpha$, we have $\mu_0\frac{\max\{\kappa^2,2\alpha^2\}}{\alpha}\leq \frac{\alpha}{\tau}$. If $\dist(\vx_0,\calX)\leq \frac{\alpha}{\tau}$, then $\overline\dist_0 \le \frac{\alpha}{\tau} < \frac{\alpha + \sqrt{\alpha^2 - \tau\mu_0\kappa^2}}{\tau}$.  If $\dist(\vx_0,\calX)> \frac{\alpha}{\tau}$, then $\overline\dist_0 = \dist(\vx_0,\calX)$ and \eqref{eq:requirement on mu0} becomes
	\[ \mu_0 \leq \frac{-\tau\dist^2(\vx_0,\calX) + 2\alpha \dist(\vx_0,\calX)}{2\kappa^2} = \frac{-\tau{\overline\dist_0}^2 + 2\alpha {\overline\dist_0}}{2\kappa^2}. \]
	Noting that $\alpha^2-2\tau\mu_0\kappa^2\ge0$, we can solve the above quadratic inequality to get
	\[ \overline\dist_0 \le \frac{\alpha+\sqrt{\alpha^2 - 2\tau\mu_0\kappa^2}}{\tau} < \frac{\alpha+\sqrt{\alpha^2 - \tau\mu_0\kappa^2}}{\tau}. \]
	
	Now, we prove \eqref{eq:R linear decay} by induction. Since $\overline\dist_0\geq \dist(\vx_0,\calX)$ by~\eqref{eq:bar dist0}, it is clear that \eqref{eq:R linear decay} holds when $k=0$. Suppose then that \eqref{eq:R linear decay} holds at the $k$-th step. We compute
	\e\begin{split}
		\dist^2(\vx_{k+1},\calX) & \leq \| \vx_{k+1} - \calP_{\calX}(\vx_k) \|^2 =	\| \vx_{k} - \mu_k\vd_k -\calP_{\calX}(\vx_k) \|^2\\
		& = \dist^2(\vx_{k},\calX)  - 2 \mu_k\left\langle  \vx_{k}  -\calP_{\calX}(\vx_k),   \vd_k   \right\rangle + \mu_k^2 \| \vd_k \|^2 \\
		&\leq (1 + \tau\mu_k)\dist^2(\vx_{k},\calX) - 2 \mu_k (f(\vx_k) - f(\calP_{\calX}(\vx_k)))  + \mu_k^2  \| \vd_k \|^2\\
		& \leq (1 + \tau\rho^k\mu_0)\dist^2(\vx_{k},\calX) - 2 \rho^k\mu_0 \alpha \dist(\vx_k,\calX)  + \rho^{2k}\mu_0^2 \kappa^2,
	\end{split}
	\label{eq:dist k+1 to k}\ee
	where the second inequality utilizes \eqref{eq:weak convexity} and the last inequality is from \eqref{eq:sharp}, \eqref{eq:kappa}, and~\eqref{eq:linear step size}. Using~\eqref{eq:bar dist0} {and the fact that $\rho^k\in (0,1)$}, we have $\overline\dist_0 \geq \mu_0\frac{\max\{\kappa^2,2\alpha^2\}}{\alpha} \geq 2\mu_0\alpha \geq \frac{2\mu_0\alpha}{1+\tau \mu_0}$, which guarantees that the RHS in \eqref{eq:dist k+1 to k} attains its maximum at $\dist(\vx_{k},\calX) = \rho^k \overline\dist_0$ when $\dist(\vx_k,\calX) \leq \rho^k \overline\dist_0$. Thus, we have
	\begin{align*}
		\dist^2(\vx_{k+1},\calX)&\leq  \rho^{2k} \overline\dist_0^2 + \tau\rho^{2k}\mu_0 \overline\dist_0^2 -2\rho^{2k}\mu_0 \alpha  \overline\dist_0 + \rho^{2k}\mu_0^2 \kappa^2\\
		& =\rho^{2k}\overline\dist_0^2 \left(1 - \left(  \frac{2\alpha}{\overline\dist_0} -\tau \right)\mu_0 + \frac{\kappa^2}{\overline\dist_0^2} \mu_0^2  \right)\\
		& \leq  \rho^{2k}\overline\dist_0^2 \rho^2 =  \rho^{2(k+1)}\overline\dist_0^2,
	\end{align*}
	where the last inequality follows since $\rho \geq \underline\rho$. The proof is completed by induction.
\end{proof}

There are two factors in \eqref{eq:R linear decay}, namely, $\rho$ and $\overline \dist_0$, that determine the rate at which $\{\dist(\vx_k,\calX)\}_{k\ge0}$ tends to zero. Both factors depend crucially on the initial step size $\mu_0$. Indeed, when $\mu_0$ is large relative to $\dist(\vx_0,\calX)$, we have $\overline\dist_0  = \mu_0\frac{\max\{\kappa^2,2\alpha^2\}}{\alpha}$, which could be much larger than $\dist(\vx_0,\calX)$. Intuitively, although $\vx_0$ is close to $\calX$, $\vx_1$ could become far away from $\calX$ when $\mu_0$ is large. Thus, a larger $\overline\dist_0$ is needed in order for \eqref{eq:R linear decay} to hold. On the other hand, when $\mu_0$ is relatively small, we have $\overline\dist_0=\dist(\vx_0,\calX)$. To understand how $\mu_0$ affects $\underline\rho$, the best decay rate one can choose for $\rho$, let us consider the following cases:

\paragraph{Case I: $\dist(\vx_0,\calX)\leq \frac{\alpha}{\tau}$}
In this case, the initial step size $\mu_0$ satisfies $\mu_0 \leq \frac{\alpha^2}{2\tau \kappa^2}$; see~\eqref{eq:requirement on mu0}. If in addition we have $\mu_0 \leq \frac{\alpha\dist(\vx_0,\calX)}{\max\{\kappa^2,2\alpha^2\}}$, which implies that $\overline\dist_0=\dist(\vx_0,\calX)$, then $\underline\rho$ in \eqref{eq:requirement on rho} becomes
\e
\underline\rho =  \sqrt{1-   \left( \frac{2\alpha}{\dist(\vx_0,\calX)} - \tau \right)\mu_0 + \frac{\kappa^2}{\dist^2(\vx_0,\calX)} \mu_0^2}.
\label{eq:requirement on rho--dist0 large}\ee
{To see how $\underline \rho$ changes versus $\mu_0$, we consider the following two scenarios: (i) when $0\leq \dist(\vx_0,\calX) \leq \left( 1 - \frac{1}{\max\{1,2\alpha^2/\kappa^2\} }\right)\frac{2\alpha}{\tau}$, $\underline\rho$ increases as $\mu_0$ decreases and $\underline\rho \rightarrow 1$ as $\mu_0 \rightarrow 0$; (ii) when $\left( 1 - \frac{1}{\max\{1,2\alpha^2/\kappa^2\} }\right)\frac{2\alpha}{\tau} \leq \dist(\vx_0,\calX) \leq \frac{\alpha}{\tau}$, as $\mu_0$ decreases from $\frac{\alpha\dist(\vx_0,\calX)}{\max\{\kappa^2,2\alpha^2\}}$ to 0, $\underline\rho$ decreases and attains its minimum $\sqrt{1- \frac{(2\alpha - \tau\dist(\vx_0,\calX) )^2}{4\kappa^2}}$ at $\mu_0 = \frac{2\alpha \dist(\vx_0,\calX) - \tau \dist^2(\vx_0,\calX)}{2\kappa^2}$, after which $\underline\rho$ increases to 1 until $\mu_0$ reaches 0.}

On the other hand, if we have $\mu_0 \geq \frac{\alpha\dist(\vx_0,\calX)}{\max\{\kappa^2,2\alpha^2\}}$ and hence $\overline\dist_0  = \mu_0\frac{\max\{\kappa^2,2\alpha^2\}}{\alpha}$, then
\[
\underline\rho = \sqrt{1 -  \frac{2\alpha^2}{\max\{\kappa^2,2\alpha^2\}} + \frac{\kappa^2\alpha^2}{(\max\{\kappa^2,2\alpha^2\})^2} + \mu_0\tau },
\]
which increases as $\mu_0$ increases. We plot both $\overline\dist_0$ (red line) and $\underline \rho$ (blue line) as functions of $\mu_0$ in \Cref{fig:illustration_mu_rho_dist0}. Furthermore, the closer $\alpha$ and $\kappa$, the better the decay rate $\underline\rho$. In particular, when $\alpha=\kappa$ and $\mu_0 = \frac{\alpha\dist(\vx_0,\calX)}{\max\{\kappa^2,2\alpha^2\}}$, we get $\underline \rho = \sqrt{\frac{1}{4} + \frac{\tau\dist(\vx_0,\calX)}{2\alpha}}$, which approaches $\frac{1}{2}$ as $\dist(\vx_0,\calX)$ approaches zero.

\begin{figure}[!htp]
	\centering
	\includegraphics[width=5in]{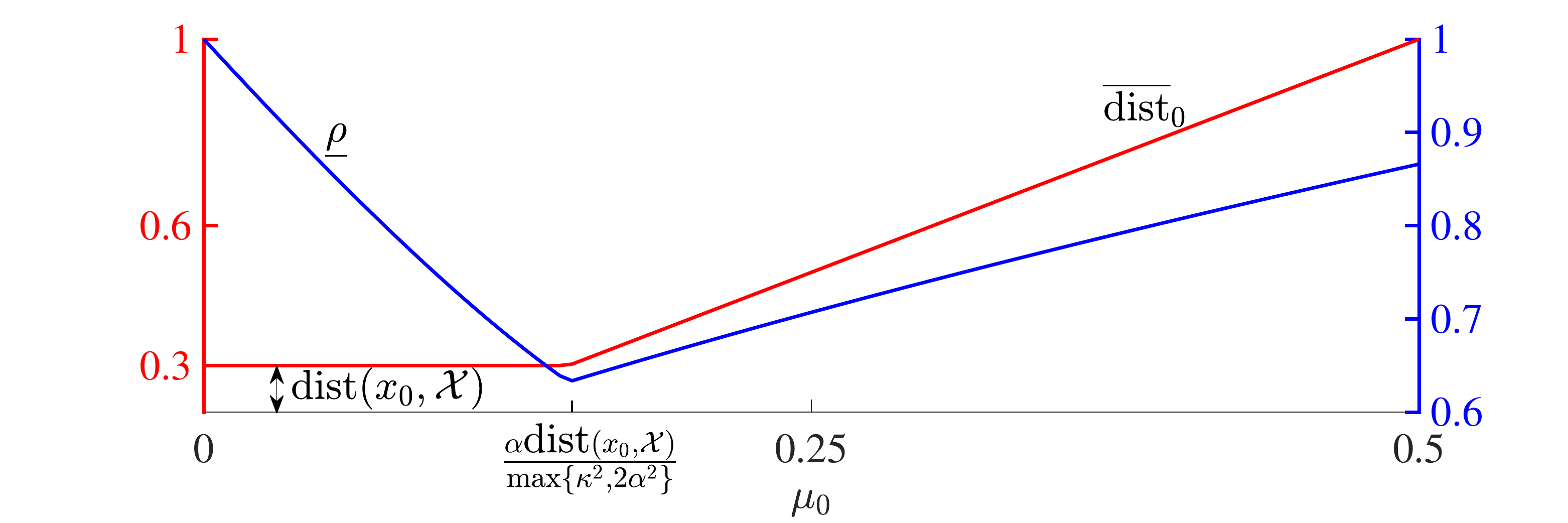}
	\caption{Illustration of $\underline \rho$ in \eqref{eq:requirement on rho} (blue line) and $\overline\dist_0$ in \eqref{eq:bar dist0} (red line) as a function of $\mu_0$ when $\dist(\vx_0,\calX)\leq \frac{\alpha}{\tau}$. For the purpose of illustration, we set $\alpha = \tau = \kappa =1$ and $\dist(\vx_0,\calX) = 0.3$.
	}\label{fig:illustration_mu_rho_dist0}
\end{figure}

In summary, the above discussion suggests that if $\dist(\vx_0,\calX)$ is known, then one can choose $\mu_0 = \frac{\alpha\dist(\vx_0,\calX)}{\max\{\kappa^2,2\alpha^2\}}$ so that both $\overline\dist_0$ and $\rho$ are made small; see \Cref{fig:illustration_mu_rho_dist0}. On the other hand, if $\dist(\vx_0,\calX)$ is not known \emph{a priori}, then one can choose a relatively large $\mu_0$ so that a small $\rho$ can be selected while at the same time the convergence of the SubGM can be ensured. In particular, if the parameters $\alpha,\tau,\kappa$ are known, then one can always choose
\[ \mu_0 = \frac{\alpha^2}{2\tau \kappa^2} \quad\mbox{and}\quad \underline \rho = \sqrt{1 -  \frac{2\alpha^2}{\max\{\kappa^2,2\alpha^2\}} + \frac{\kappa^2\alpha^2}{(\max\{\kappa^2,2\alpha^2\})^2} + \frac{\alpha^2}{2\kappa^2} }. \]

\paragraph{Case II: $\dist(\vx_0,\calX)> \frac{\alpha}{\tau}$}
In this case, \eqref{eq:requirement on mu0} implies that the initial step size $\mu_0$ satisfies $\mu_0 \leq \frac{-\tau\dist^2(\vx_0,\calX) + 2\alpha \dist(\vx_0,\calX)}{2\kappa^2}$, which decreases as $\dist(\vx_0,\calX)$ increases. Moreover, the best decay rate $\underline \rho$ takes the value in \eqref{eq:requirement on rho--dist0 large}, which again implies that a larger $\mu_0$ results in a smaller $\underline\rho$. Note that as $\dist(\vx_0,\calX)$ approaches $\frac{2\alpha}{\tau}$, the upper bound on $\mu_0$ goes to 0 and $\underline \rho$ goes to $1$, which means that the SubGM will converge very slowly.

{Before we proceed, it is worth elaborating on the implication of~\Cref{thm:linear convergence of SubGM} when $h$ is convex. In this case, we can take $\tau=0$, which, in view of~\eqref{eq:requirement on mu0}, shows that $\mu_0$ can be arbitrarily chosen. If we choose $\mu_0 \geq \tfrac{\alpha\dist(\vx_0,\calX)}{\max\{\kappa^2,2\alpha^2\}}$, then by~\eqref{eq:bar dist0} we have $\overline\dist_0  = \mu_0\frac{\max\{\kappa^2,2\alpha^2\}}{\alpha}$, which implies that the decay rate $\underline\rho$ satisfies
	\[
	\underline\rho = \sqrt{1 -  \frac{2\alpha^2}{\max\{\kappa^2,2\alpha^2\}} + \frac{\kappa^2\alpha^2}{(\max\{\kappa^2,2\alpha^2\})^2}} = \begin{cases}
	\sqrt{1 - \frac{\alpha^2}{\kappa^2}}, & \kappa^2 \ge 2\alpha^2, \\
	\frac{\kappa}{2\alpha}, 	& \kappa^2 < 2\alpha^2.
	\end{cases}
	\]
In particular, this is in line with the results in~\cite[Theorem 4.4]{goffin1977convergence}.}

\cmt{
\notexli{Keep the following proof and discussion in arXiv version but remove it in submitted version. }
\begin{proof}[Proof of \Cref{thm:linear convergence of SubGM}]
We first show that $\underline \rho$ in \eqref{eq:requirement on rho} is well defined and satisfies $0< \underline \rho <1$. On one hand, we have
\[ {\underline\rho}^2 > 1- \frac{2\alpha}{\overline\dist_0}\mu_0 \geq 1- \frac{2\alpha^2}{\max\{\kappa^2,2\alpha^2\}}\ge 0 \]
by~\eqref{eq:bar dist0}. On the other hand, let $\theta(t) = \frac{\kappa^2}{t^2} \mu_0^2 - \left( \frac{2\alpha}{t} - \tau \right) \mu_0$ and note that $\underline\rho < 1$ if and only if $\theta(\overline\dist_0)<0$. Noting that $\alpha^2-\tau\mu_0\kappa^2>0$, the latter is equivalent to
\e
\frac{\alpha - \sqrt{\alpha^2 - \tau\mu_0\kappa^2}}{\tau} < \overline\dist_0 < \frac{\alpha+\sqrt{\alpha^2 - \tau\mu_0\kappa^2}}{\tau}.
\label{eq:region for bar dist0}
\ee
To prove~\eqref{eq:region for bar dist0}, we first observe that
\[ \frac{\alpha - \sqrt{\alpha^2 - \tau\mu_0\kappa^2}}{\tau} = \frac{\mu_0\kappa^2}{\alpha + \sqrt{\alpha^2 - \tau\mu_0\kappa^2}} < \frac{\mu_0\kappa^2}{\alpha} \le \mu_0\frac{\max\{\kappa^2,2\alpha^2\}}{\alpha} \le \overline\dist_0 \]
by~\eqref{eq:bar dist0}. Now, due to \eqref{eq:requirement on mu0} and the fact that $\kappa\ge\alpha$, we have $\mu_0\frac{\max\{\kappa^2,2\alpha^2\}}{\alpha}\leq \frac{\alpha}{\tau}$. If $\dist(\vx_0,\calX)\leq \frac{\alpha}{\tau}$, then $\overline\dist_0 \le \frac{\alpha}{\tau} < \frac{\alpha + \sqrt{\alpha^2 - \tau\mu_0\kappa^2}}{\tau}$.  If $\dist(\vx_0,\calX)> \frac{\alpha}{\tau}$, then $\overline\dist_0 = \dist(\vx_0,\calX)$ and \eqref{eq:requirement on mu0} becomes
\[ \mu_0 \leq \frac{-\tau\dist^2(\vx_0,\calX) + 2\alpha \dist(\vx_0,\calX)}{2\kappa^2} = \frac{-\tau{\overline\dist_0}^2 + 2\alpha {\overline\dist_0}}{2\kappa^2}. \]
Noting that $\alpha^2-2\tau\mu_0\kappa^2\ge0$, we can solve the above quadratic inequality to get
\[ \overline\dist_0 \le \frac{\alpha+\sqrt{\alpha^2 - 2\tau\mu_0\kappa^2}}{\tau} < \frac{\alpha+\sqrt{\alpha^2 - \tau\mu_0\kappa^2}}{\tau}. \]

Now, we prove \eqref{eq:R linear decay} by induction. Since $\overline\dist_0\geq \dist(\vx_0,\calX)$ by~\eqref{eq:bar dist0}, it is clear that \eqref{eq:R linear decay} holds when $k=0$. Suppose then that \eqref{eq:R linear decay} holds at the $k$-th step. We compute
\e\begin{split}
\dist^2(\vx_{k+1},\calX) & \leq \| \vx_{k+1} - \calP_{\calX}(\vx_k) \|^2 =	\| \vx_{k} - \mu_k\vd_k -\calP_{\calX}(\vx_k) \|^2\\
& = \dist^2(\vx_{k},\calX)  - 2 \mu_k\left\langle  \vx_{k}  -\calP_{\calX}(\vx_k),   \vd_k   \right\rangle + \mu_k^2 \| \vd_k \|^2 \\
& \leq (1 + \tau\rho^k\mu_0)\dist^2(\vx_{k},\calX) - 2 \rho^k\mu_0 \alpha \dist(\vx_k,\calX)  + \rho^{2k}\mu_0^2 \kappa^2,
\end{split}
\label{eq:dist k+1 to k}\ee
where the second inequality utilizes \eqref{eq:consequence of sharp and weak convex}, \eqref{eq:kappa}, and~\eqref{eq:linear step size}. According to \eqref{eq:bar dist0} {and $\rho^k\in (0,1)$}, we have $\overline\dist_0 \geq \mu_0\frac{\max\{\kappa^2,2\alpha^2\}}{\alpha} \geq 2\mu_0\alpha \geq \frac{2\mu_0\alpha}{1+\tau \mu_0}$, which guarantees that the RHS in \eqref{eq:dist k+1 to k} attains its maximum at $\dist(\vx_{k},\calX) = \rho^k \overline\dist_0$ when $\dist(\vx_k,\calX) \leq \rho^k \overline\dist_0$. Thus, we have
\begin{align*}
\dist^2(\vx_{k+1},\calX)&\leq  \rho^{2k} \overline\dist_0^2 + \tau\rho^{2k}\mu_0 \overline\dist_0^2 -2\rho^{2k}\mu_0 \alpha  \overline\dist_0 + \rho^{2k}\mu_0^2 \kappa^2\\
& =\rho^{2k}\overline\dist_0^2 \left(1 - \left(  \frac{2\alpha}{\overline\dist_0} -\tau \right)\mu_0 + \frac{\kappa^2}{\overline\dist_0^2} \mu_0^2  \right)\\
& \leq  \rho^{2k}\overline\dist_0^2 \rho^2 =  \rho^{2(k+1)}\overline\dist_0^2,
\end{align*}
where the last inequality follows since $\rho \geq \underline\rho$. The proof is completed by induction.
\end{proof}	

There are two factors in \eqref{eq:R linear decay}, namely, $\rho$ and $\overline \dist_0$, that determine the rate at which $\{\dist(\vx_k,\calX)\}_{k\ge0}$ tends to zero. Both factors depend crucially on the initial step size $\mu_0$. Indeed, when $\mu_0$ is large relative to $\dist(\vx_0,\calX)$, we have $\overline\dist_0  = \mu_0\frac{\max\{\kappa^2,2\alpha^2\}}{\alpha}$, which could be much larger than $\dist(\vx_0,\calX)$. Intuitively, although $\vx_0$ is close to $\calX$, $\vx_1$ could become far away from $\calX$ when $\mu_0$ is large. Thus, a larger $\overline\dist_0$ is needed in order for \eqref{eq:R linear decay} to hold. On the other hand, when $\mu_0$ is relatively small, we have $\overline\dist_0=\dist(\vx_0,\calX)$. To understand how $\mu_0$ affects $\underline\rho$, the best decay rate one can choose for $\rho$, let us consider the following cases:

\paragraph{Case I: $\dist(\vx_0,\calX)\leq \frac{\alpha}{\tau}$}
In this case, the initial step size $\mu_0$ satisfies $\mu_0 \leq \frac{\alpha^2}{2\tau \kappa^2}$; see~\eqref{eq:requirement on mu0}. If in addition we have $\mu_0 \leq \frac{\alpha\dist(\vx_0,\calX)}{\max\{\kappa^2,2\alpha^2\}}$, which implies that $\overline\dist_0=\dist(\vx_0,\calX)$, then $\underline\rho$ in \eqref{eq:requirement on rho} becomes
\e
\underline\rho =  \sqrt{1-   \left( \frac{2\alpha}{\dist(\vx_0,\calX)} - \tau \right)\mu_0 + \frac{\kappa^2}{\dist^2(\vx_0,\calX)} \mu_0^2}.
\label{eq:requirement on rho--dist0 large}\ee
 {To see how $\underline \rho$ changes against $\mu_0$, we consider the following two scenarios: $(i)$ when $0\leq \dist(\vx_0,\calX) \leq \left( 1 - \frac{1}{\max\{1,2\alpha^2/\kappa^2\} }\right)\frac{2\alpha}{\tau}$, we have that $\underline\rho$ is monotonically increasing  as $\mu_0$ decreases and $\underline\rho \rightarrow 1$ when  $\mu_0 \rightarrow 0$; $(ii)$ when  $\left( 1 - \frac{1}{\max\{1,2\alpha^2/\kappa^2\} }\right)\frac{2\alpha}{\tau} \leq \dist(\vx_0,\calX) \leq \frac{\alpha}{\tau}$, in the process of decreasing $\mu_0$ from $\frac{\alpha\dist(\vx_0,\calX)}{\max\{\kappa^2,2\alpha^2\}}$ to  0,  $\underline\rho$ will decrease in the first stage and attain its minimum $\sqrt{1- \frac{(2\alpha - \tau\dist(\vx_0,\calX) )^2}{4\kappa^2}}$ at $\mu_0 = \frac{2\alpha \dist(\vx_0,\calX) - \tau \dist^2(\vx_0,\calX)}{2\kappa^2}$, after which $\underline\rho$ will monotonically increase to 1 until $\mu_0$ decreases to 0.   }

 On the other hand, if we have $\mu_0 \geq \frac{\alpha\dist(\vx_0,\calX)}{\max\{\kappa^2,2\alpha^2\}}$ and hence $\overline\dist_0  = \mu_0\frac{\max\{\kappa^2,2\alpha^2\}}{\alpha}$, then
\[
\underline\rho = \sqrt{1 -  \frac{2\alpha^2}{\max\{\kappa^2,2\alpha^2\}} + \frac{\kappa^2\alpha^2}{(\max\{\kappa^2,2\alpha^2\})^2} + \mu_0\tau },
\]
which increases as $\mu_0$ increases. We plot both $\overline\dist_0$ (red line) and $\underline \rho$ (blue line) as functions of $\mu_0$ in \Cref{fig:illustration_mu_rho_dist0}. Furthermore, the closer $\alpha$ and $\kappa$, the better the decay rate $\underline\rho$. In particular, when $\alpha=\kappa$ and $\mu_0 = \frac{\alpha\dist(\vx_0,\calX)}{\max\{\kappa^2,2\alpha^2\}}$, we get $\underline \rho = \sqrt{\frac{1}{4} + \frac{\tau\dist(\vx_0,\calX)}{2\alpha}}$, which approaches $\frac{1}{2}$ as $\dist(\vx_0,\calX)$ approaches zero.

\begin{figure}[!htp]
\centering
\includegraphics[width=5in]{figs/plot_mu_rho}
\caption{Illustration of $\underline \rho$ in \eqref{eq:requirement on rho} (blue line) and $\overline\dist_0$ in \eqref{eq:bar dist0} (red line) as a function of $\mu_0$ when $\dist(\vx_0,\calX)\leq \frac{\alpha}{\tau}$. For the purpose of illustration, we set $\alpha = \tau = \kappa =1$ and $\dist(\vx_0,\calX) = 0.3$.
}\label{fig:illustration_mu_rho_dist0}
\end{figure}

In summary, the above discussion suggests that if $\dist(\vx_0,\calX)$ is known, then one can choose $\mu_0 = \frac{\alpha\dist(\vx_0,\calX)}{\max\{\kappa^2,2\alpha^2\}}$ so that both $\overline\dist_0$ and $\rho$ are made small; see \Cref{fig:illustration_mu_rho_dist0}. On the other hand, if $\dist(\vx_0,\calX)$ is not known \emph{a priori}, then one can choose a relatively large $\mu_0$ so that a small $\rho$ can be selected while at the same time the convergence of the SubGM can be ensured. In particular, if the parameters $\alpha,\tau,\kappa$ are known, then one can always choose
\[ \mu_0 = \frac{\alpha^2}{2\tau \kappa^2} \quad\mbox{and}\quad \underline \rho = \sqrt{1 -  \frac{2\alpha^2}{\max\{\kappa^2,2\alpha^2\}} + \frac{\kappa^2\alpha^2}{(\max\{\kappa^2,2\alpha^2\})^2} + \frac{\alpha^2}{2\kappa^2} }. \]

\paragraph{Case II: $\dist(\vx_0,\calX)> \frac{\alpha}{\tau}$}
In this case, \eqref{eq:requirement on mu0} implies that the initial step size $\mu_0$ satisfies $\mu_0 \leq \frac{-\tau\dist^2(\vx_0,\calX) + 2\alpha \dist(\vx_0,\calX)}{2\kappa^2}$, which decreases as $\dist(\vx_0,\calX)$ increases. Moreover, the best decay rate $\underline \rho$ takes the value in \eqref{eq:requirement on rho--dist0 large}, which again implies that a larger $\mu_0$ results in a smaller $\underline\rho$. Note that as $\dist(\vx_0,\calX)$ approaches $\frac{2\alpha}{\tau}$, the upper bound on $\mu_0$ goes to 0 and $\underline \rho$ goes to $1$, which means that the SubGM will converge very slowly.

\notexli{I do not know how to answer Q1 of R2... It looks better to keep the current statement and add one sentence somewhere to say that our assumption in this paper is basically to let $\calA$ has $\ell_1/\ell_2$-RIP and the fraction of outliers is slightly less than $\frac{1}{2}$ which is related to the RIP constant. Otherwise, it seems we need to revise hugely.}
}

\section{Nonconvex Robust Low-Rank Matrix Recovery: Symmetric Positive Semidefinite (PSD) Case}
\label{sec:geometry analysis}
In the last section we saw that the SubGM with suitable initialization and step sizes converges linearly to a global minimum of a sharp weakly convex function. Naturally, it is of interest to identify concrete problems that possess these two regularity properties. In this section we focus on the robust low-rank matrix recovery problem~\eqref{eq:rms factorization} and establish, for the first time, a connection between the exact recovery condition of $\ell_1/\ell_2$-RIP and the regularity properties of sharpness and weak convexity of the objective function $f$ in~\eqref{eq:rms factorization}. Specifically, we first show that if the fraction of outliers is slightly less than $\frac{1}{2}$ and certain measurement operators arising from the measurement model~\eqref{eq:rms model} possess the $\ell_1/\ell_2$-RIP, then the sharpness condition in \Cref{def:shaprness} holds for~\eqref{eq:rms factorization}. Consequently, all global minima of~\eqref{eq:rms factorization} lead to the exact recovery of the ground-truth matrix $\mX^\star$. We then show that~\eqref{eq:rms factorization} also satisfies the weak convexity condition in \Cref{def:weak convexity}.  Hence, by the convergence result (\Cref{thm:linear convergence of SubGM}) in the last section, we conclude that the SubGM can be utilized to find a global minimum of~\eqref{eq:rms factorization} efficiently.

To begin, let us collect some preparatory results. Let $\mX^\star = \mU^\star\mU^{\star\T}$ be a factorization of $\mX^\star$, where $\mU^\star\in\R^{n\times r}$. Note that for any $\mR\in\calO_r$, we have $\mX^\star = \mU^\star\mR (\mU^\star\mR)^\T$. Thus, all elements in the set
\[
\calU:=\left\{\mU^\star\mR: \mR\in\calO_r   \right\}
\]
are valid factors of $\mX^\star$. Furthermore, it is clear that the function $f$ in \eqref{eq:rms factorization} is constant on the set $\calU$. The following result connects $\dist(\mU,\calU)$ and the distance between $\mU\mU^\T$ and $\mU^\star \mU^{\star \T}$ for any given $\mU \in \R^{n\times r}$:
\begin{lem}[{\cite[Lemma 5.4]{tu2015low}}]
	Given any $\mU^\star \in \R^{n\times r}$, define $\mX^\star = \mU^\star \mU^{\star \T}$. Then, for any $\mU\in\R^{n\times r}$, we have
	\[
	2\left( \sqrt{2}-1 \right)\sigma_{r}^2(\mX^\star) \dist^2(\mU,\calU) \leq \|\mU\mU^\T - \mU^\star \mU^{\star \T}\|_F^2,
	\]
	where  $\sigma_r$ denotes the $r$-th largest singular value.
	\label{lem:Procrustes problem}
\end{lem}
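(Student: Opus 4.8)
The plan is to carry out the classical Procrustes argument behind the cited result, but organized so that the one genuinely delicate estimate is isolated into a single one‑variable inequality. Throughout write $s:=\sigma_r^2(\mU^\star)$, which (since $\mX^\star=\mU^\star\mU^{\star\T}$) is exactly the $r$th largest singular value $\sigma_r(\mX^\star)$ of $\mX^\star$; if $s=0$ the claim is trivial, so assume $s>0$.

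\emph{Alignment and structural facts.} First I would align the factors: pick $\mR^\star\in\calO_r$ attaining $\dist(\mU,\calU)=\|\mU-\mU^\star\mR^\star\|_F$ and replace $\mU^\star$ by $\mU^\star\mR^\star$; this leaves $\mX^\star$ unchanged and makes $\dist(\mU,\calU)=\|\mU-\mU^\star\|_F=:d$. Since $\|\mU-\mU^\star\mR\|_F^2=\|\mU\|_F^2+\|\mU^\star\|_F^2-2\trace(\mR^\T\mU^{\star\T}\mU)$, the SVD solution of the orthogonal Procrustes problem shows that optimality of $\mR=\mId$ forces $\mU^{\star\T}\mU\succeq\mzero$. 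Setting $\mDelta:=\mU-\mU^\star$, $\mA:=\mU^{\star\T}\mDelta=\mU^{\star\T}\mU-\mU^{\star\T}\mU^\star$, $\mB:=\mDelta^\T\mDelta\succeq\mzero$, and $\mC:=\mU^{\star\T}\mU^\star$, we then have: $\mA$ is symmetric, $\mC\succeq s\mId$, and — the fact that makes everything work — $\mA+\mC=\mU^{\star\T}\mU\succeq\mzero$.

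\emph{Exact expansion and reduction to a scalar inequality.} From $\mU=\mU^\star+\mDelta$ we get $\mU\mU^\T-\mX^\star=\mU^\star\mDelta^\T+\mDelta\mU^{\star\T}+\mDelta\mDelta^\T$, and expanding the squared Frobenius norm using cyclicity of the trace and the symmetry of $\mA$ gives the identity
\[
\|\mU\mU^\T-\mX^\star\|_F^2 = 2\trace(\mC\mB)+2\|\mA\|_F^2+4\trace(\mA\mB)+\|\mB\|_F^2 .
\]
I would then diagonalize $\mB=\sum_{j=1}^r\mu_j\vw_j\vw_j^\T$ in an orthonormal eigenbasis ($\mu_j\ge0$, $\sum_j\mu_j=\|\mDelta\|_F^2=d^2$) and put $\alpha_j:=\vw_j^\T\mA\vw_j$, $\gamma_j:=\vw_j^\T\mC\vw_j$. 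Then $\trace(\mC\mB)=\sum_j\mu_j\gamma_j$, $\trace(\mA\mB)=\sum_j\mu_j\alpha_j$, $\|\mB\|_F^2=\sum_j\mu_j^2$, and $\|\mA\|_F^2=\sum_{j,k}(\vw_j^\T\mA\vw_k)^2\ge\sum_j\alpha_j^2$, while $\gamma_j\ge s$ (from $\mC\succeq s\mId$) and $\alpha_j\ge-\gamma_j$ (from $\mA+\mC\succeq\mzero$). Substituting, the lemma follows once we prove the scalar inequality
\[
2\mu\gamma+2\alpha^2+4\mu\alpha+\mu^2 \ \ge\ 2(\sqrt2-1)\,s\,\mu \qquad\text{for all } \mu\ge0,\ \gamma\ge s,\ \alpha\ge-\gamma ,
\]
since summing over $j$ and using $\sum_j\mu_j=d^2$ gives exactly $\|\mU\mU^\T-\mX^\star\|_F^2\ge 2(\sqrt2-1)\,s\,d^2$.

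\emph{The scalar inequality — the main obstacle.} This is where the indefinite cross term $4\mu\alpha$ must be absorbed, and the Procrustes-induced constraint $\alpha\ge-\gamma$ is precisely what rescues it (without it the left side is not bounded below by any positive multiple of $s\mu$). I would argue by a short case split: the left side is convex in $\alpha$ with unconstrained minimizer $\alpha=-\mu$, so if $\gamma\ge\mu$ the constrained minimum equals $2\mu\gamma-\mu^2$, which is $\ge\mu^2\ge2(\sqrt2-1)s\mu$ when $\mu\ge2(\sqrt2-1)s$ and $\ge\mu(2s-\mu)\ge2(\sqrt2-1)s\mu$ when $\mu\le(4-2\sqrt2)s$ — two ranges that overlap and cover $[0,\infty)$; whereas if $\gamma<\mu$ the constrained minimum is at $\alpha=-\gamma$ and equals $(\mu-\gamma)^2+\gamma^2$, whose minimum over $\gamma\in[s,\mu)$ is $\mu^2/2\ge2(\sqrt2-1)s\mu$ when $\mu\ge2s$ and is $(\mu-s)^2+s^2=(\mu-\sqrt2\,s)^2+2(\sqrt2-1)s\mu\ge2(\sqrt2-1)s\mu$ when $\mu<2s$. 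The configuration $\mu=\sqrt2\,s$, $\gamma=s$, $\alpha=-s$ (realized by a rank‑one $\mDelta$) attains equality, confirming that the constant $2(\sqrt2-1)$ is sharp and that no step above is lossy.
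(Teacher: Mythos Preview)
The paper does not supply its own proof of this lemma; it is quoted from the cited reference without argument, so there is nothing in the paper to compare your proof against. Your argument is correct and complete: the Procrustes alignment forces $\mU^{\star\T}\mU\succeq\mzero$, hence $\mA=\mU^{\star\T}\mU-\mU^{\star\T}\mU^\star$ is symmetric with $\mA+\mC\succeq\mzero$; diagonalizing in the eigenbasis of $\mB$ and dropping the off-diagonal entries of $\mA$ reduces everything to the one-variable inequality, and your case split handles that cleanly. The equality configuration $\mu=\sqrt2\,s$, $\gamma=s$, $\alpha=-s$ you identify is indeed extremal, confirming that no step loses a constant.

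One point is worth flagging. You set $s=\sigma_r^2(\mU^\star)=\sigma_r(\mX^\star)$ and what you actually prove is
\[
\|\mU\mU^\T-\mX^\star\|_F^2\ \ge\ 2(\sqrt2-1)\,\sigma_r(\mX^\star)\,\dist^2(\mU,\calU),
\]
whereas the paper's displayed statement carries $\sigma_r^2(\mX^\star)$. Your version is the correct one: the inequality with $\sigma_r^2(\mX^\star)$ fails under scaling. For instance, with $r=1$, $\mU^\star=c\,\ve_1$ and $\mU=c(\sqrt2-1)^{1/2}\ve_2$ one has $\|\mU\mU^\T-\mX^\star\|_F^2=c^4(4-2\sqrt2)$ and $\dist^2(\mU,\calU)=c^2\sqrt2$, so the ratio equals $2(\sqrt2-1)c^2=2(\sqrt2-1)\sigma_r(\mX^\star)$, not $2(\sqrt2-1)c^4$. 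The $\sigma_r^2(\mX^\star)$ in the paper is evidently a typo for $\sigma_r^2(\mU^\star)$; consistently, when the paper itself invokes the lemma in the general case (proof of \Cref{lem:sharpness nonsymmetric}) it uses $\sigma_r(\mW^\star)$ rather than $\sigma_r(\mW^\star\mW^{\star\T})$, i.e., the version you proved.
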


\subsection{$\ell_1/\ell_2$-Restricted Isometry Property}
\label{subsec:L1_2 RIP}
Since the $\ell_1/\ell_2$-RIP~\cite{ZHZ13,chen2015exact,YS16} of the linear measurement operator $\calA:\R^{n\times n} \rightarrow \R^m$ in~\eqref{eq:rms factorization} plays an important role in our subsequent analysis, let us first provide a condition under which $\calA$ will possess such property.   Recall that $\calA$ can be specified by a collection of $m$ $n\times n$ matrices $\mA_1,\ldots,\mA_m$. In other words, given any $\mX \in \R^{n\times n}$, we have $\mathcal{A}(\mX) = \left( \langle \mA_1, \mX \rangle, \ldots, \langle  \mA_m, \mX \rangle \right)$.   We now show that if $\mA_1,\ldots,\mA_m$ have independent and identically distributed (\emph{i.i.d.}) standard Gaussian entries, then $\calA$ will possess the $\ell_1/\ell_2$-RIP with high probability.

\begin{prop}[$\ell_1/\ell_2$-RIP of Gaussian measurement operators]
	Let $r\ge1$ be given. Suppose that $m\gtrsim nr$ and the matrices $\mA_1,\ldots,\mA_m \in \R^{n\times n}$ defining the linear measurement operator $\calA$ have \emph{i.i.d.} standard Gaussian entries. Then, for any $0<\delta<\sqrt{\frac{2}{\pi}}$, there exists a universal constant $c>0$ such  that with  probability exceeding $1-\exp(-c\delta^2 m)$, $\calA$ will possess the $\ell_1/\ell_2$-RIP; i.e., the inequalities
	\e \label{eq:L1-2 RIP for Gaussian}
	\left( \sqrt{\frac{2}{\pi}} - \delta \right) \|\mX\|_F \leq	\frac{1}{m}  \|\calA( \mX)\|_1  \leq  \left( \sqrt{\frac{2} {\pi}} + \delta \right) \|\mX\|_F
	\ee
	hold for any rank-$2r$ matrix $\mX \in \R^{n\times n}$.
	\label{thm: L1-2 RIP for Gaussian}
\end{prop}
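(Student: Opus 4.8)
\emph{Proof plan.} The plan is the standard three-step recipe for proving a restricted isometry property of a Gaussian operator: pointwise concentration, a covering-number estimate for the set of low-rank matrices, and a union bound followed by an approximation step. Since $\mX\mapsto\tfrac{1}{m}\|\calA(\mX)\|_1$ and $\mX\mapsto\|\mX\|_F$ are positively homogeneous, it suffices to establish \eqref{eq:L1-2 RIP for Gaussian} for every $\mX$ in $\calK:=\{\mX\in\R^{n\times n}:\rank(\mX)\le 2r,\ \|\mX\|_F=1\}$ (the hypothesis $0<\delta<\sqrt{2/\pi}$ only serves to make the lower bound in~\eqref{eq:L1-2 RIP for Gaussian} nontrivial). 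The elementary fact driving the argument is that, because the entries of each $\mA_i$ are i.i.d.\ $\mathcal{N}(0,1)$, for fixed $\mX$ the inner product $\langle\mA_i,\mX\rangle$ is $\mathcal{N}(0,\|\mX\|_F^2)$ and the $m$ such inner products are independent; hence for $\mX\in\calK$ the random variable $\tfrac{1}{m}\|\calA(\mX)\|_1=\tfrac{1}{m}\sum_{i=1}^m|\langle\mA_i,\mX\rangle|$ is an average of $m$ i.i.d.\ half-normal variables with common mean $\E|g|=\sqrt{2/\pi}$, $g\sim\mathcal{N}(0,1)$.

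For the first step I would observe that $(\mA_1,\dots,\mA_m)\mapsto\tfrac{1}{m}\sum_i|\langle\mA_i,\mX\rangle|$ is $\tfrac{1}{\sqrt{m}}$-Lipschitz as a function of the $mn^2$ underlying Gaussian entries, since a perturbation changes it by at most $\tfrac{1}{m}\sum_i\|\mA_i-\mA_i'\|_F\le\tfrac{1}{\sqrt{m}}\bigl(\sum_i\|\mA_i-\mA_i'\|_F^2\bigr)^{1/2}$ when $\|\mX\|_F=1$; Gaussian concentration for Lipschitz functions then yields $\Pr\bigl[\,\bigl|\tfrac{1}{m}\|\calA(\mX)\|_1-\sqrt{2/\pi}\bigr|>t\,\bigr]\le 2\exp(-mt^2/2)$ for every $\mX\in\calK$ and every $t>0$. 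For the second step I would use the standard volumetric bound: writing $\mX\in\calK$ through a thin SVD $\mX=\mU\mSigma\mV^{\T}$ with $\mU,\mV\in\R^{n\times 2r}$ and taking the product of $\epsilon$-nets for the two Stiefel factors and for the unit-norm diagonal of $\mSigma$, one obtains an $\epsilon$-net $\calN_\epsilon$ of $\calK$ in the Frobenius norm with $|\calN_\epsilon|\le(C_0/\epsilon)^{(4n+2)r}$ for a universal $C_0$. Applying the concentration bound with $t=\delta/2$ and a union bound over $\calN_\epsilon$ shows that, once $m\gtrsim nr$ (with the implicit constant taken large enough in terms of $\delta$ and $\log(1/\epsilon)$), with probability at least $1-\exp(-c\delta^2 m)$ every $\mX'\in\calN_\epsilon$ satisfies $\bigl|\tfrac{1}{m}\|\calA(\mX')\|_1-\sqrt{2/\pi}\bigr|\le\delta/2$.

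The last step transfers the estimate from $\calN_\epsilon$ to all of $\calK$: for $\mX\in\calK$, pick $\mX'\in\calN_\epsilon$ with $\|\mX-\mX'\|_F\le\epsilon$, so $\mX-\mX'$ has rank at most $4r$ and $\bigl|\tfrac{1}{m}\|\calA(\mX)\|_1-\tfrac{1}{m}\|\calA(\mX')\|_1\bigr|\le\tfrac{1}{m}\|\calA(\mX-\mX')\|_1$. I expect this to be the only place that genuinely needs care, because the difference of two rank-$2r$ matrices is only rank-$4r$, so one must control $\tfrac{1}{m}\|\calA(\cdot)\|_1$ uniformly over the \emph{larger} class of rank-$4r$ matrices; I would handle this either by rerunning the same net/concentration argument for rank-$4r$ matrices (which only scales the exponent $nr$ by a constant factor) or, more cheaply, by combining $\tfrac{1}{m}\|\calA(\mZ)\|_1\le\tfrac{1}{\sqrt{m}}\|\calA(\mZ)\|_2$ with the classical Frobenius-norm RIP of Gaussian operators (valid for rank-$4r$ matrices once $m\gtrsim nr$), giving $\tfrac{1}{m}\|\calA(\mX-\mX')\|_1\le 2\epsilon$. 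Choosing $\epsilon=\delta/4$ and intersecting the two high-probability events then gives $\bigl|\tfrac{1}{m}\|\calA(\mX)\|_1-\sqrt{2/\pi}\bigr|\le 2\epsilon+\delta/2\le\delta$ for all $\mX\in\calK$, which is~\eqref{eq:L1-2 RIP for Gaussian}, and hence the claim by homogeneity. As an alternative that bypasses the net, one may instead bound $\E\sup_{\mX\in\calK}\bigl|\tfrac{1}{m}\|\calA(\mX)\|_1-\sqrt{2/\pi}\bigr|$ by symmetrization and the contraction principle in terms of the Gaussian width of $\calK$, which is $\lesssim\sqrt{nr}$, obtaining a bound of order $\sqrt{nr/m}$, and then add Gaussian concentration of this $\tfrac{1}{\sqrt{m}}$-Lipschitz supremum around its mean.
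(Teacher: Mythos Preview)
Your proposal is correct and follows the same overall concentration-plus-net strategy as the paper. The pointwise concentration step is equivalent (you invoke Lipschitz Gaussian concentration; the paper instead verifies that $|\langle\mA_i,\mX\rangle|-\sqrt{2/\pi}\|\mX\|_F$ is sub-Gaussian and applies a Hoeffding-type inequality), and the covering-number bound is the same.

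The one substantive difference is in the approximation step. You worry that $\mX-\mX'$ has rank up to $4r$ and propose either rerunning the argument for rank-$4r$ matrices or importing the $\ell_2$-RIP to control $\tfrac{1}{m}\|\calA(\mX-\mX')\|_1$. The paper avoids both by a self-bounding trick: it sets $\kappa:=\sup_{\mX\in\calK}\tfrac{1}{m}\|\calA(\mX)\|_1$, splits $\mX-\mX'=\mDelta_1+\mDelta_2$ (via SVD) into two \emph{orthogonal} pieces each of rank at most $2r$, bounds $\tfrac{1}{m}\|\calA(\mX-\mX')\|_1\le\kappa(\|\mDelta_1\|_F+\|\mDelta_2\|_F)\le\sqrt{2}\kappa\epsilon$, and then solves the resulting inequality $\kappa\le\sqrt{2/\pi}+\delta/2+\sqrt{2}\kappa\epsilon$ for $\kappa$. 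This is slightly more economical because it stays within the same rank class and on a single high-probability event, whereas your two options require either a second net or an external RIP result; but all three routes lead to the same conclusion with the same sample complexity.
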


The proof of \Cref{thm: L1-2 RIP for Gaussian} is given in \Cref{sec:prf l1l2 rip}. It is worth noting that similar $\ell_1/\ell_2$-RIPs hold for other types of measurement operators such as the quadratic measurement operators in~\cite{chen2015exact} and those defined by sub-Gaussian matrices. Thus, although our results are stated for Gaussian measurement operators, they can be readily extended to cover other measurement operators that possess similar RIPs.

\subsection{Sharpness and Exact Recovery} \label{subsec:sharp}
Assuming that the linear measurement operator $\calA$ possesses the $\ell_1/\ell_2$-RIP~\eqref{eq:L1-2 RIP for Gaussian}, our first goal is to identify further conditions on the measurement model~\eqref{eq:rms model} so that any global minimum $\mU^\star$ of~\eqref{eq:rms factorization} can be used to recover the ground-truth matrix $\mX^\star$ via $\mU^\star\mU^{\star \T} = \mX^\star$.  Towards that end, let $\Omega\subseteq\{1,\ldots,m\}$ denote the support of the outlier vector $\vs^\star$ and $\Omega^c = \{1,\ldots,m\} \setminus \Omega$. Furthermore, let $p = \frac{|\Omega|}{m}$ be the fraction of outliers in $\vy$. Throughout, we do not make any assumption on the location of the non-zero entries of $\vs^\star$. Instead, we assume that $\calA_{\Omega^c}$, the linear operator defined by the matrices in $\{\mA_i:i \in \Omega^c\}$, also possesses the $\ell_1/\ell_2$-RIP; i.e., we have
\e
\left( \sqrt{\frac{2}{\pi}} - \delta \right)  \|\mX\|_F \leq \frac{1}{m(1- p)} \left\| \left[\calA\left(\mX\right)\right]_{\Omega^c} \right\|_1 \leq \left( \sqrt{\frac{2}{\pi}} + \delta \right) \|\mX\|_F
\label{eq:RIP partial A}\ee
for any rank-$2r$ matrix $\mX$. When each $\mA_i$ is generated with \emph{i.i.d.} standard Gaussian entries, \Cref{thm: L1-2 RIP for Gaussian} implies that $\calA_{\Omega^c}$ will satisfy \eqref{eq:RIP partial A} with high probability as long as $p$ is a constant. This follows from the fact that $|\Omega^c| = (1-p)m \gtrsim nr$ if $m\gtrsim nr$. 

\begin{prop}[sharpness and exact recovery with outliers: PSD case]
Let $0<\delta<\frac{1}{3}\sqrt{\frac{2}{\pi}}$ be given. Suppose that the fraction of outliers $p$ satisfies
\e
p< \frac{1}{2}- \frac{\delta }{\sqrt{{2}/{\pi}} - \delta},
\label{eq:p and delta}
\ee	
and that the linear operators $\calA$ and $\calA_{\Omega^c}$ possess the $\ell_1/\ell_2$-RIP~\eqref{eq:L1-2 RIP for Gaussian} and~\eqref{eq:RIP partial A}, respectively. Then, the objective function $f$ in \eqref{eq:rms factorization} satisfies
\[ 
f(\mU) - f(\mU^\star) \geq \alpha  \dist(\mU,\calU)
\] 
for any $\mU \in \R^{n\times r}$, where
	\e\label{eq:sharness parameter}
	\alpha = \sqrt{2\left(\sqrt{2}-1\right)} \left(2(1-p) \left( \sqrt{\frac{2}{\pi}} - \delta \right) - \left( \sqrt{\frac{2}{\pi}} + \delta \right) \right) \sigma_r(\mX^\star) >0.
	\ee
In particular, the set $\calU$ is precisely the set of global minima of \eqref{eq:rms factorization} and the objective function $f$ is sharp with parameter $\alpha>0$.
	\label{lem:sharpness rms}
\end{prop}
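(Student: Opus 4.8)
The plan is to reduce the claimed sharpness bound to a lower bound on $\|\vy - \calA(\mU\mU^\T)\|_1 - \|\vs^\star\|_1$ expressed through $\|\mU\mU^\T - \mX^\star\|_F$, and then convert the Frobenius distance into $\dist(\mU,\calU)$ via \Cref{lem:Procrustes problem}.

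First I would set up the residual. Since $\vy = \calA(\mX^\star) + \vs^\star$ with $\mX^\star = \mU^\star\mU^{\star\T}$, we have $f(\mU^\star) = \frac1m\|\vs^\star\|_1$. Writing $\mX := \mU\mU^\T$ and $\vr := \calA(\mX^\star - \mX)$, one gets $\vy - \calA(\mX) = \vr + \vs^\star$, hence $m\big(f(\mU) - f(\mU^\star)\big) = \|\vr + \vs^\star\|_1 - \|\vs^\star\|_1$. Because $\vs^\star$ is supported on $\Omega$, splitting the $\ell_1$-norm over $\Omega$ and $\Omega^c$ and applying the reverse triangle inequality $|r_i + s^\star_i| - |s^\star_i| \ge -|r_i|$ for $i\in\Omega$ yields
\[
\|\vr + \vs^\star\|_1 - \|\vs^\star\|_1 \;\ge\; \big\|[\vr]_{\Omega^c}\big\|_1 - \big\|[\vr]_{\Omega}\big\|_1 \;=\; 2\big\|[\vr]_{\Omega^c}\big\|_1 - \|\vr\|_1 .
\]

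Next I would invoke the two restricted isometry properties. Since $\mX^\star - \mX = \mU^\star\mU^{\star\T} - \mU\mU^\T$ has rank at most $2r$, \eqref{eq:RIP partial A} gives $\|[\vr]_{\Omega^c}\|_1 \ge m(1-p)\big(\sqrt{2/\pi} - \delta\big)\|\mX^\star - \mX\|_F$, while \eqref{eq:L1-2 RIP for Gaussian} gives $\|\vr\|_1 = \|\calA(\mX^\star - \mX)\|_1 \le m\big(\sqrt{2/\pi} + \delta\big)\|\mX^\star - \mX\|_F$. Combining with the display above,
\[
f(\mU) - f(\mU^\star) \;\ge\; \Big( 2(1-p)\big(\sqrt{\tfrac2\pi} - \delta\big) - \big(\sqrt{\tfrac2\pi} + \delta\big) \Big)\, \|\mX^\star - \mX\|_F .
\]
Applying \Cref{lem:Procrustes problem} in the form $\|\mX^\star - \mX\|_F \ge \sqrt{2(\sqrt2 - 1)}\,\sigma_r(\mX^\star)\,\dist(\mU,\calU)$ then produces the stated inequality with $\alpha$ as in \eqref{eq:sharness parameter}. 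The hypothesis \eqref{eq:p and delta} is precisely equivalent to $2(1-p)(\sqrt{2/\pi} - \delta) > \sqrt{2/\pi} + \delta$, so $\alpha > 0$; and $\delta < \frac13\sqrt{2/\pi}$ exactly guarantees that the right-hand side of \eqref{eq:p and delta} is positive, so the admissible range for $p$ is nonempty. Finally, since $\alpha > 0$ and $f$ is constant on $\calU$ while $\dist(\cdot,\calU)$ vanishes exactly on $\calU$, the inequality forces $f(\mU) > f(\mU^\star)$ for every $\mU\notin\calU$; hence $\calU$ is precisely the set of global minima of \eqref{eq:rms factorization}, and taking $\calX = \calU$ in \Cref{def:shaprness} shows that $f$ is sharp with parameter $\alpha$.

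The only delicate point is arranging the split so that the $(1-p)$-term carries the factor $2$: this comes from writing $\|[\vr]_{\Omega^c}\|_1 - \|[\vr]_{\Omega}\|_1 = 2\|[\vr]_{\Omega^c}\|_1 - \|\vr\|_1$ — i.e., using the partial RIP on the \emph{larger} piece $\Omega^c$ and the full RIP on the whole residual $\vr$, rather than bounding $\|[\vr]_{\Omega}\|_1$ directly — and then checking that the resulting constant is positive precisely under the stated bound on $p$. Everything else is a routine chain of triangle inequalities together with the two cited results.
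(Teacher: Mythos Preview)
Your proposal is correct and essentially identical to the paper's own proof: both split the residual over $\Omega$ and $\Omega^c$, apply the reverse triangle inequality on $\Omega$, rewrite $\|[\vr]_{\Omega^c}\|_1 - \|[\vr]_{\Omega}\|_1 = 2\|[\vr]_{\Omega^c}\|_1 - \|\vr\|_1$, invoke the two RIPs, and finish with \Cref{lem:Procrustes problem}. Your commentary on why the factor $2$ lands on the $(1-p)$ term and why \eqref{eq:p and delta} is exactly the positivity condition on $\alpha$ matches the paper's reasoning precisely.
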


\begin{proof}[Proof of \Cref{lem:sharpness rms}]
	Using~\eqref{eq:rms model} and~\eqref{eq:rms factorization}, we compute
	\e\begin{split}
		&f(\mU) - f(\mU^\star) =	\frac{1}{m} \left\| \calA\left(\mU\mU^\T - \mX^\star\right) -\vs^\star \right\|_1 - \frac{1}{m} \left\| \vs^\star \right\|_1 \\
		&= \frac{1}{m} \left\| \left[\calA\left(\mU\mU^\T - \mX^\star\right)\right]_{\Omega^c} \right\|_1  +	\frac{1}{m} \left\| \left[\calA\left(\mU\mU^\T - \mX^\star\right)\right]_{\Omega} -\vs^\star \right\|_1 - \frac{1}{m} \left\| \vs^\star \right\|_1  \\
		&\geq\frac{1}{m} \left\| \left[\calA\left(\mU\mU^\T - \mX^\star\right)\right]_{\Omega^c} \right\|_1 -  \frac{1}{m} \left\| \left[\calA\left(\mU\mU^\T - \mX^\star\right)\right]_{\Omega} \right\|_1\\
		& =\frac{2}{m} \left\| \left[\calA\left(\mU\mU^\T - \mX^\star\right)\right]_{\Omega^c} \right\|_1 - \frac{1}{m} \left\|\calA\left(\mU\mU^\T - \mU^\star\mU^{\star \T} \right) \right\|_1\\
		&\geq \left(2(1-p) \left( \sqrt{\frac{2}{\pi}} - \delta \right) - \left( \sqrt{\frac{2}{\pi}} + \delta \right) \right) \left\| \mU^\star\mU^\starT  - \mU\mU^\T \right\|_F \\
		& \geq \alpha \dist(\mU,\calU),
	\end{split}	
	\nonumber		\ee
	where the second inequality follows from the $\ell_1/\ell_2$-RIP of $\calA$ and $\calA_{\Omega^c}$ and the last inequality follows from \Cref{lem:Procrustes problem}. The characterization of the set of global minima of~\eqref{eq:rms factorization} follows immediately from the above inequality and the choice of $p$ in \eqref{eq:p and delta}.
\end{proof}

One interesting consequence of~\Cref{lem:sharpness rms} is that for the robust low-rank matrix recovery problem~\eqref{eq:rms factorization}, the sharpness condition (which characterizes the geometry of the optimization problem around the set of global minima) coincides with the exact recovery property (which is of statistical nature).  
Moreover, {condition~\eqref{eq:p and delta} suggests that the smaller $\delta$ is, the higher the outlier ratio $p$ can be.} On the other hand, given an outlier ratio $p$, condition \eqref{eq:p and delta} requires that $\delta < \sqrt{\frac{2}{\pi}} - \frac{ \sqrt{2/\pi} }{3/2 - p}$, which indirectly imposes a condition on the number of measurements $m$. {Indeed,~\Cref{thm: L1-2 RIP for Gaussian} implies that in order for a Gaussian measurement operator $\calA$ to possess the $\ell_1/\ell_2$-RIP with positive probability, we need $m\gtrsim nr \Big/ \left( \sqrt{\tfrac{2}{\pi}} - \tfrac{ \sqrt{2/\pi} }{3/2 - p} \right)^2$ measurements. Putting it another way, the larger the number of measurements $m$ is, the higher the outlier ratio $p$ can be.} We shall elaborate on this point with experiments in \Cref{sec:experiments}.


\subsection{Weak Convexity} \label{subsec:weak-cvx}
In the last subsection we established the sharpness of \eqref{eq:rms factorization} and showed that any of its global minimum will lead to the exact recovery of the ground-truth matrix $\mX^\star$, even when the fraction of outliers is up to almost $\frac{1}{2}$.  In this subsection we further establish the weak convexity of~\eqref{eq:rms factorization}, thus opening up the possibility of using the machinery developed in \Cref{sec:preliminaries} to obtain provable convergence guarantees for the SubGM when it is applied to solve~\eqref{eq:rms factorization}. Towards that end, we note that the $\ell_1$-norm, being a convex function, is subdifferentially regular~\cite[Example 7.27]{RW04} (see~\cite[Definition 7.25]{RW04} for the definition of subdifferential regularity). Hence, by the chain rule for subdifferentials of subdifferentially regular functions~\cite[Corollary 8.11 and Theorem 10.6]{RW04}, we have
\e
\partial f(\mU) = \frac{1}{m}\left[ \left( \calA^*\left( \Sign \left( \calA(\mU\mU^\T) - \vy \right) \right) \right)^\T\mU +   \calA^* \left( \Sign\left( \calA(\mU\mU^\T) - \vy \right) \right) \mU\right].
\label{eq:subdifferential f}\ee
We are now ready to prove the following result. Note that the weak convexity parameter $\tau$ in~\eqref{eq:weak convexity parameter} is independent of the fraction of outliers.
\begin{prop}[weak convexity: PSD case]\label{prop:no outlier weak convex}
	Suppose that the measurement operator $\calA$ satisfies the $\ell_1/\ell_2$-RIP \eqref{eq:L1-2 RIP for Gaussian}. Then, the objective function $f$ in \eqref{eq:rms factorization} is weakly convex with parameter
	\e
	\tau = 2 \left( \sqrt{\frac{2}{\pi}} + \delta \right).
	\label{eq:weak convexity parameter}\ee
\end{prop}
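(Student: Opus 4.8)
The plan is to verify the subgradient characterization \eqref{eq:weak convexity} of weak convexity directly, exploiting the fact that each residual $\mU \mapsto \langle \mA_i, \mU\mU^\T\rangle - y_i$ is a quadratic function of $\mU$ whose pure quadratic part is controlled by the \emph{upper} $\ell_1/\ell_2$-RIP bound. Write $\calA(\mX) = (\langle \mA_1, \mX\rangle, \dots, \langle \mA_m, \mX\rangle)$, put $r_i(\mU) := \langle \mA_i, \mU\mU^\T\rangle - y_i$ so that $f(\mU) = \tfrac1m\sum_{i=1}^m |r_i(\mU)|$, fix $\mU, \mW \in \R^{n\times r}$, set $\mH := \mW - \mU$, and take an arbitrary $\mD \in \partial f(\mU)$. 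By \eqref{eq:subdifferential f} we have $\mD = \tfrac1m(\mS + \mS^\T)\mU$ for $\mS = \sum_{i=1}^m \sigma_i \mA_i$ with $\sigma_i \in \Sign(r_i(\mU))$, and a short computation (cyclicity and transpose invariance of the trace) then gives
\[
\langle \mD, \mH\rangle = \frac1m\sum_{i=1}^m \sigma_i\,\langle \mA_i,\, \mU\mH^\T + \mH\mU^\T\rangle .
\]

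Next I would expand each residual exactly. Since $r_i$ is quadratic, $r_i(\mW) = r_i(\mU) + \langle \mA_i,\, \mU\mH^\T + \mH\mU^\T\rangle + \langle \mA_i,\, \mH\mH^\T\rangle$, hence $\langle \mA_i,\, \mU\mH^\T + \mH\mU^\T\rangle = r_i(\mW) - r_i(\mU) - \langle \mA_i, \mH\mH^\T\rangle$. Substituting into the previous display and using that $\sigma_i r_i(\mU) = |r_i(\mU)|$ (since $\sigma_i \in \Sign(r_i(\mU))$), $\sigma_i r_i(\mW) \le |r_i(\mW)|$, and $-\sigma_i\langle\mA_i,\mH\mH^\T\rangle \le |\langle\mA_i,\mH\mH^\T\rangle|$ (all because $|\sigma_i|\le 1$), I obtain
\[
\langle \mD, \mH\rangle \;\le\; f(\mW) - f(\mU) + \frac1m\sum_{i=1}^m \bigl|\langle \mA_i, \mH\mH^\T\rangle\bigr| \;=\; f(\mW) - f(\mU) + \frac1m\bigl\|\calA(\mH\mH^\T)\bigr\|_1 .
\]

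Finally, the RIP enters only to bound the quadratic remainder. As $\rank(\mH\mH^\T) \le r \le 2r$, the upper bound in \eqref{eq:L1-2 RIP for Gaussian} gives $\tfrac1m\|\calA(\mH\mH^\T)\|_1 \le (\sqrt{2/\pi} + \delta)\|\mH\mH^\T\|_F$, and since the nonzero eigenvalues of $\mH\mH^\T$ are the squared singular values of $\mH$ we have $\|\mH\mH^\T\|_F = \bigl(\sum_j \sigma_j(\mH)^4\bigr)^{1/2} \le \sum_j \sigma_j(\mH)^2 = \|\mH\|_F^2$. Combining the last two displays and rearranging yields
\[
f(\mW) - f(\mU) \;\ge\; \langle \mD, \mW - \mU\rangle - \Bigl(\sqrt{\tfrac2\pi} + \delta\Bigr)\|\mW - \mU\|_F^2
\]
for every $\mD \in \partial f(\mU)$ and every $\mU, \mW \in \R^{n\times r}$, which is exactly \eqref{eq:weak convexity} with $\tfrac\tau2 = \sqrt{2/\pi} + \delta$; by the equivalence recorded after \eqref{eq:weak convexity}, $f$ is weakly convex (\Cref{def:weak convexity}) with parameter $\tau = 2(\sqrt{2/\pi} + \delta)$, as claimed in \eqref{eq:weak convexity parameter}. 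The only point requiring care is the bookkeeping with the sign vector $\vct{\sigma}$ — namely that $\sigma_i r_i(\mU) = |r_i(\mU)|$ holds with \emph{equality} at the base point while $\sigma_i r_i(\mW) \le |r_i(\mW)|$ at the other point — together with matching the pure-quadratic remainder $\langle \mA_i, \mH\mH^\T\rangle$ (a matrix of rank at most $r$) against the upper RIP bound. There is no genuine analytic obstacle beyond this; note in particular that the lower RIP bound and the fraction of outliers $p$ never appear, which is precisely why the weak convexity parameter $\tau$ is independent of $p$.
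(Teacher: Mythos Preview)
Your proof is correct and follows essentially the same approach as the paper: both arguments expand $\mW\mW^\T = \mU\mU^\T + (\mU\mH^\T+\mH\mU^\T) + \mH\mH^\T$, use the convex subgradient inequality for the $\ell_1$-norm on the linear-in-$\mH$ part (you do this componentwise via $\sigma_i r_i(\mW)\le |r_i(\mW)|$, the paper invokes convexity of $\|\cdot\|_1$ directly), and then control the quadratic remainder $\tfrac1m\|\calA(\mH\mH^\T)\|_1$ by the upper $\ell_1/\ell_2$-RIP bound together with $\|\mH\mH^\T\|_F\le\|\mH\|_F^2$. The only cosmetic difference is that the paper works at the level of the full $\ell_1$-norm while you unfold the sum over $i$.
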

\begin{proof}[Proof of \Cref{prop:no outlier weak convex}]
	For any $\mU', \mU\in \R^{n\times r}$, let $\mDelta = \mU' - \mU$. Then, we have
	\begin{align*}
	&f(\mU') = \frac{1}{m} \left\|\calA(\mU'\mU'^\T - \mX^\star) - \vs^\star\right\|_1\\
	&= \frac{1}{m} \left\|\calA( \mU\mU^\T - \mX^\star + \mU\mDelta^\T + \mDelta\mU^\T + \mDelta\mDelta^\T)  - \vs^\star\right\|_1\\
	&\geq  \frac{1}{m} \left\| \calA(\mU\mU^\T - \mX^\star + \mU\mDelta^\T + \mDelta\mU^\T ) - \vs^\star \right\|_1  - \frac{1}{m} \left\|\calA( \mDelta\mDelta^\T )\right\|_1\\
	&\geq  \frac{1}{m} \left\| \calA(\mU\mU^\T - \mX^\star + \mU\mDelta^\T + \mDelta\mU^\T ) - \vs^\star \right\|_1 - \left( \sqrt{\frac{2}{\pi}} + \delta \right) \left\|\mDelta\mDelta^\T \right\|_F\\
	&\geq  f(\mU) +  \frac{1}{m} \left\langle \vd, \calA( \mU\mDelta^\T + \mDelta\mU^\T ) \right\rangle   - \frac{\tau}{2} \| \mDelta \|_F^2
	\end{align*}
	{for any $\vd \in \Sign(\calA(\mU\mU^\T) - \vy)$},	where the second inequality follows from the $\ell_1/\ell_2$-RIP of $\calA$ and the last inequality is due to the convexity of the $\ell_1$-norm and $\|\mDelta\mDelta^\T\|_F \leq \|\mDelta\|_F^2$. Substituting~\eqref{eq:subdifferential f} into the above equation gives
\[ f(\mU') \geq  f(\mU)  + \langle \mD, \mU'-\mU\rangle - \frac{\tau}{2} \| \mU' - \mU \|_F^2, \ \forall \ \mD\in \partial f(\mU). \]
This completes the proof.
\end{proof}

\subsection{Putting Everything Together\label{subsec:summarize symmetric guarantees}}

With the results in~\Cref{subsec:sharp} and \Cref{subsec:weak-cvx} in place, in order to show that the SubGM enjoys the convergence guarantees in~\Cref{thm:linear convergence of SubGM} when applied to the robust low-rank matrix recovery problem~\eqref{eq:rms factorization}, it remains to determine $\kappa$, the bound on the norm of any subgradient of $f$ in a neighborhood of $\calU$; see~\eqref{eq:kappa}. This is established by the following result:
\begin{prop}[bound on subgradient norm: PSD case] 
{Suppose that the measurement operator $\calA$ satisfies the $\ell_1/\ell_2$-RIP \eqref{eq:L1-2 RIP for Gaussian}. Then, for any $\mU\in \R^{n\times r}$ satisfying $\dist(\mU,\calU) \leq \frac{2\alpha}{\tau}$, we have} \label{lem:symmetric bound kappa} 
\e
{\|\mD\|_F \leq \kappa = 2 \left( \sqrt{\frac{2}{\pi}}+ \delta \right) \left( \|\mU^\star\|_F + \frac{2\alpha}{\tau} \right),  \ \forall \mD\in \partial f(\mU).}
\label{eq:uniform bound for subgradient}
\ee
\end{prop}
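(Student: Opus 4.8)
The plan is to bound the Frobenius norm of an arbitrary subgradient $\mD \in \partial f(\mU)$ by using the explicit formula~\eqref{eq:subdifferential f} together with the $\ell_1/\ell_2$-RIP of $\calA$. First I would fix $\mU \in \R^{n\times r}$ with $\dist(\mU,\calU) \leq \frac{2\alpha}{\tau}$ and write $\vg := \Sign(\calA(\mU\mU^\T) - \vy)$, so that any subgradient has the form $\mD = \frac{1}{m}\left[ (\calA^*(\vg))^\T \mU + \calA^*(\vg)\mU \right]$, where $\vg \in [-1,1]^m$ entrywise. By the triangle inequality and submultiplicativity of the Frobenius/operator norms, $\|\mD\|_F \leq \frac{2}{m}\|\calA^*(\vg)\|_{\mathrm{op}}\|\mU\|_F$; in fact since $\calA^*(\vg)$ need not be symmetric I would be a little careful and bound $\|\mD\|_F \le \frac{2}{m}\|\calA^*(\vg)\|_F \|\mU\|_{\mathrm{op}}$ or simply $\|\mD\|_F \le \frac{2}{m}\|\calA^*(\vg)\|_{\mathrm{op}}\|\mU\|_F$, whichever the authors prefer — the key point is to separate the ``operator norm of $\frac1m\calA^*(\vg)$'' factor from a ``$\|\mU\|_F$'' factor.

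The main step is to show that $\frac{1}{m}\|\calA^*(\vg)\|_{\mathrm{op}} \leq \sqrt{\tfrac{2}{\pi}} + \delta$ whenever $\|\vg\|_\infty \le 1$. The way I would do this is by duality: for any matrix $\mN$, $\|\mN\|_{\mathrm{op}} = \sup\{\langle \mN, \mZ\rangle : \|\mZ\|_* \le 1\}$, and the supremum over the nuclear-norm ball is attained at a rank-one matrix $\mZ = \vu\vv^\T$ (in particular rank at most $2r$). Then $\langle \calA^*(\vg), \mZ\rangle = \langle \vg, \calA(\mZ)\rangle \le \|\vg\|_\infty \|\calA(\mZ)\|_1 \le \|\calA(\mZ)\|_1 \le m\left(\sqrt{\tfrac2\pi}+\delta\right)\|\mZ\|_F \le m\left(\sqrt{\tfrac2\pi}+\delta\right)$, using the upper $\ell_1/\ell_2$-RIP bound~\eqref{eq:L1-2 RIP for Gaussian} on the rank-one (hence rank-$\le 2r$) matrix $\mZ$ and $\|\mZ\|_F \le \|\mZ\|_* \le 1$. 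Taking the supremum over $\mZ$ gives $\|\calA^*(\vg)\|_{\mathrm{op}} \le m\left(\sqrt{\tfrac2\pi}+\delta\right)$, as desired.

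Finally I would control $\|\mU\|_F$. Since $\dist(\mU,\calU)\le \frac{2\alpha}{\tau}$, there is $\mR\in\calO_r$ with $\|\mU - \mU^\star\mR\|_F \le \frac{2\alpha}{\tau}$, hence $\|\mU\|_F \le \|\mU^\star\mR\|_F + \frac{2\alpha}{\tau} = \|\mU^\star\|_F + \frac{2\alpha}{\tau}$ because the Frobenius norm is orthogonally invariant. Combining the three bounds, $\|\mD\|_F \le 2\left(\sqrt{\tfrac2\pi}+\delta\right)\left(\|\mU^\star\|_F + \frac{2\alpha}{\tau}\right) = \kappa$, which is exactly~\eqref{eq:uniform bound for subgradient}.

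I expect the only subtle point to be the norm bookkeeping in the first step: the product $(\calA^*(\vg))^\T\mU + \calA^*(\vg)\mU$ is not symmetric in general, so one must be slightly careful pairing ``operator norm'' with ``Frobenius norm'' so that the constant comes out as $2\left(\sqrt{\tfrac2\pi}+\delta\right)$ rather than something larger; using $\|\mA\mB\|_F \le \|\mA\|_{\mathrm{op}}\|\mB\|_F$ and $\|\mA^\T\mB\|_F \le \|\mA\|_{\mathrm{op}}\|\mB\|_F$ on each of the two terms and then the triangle inequality resolves this cleanly. Everything else is a routine application of the RIP and orthogonal invariance.
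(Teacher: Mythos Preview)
Your proof is correct but takes a different route from the paper's. The paper never touches the explicit subdifferential formula~\eqref{eq:subdifferential f}; instead it shows directly that $f$ is locally Lipschitz, namely
\[
|f(\mU') - f(\mU)| \le \left(\sqrt{\tfrac{2}{\pi}}+\delta\right)(\|\mU\| + \|\mU'\|)\,\|\mU'-\mU\|_F,
\]
by writing $\mU'\mU'^\T - \mU\mU^\T = (\mU'-\mU)\mU^\T + \mU'(\mU'-\mU)^\T$ and applying the upper RIP bound. It then plugs $\mU' = \mU + t\mD$ into the defining inequality~\eqref{eq:subdifferential} of the Fr\'echet subdifferential and lets $t\downarrow 0$ to read off $\|\mD\|_F \le 2\bigl(\sqrt{2/\pi}+\delta\bigr)\|\mU\|$. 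The final triangle-inequality step on $\|\mU\|$ is the same as yours.

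Your approach instead exploits~\eqref{eq:subdifferential f} and the duality identity $\|\mN\|_{\mathrm{op}} = \sup_{\|\mZ\|_*\le 1}\langle \mN,\mZ\rangle$, with the supremum attained at a rank-one $\mZ$, to bound $\tfrac{1}{m}\|\calA^*(\vg)\|_{\mathrm{op}}$ by the RIP constant via H\"older's inequality $\langle \vg,\calA(\mZ)\rangle \le \|\vg\|_\infty\|\calA(\mZ)\|_1$. This is more explicit and makes transparent \emph{why} the RIP constant controls the subgradient: it is exactly the operator-norm bound on $\tfrac{1}{m}\calA^*(\cdot)$ restricted to sign vectors. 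The paper's Lipschitz argument, by contrast, is more portable---it would work verbatim for any loss whose composition with $\mU\mapsto\mU\mU^\T$ has the same Lipschitz structure, without needing a closed-form subdifferential. Your norm bookkeeping (pairing $\|\cdot\|_{\mathrm{op}}$ with $\|\cdot\|_F$ on each of the two terms) is fine and yields the stated constant; note the paper actually obtains the slightly sharper intermediate bound $2(\sqrt{2/\pi}+\delta)\|\mU\|$ with the operator norm of $\mU$, though both arguments conclude with the same $\kappa$ after bounding by $\|\mU^\star\|_F + 2\alpha/\tau$.
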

\begin{proof}[Proof of \Cref{lem:symmetric bound kappa}]
Recall from~\eqref{eq:subdifferential} that
\e
\liminf_{\mU'\rightarrow \mU}\frac{f(\mU') - f(\mU) - \langle \mD, \mU' - \mU \rangle}{\|\mU' - \mU\|_F}\geq 0
\label{eq:subgradient for f}\ee
for any $\mD \in \partial f(\mU)$. Now, for any $\mU'\in\R^{n\times r}$,
\begin{align*}
	\left|f(\mU') - f(\mU)\right| &= \frac{1}{m}\left| \left\| \vy - \calA(\mU'\mU'^\T) \right\|_1 - \left\| \vy - \calA(\mU\mU^\T) \right\|_1 \right| \\
	& \leq \frac{1}{m} \left\| \calA(\mU'\mU'^\T - \mU\mU^\T) \right\|_1\\
	& \leq \left( \sqrt{\frac{2}{\pi}}+ \delta \right) \left\| \mU'\mU'^\T - \mU\mU^\T \right\|_F\\
	& = \left( \sqrt{\frac{2}{\pi}}+ \delta \right) \left\| (\mU' - \mU)  \mU^\T + \mU' (\mU' - \mU)^\T \right\|_F \\
	& \leq \left( \sqrt{\frac{2}{\pi}}+ \delta \right) (\|\mU\| + \|\mU'\|) \|\mU' - \mU\|_F,
\end{align*}
where the second inequality follows from the $\ell_1/\ell_2$-RIP of $\calA$. It follows that
\begin{align*}
\liminf_{\mU'\rightarrow \mU}\frac{\left| f(\mU')-f(\mU) \right| }{\|\mU - \mU'\|_F} &\leq \lim_{\mU'\rightarrow \mU}\frac{(\sqrt{{2}/{\pi}}+ \delta)(\|\mU\| + \|\mU'\|)\|\mU' - \mU\|_F}{\|\mU' - \mU\|_F} \\
&= 2 \left( \sqrt{\frac{2}{\pi}}+ \delta \right) \|\mU\|.
\end{align*}
Upon taking $\mU' = \mU + t \mD$, $t\rightarrow 0$ and invoking \eqref{eq:subgradient for f}, we get
\[
\|\mD\|_F \leq 2\left( \sqrt{\frac{2}{\pi}}+ \delta \right) \|\mU\|, \ \forall \ \mD\in \partial f(\mU).
\]
To complete the proof, it remains to note that for any $\mU \in \R^{n\times r}$ satisfying $\dist(\mU,\calU)\leq \frac{2\alpha}{\tau}$, where $\alpha,\tau$ are given in~\eqref{eq:sharness parameter},~\eqref{eq:weak convexity parameter}, respectively, the triangle inequality yields $ \|\mU\|\leq  \|\mU^\star\|_F + \frac{2\alpha}{\tau}$.
\end{proof}

By collecting \Cref{lem:sharpness rms}, \Cref{prop:no outlier weak convex}, and \Cref{lem:symmetric bound kappa} together and invoking \Cref{thm:linear convergence of SubGM}, we obtain the following guarantees for the SubGM\footnote{In practice, we can just take $\Sign(0)=0$ when applying the SubGM to solve~\eqref{eq:rms factorization}.} when it is applied to the robust low-rank matrix recovery problem \eqref{eq:rms factorization}:

\begin{thm}[nonconvex robust low-rank matrix recovery: PSD case] \label{thm:rms}
Consider the measurement model~\eqref{eq:rms model}, where $\mX^\star$ is an $n\times n$ rank-$r$ symmetric positive semidefinite matrix. Let $0<\delta<\frac{1}{3}\sqrt{\frac{2}{\pi}}$ be given. Suppose that the fraction of outliers $p$ in the measurement vector $\vy$ satisfies~\eqref{eq:p and delta}, and that the linear operators $\calA$, $\calA_{\Omega^c}$ possess the $\ell_1/\ell_2$-RIP~\eqref{eq:L1-2 RIP for Gaussian},~\eqref{eq:RIP partial A}, respectively. Let $\alpha$, $\tau$, and $\kappa$ be given by~\eqref{eq:sharness parameter}, \eqref{eq:weak convexity parameter}, and \eqref{eq:uniform bound for subgradient}, respectively. Under such setting, suppose that we apply the SubGM in~\Cref{alg:sgd for general problem} to solve~\eqref{eq:rms factorization}, where the initial point $\mU_0$ satisfies $\dist(\mU_0,\calU)< \frac{2\alpha}{\tau}$ and the geometrically diminishing step sizes $\mu_k = \rho^k\mu_0$ are used with $\mu_0$, $\rho$ satisfying~\eqref{eq:requirement on mu0}, \eqref{eq:requirement on rho}, respectively. Then, the sequence of iterates $\{\mU_k\}_{k\ge0}$ generated by the SubGM will converge to a point in $\calU$ at a linear rate:
	\[
	\dist(\mU_k,\calU) \leq \rho^k \max\left\{\dist(\mU_0,\calU), \mu_0\frac{\max\{\kappa^2,2\alpha^2\}}{\alpha} \right\}.
	\]
Moreover, the ground-truth matrix $\mX^\star$ can be exactly recovered by any point $\mU^\star \in \calU$ via $\mX^\star = \mU^\star \mU^{\star \T}$.
\end{thm}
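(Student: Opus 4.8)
The plan is to recognize \Cref{thm:rms} as the conjunction of the three structural results just proved (\Cref{lem:sharpness rms}, \Cref{prop:no outlier weak convex}, \Cref{lem:symmetric bound kappa}) fed into the abstract convergence result \Cref{thm:linear convergence of SubGM} with $h=f$ and $\calX=\calU$, so the proof is mostly a matter of checking that the hypotheses line up. First I would note that the objective $f$ in~\eqref{eq:rms factorization} is real-valued and continuous, being the composition of $\|\cdot\|_1$ with the smooth map $\mU\mapsto \vy-\calA(\mU\mU^\T)$, hence lower semicontinuous, and that $\calU\neq\emptyset$; so the setting of~\eqref{eq:min h} applies. Under~\eqref{eq:p and delta} and the $\ell_1/\ell_2$-RIP of $\calA$ and $\calA_{\Omega^c}$, \Cref{lem:sharpness rms} identifies $\calU$ as exactly the set of global minima of $f$ and shows $f$ is sharp with parameter $\alpha>0$ from~\eqref{eq:sharness parameter}, so the abstract distance-to-minimizers specializes to $\dist(\mU,\calU)$; and \Cref{prop:no outlier weak convex} shows $f$ is weakly convex with parameter $\tau$ from~\eqref{eq:weak convexity parameter}.

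The one point that genuinely needs a word of care is the constant $\kappa$. In \Cref{thm:linear convergence of SubGM}, $\kappa$ is the supremum in~\eqref{eq:kappa}, whereas in \Cref{thm:rms} it is the explicit value~\eqref{eq:uniform bound for subgradient}. By \Cref{lem:symmetric bound kappa}, $\|\mD\|_F\le 2(\sqrt{2/\pi}+\delta)(\|\mU^\star\|_F+\tfrac{2\alpha}{\tau})$ for every $\mD\in\partial f(\mU)$ whenever $\dist(\mU,\calU)\le\tfrac{2\alpha}{\tau}$, so the supremum in~\eqref{eq:kappa} is bounded above by this value. I would then observe that the proof of \Cref{thm:linear convergence of SubGM} goes through verbatim when $\kappa$ is replaced by any upper bound on that supremum, as long as the replacement still satisfies $\kappa\ge\alpha$: the constant is used only (i) to establish $0<\underline\rho<1$, where only $\kappa\ge\alpha$ is invoked, and (ii) in the recursion~\eqref{eq:dist k+1 to k}, where only $\|\vd_k\|_F\le\kappa$ is used — and this holds because the induction keeps every iterate inside $\dist(\cdot,\calU)\le\rho^k\overline\dist_0\le\overline\dist_0<\tfrac{2\alpha}{\tau}$, the final strict inequality following from $\dist(\mU_0,\calU)<\tfrac{2\alpha}{\tau}$ together with $\mu_0\tfrac{\max\{\kappa^2,2\alpha^2\}}{\alpha}\le\tfrac{\alpha}{\tau}$, which is~\eqref{eq:requirement on mu0} combined with $\kappa\ge\alpha$. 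Since $\tau=2(\sqrt{2/\pi}+\delta)$, the value in~\eqref{eq:uniform bound for subgradient} is $\kappa=\tau\|\mU^\star\|_F+2\alpha\ge 2\alpha>\alpha$, so this substitution is legitimate.

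With $\alpha$, $\tau$, $\kappa$ thus fixed and the initialization $\dist(\mU_0,\calU)<\tfrac{2\alpha}{\tau}$ and the step-size conditions~\eqref{eq:requirement on mu0},~\eqref{eq:requirement on rho} in force, \Cref{thm:linear convergence of SubGM} applied to $h=f$ yields, via~\eqref{eq:R linear decay}, the bound $\dist(\mU_k,\calU)\le\rho^k\max\{\dist(\mU_0,\calU),\,\mu_0\tfrac{\max\{\kappa^2,2\alpha^2\}}{\alpha}\}$, which is exactly the claimed linear rate. Finally, exact recovery is immediate from the definition of $\calU$: every $\mU\in\calU$ has the form $\mU^\star\mR$ with $\mR\in\calO_r$ and $\mX^\star=\mU^\star\mU^{\star\T}$, so $\mU\mU^\T=\mU^\star\mR\mR^\T\mU^{\star\T}=\mU^\star\mU^{\star\T}=\mX^\star$. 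I expect no serious obstacle here beyond the bookkeeping in the previous paragraph — reconciling the abstract constant in~\eqref{eq:kappa} with the explicit, possibly larger, bound~\eqref{eq:uniform bound for subgradient} and confirming the iterates stay in the neighborhood where that bound is valid; the rest is a direct assembly of \Cref{lem:sharpness rms}, \Cref{prop:no outlier weak convex}, \Cref{lem:symmetric bound kappa}, and \Cref{thm:linear convergence of SubGM}.
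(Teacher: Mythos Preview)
Your proposal is correct and follows essentially the same approach as the paper, which simply states that the theorem is obtained by collecting \Cref{lem:sharpness rms}, \Cref{prop:no outlier weak convex}, and \Cref{lem:symmetric bound kappa} together and invoking \Cref{thm:linear convergence of SubGM}. Your additional care in reconciling the abstract $\kappa$ of~\eqref{eq:kappa} with the explicit upper bound~\eqref{eq:uniform bound for subgradient}, and in verifying that $\kappa\ge\alpha$ so the substitution is legitimate, is a detail the paper leaves implicit.
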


We remark that a similar result for the smooth counterpart \eqref{eq:ms factorization} without any outliers is established in \cite[Theorem 3.3]{tu2015low}. Our \Cref{thm:rms} implies that the nonsmooth problem \eqref{eq:rms factorization} can be solved \emph{as efficiently as} its smooth counterpart \eqref{eq:ms factorization}, even in the presence of a substantial fraction of outliers in the measurement vector.

\subsection{{Initializing the SubGM}} \label{subsec:initialization} 
We now discuss some potential initialization strategies for the SubGM. A common approach to generating an appropriate initialization for matrix recovery-type problems is the spectral method. In our context, this entails simply computing the rank-$r$ approximation of $\tfrac{1}{m}\calA^*(\vy) = \tfrac{1}{m}\sum_{i=1}^m y_i \mA_i$, where $\calA^*$ is the adjoint operator of $\calA$. Specifically, let $\mP\mPi\mQ^\T$ be a rank-$r$ SVD of $\frac{1}{m}\calA^*(\vy) $, where $\mP, \mQ$ have orthonormal columns and $\mPi$ is an $r\times r$ diagonal matrix with the top $r$ singular values of $\frac{1}{m}\calA^*(\vy)$ along its diagonal. In the symmetric positive semidefinite case, we may assume without loss of generality that $\mA_1,\ldots,\mA_m$ are symmetric. Then, we can take $\mU_0 = \mP \mPi^{1/2}$ as the initialization. The main idea behind this approach is that when there is no outlier (i.e., $\vy = \calA(\mX^\star)$ as in \eqref{eq:ms model}), we have $\frac{1}{m}\calA^*(\vy) = \frac{1}{m}\calA^* (\calA(\mX^\star)) \approx \mX^\star$ when $\frac{1}{m}\calA^*\calA$ is close to an unitary operator for low-rank matrices. Thus, $\mU_0$ is also expected to be close to $\calU$. However, when the measurements are corrupted by outliers, it is possible that $\frac{1}{m}\calA^*(\vy)$ is perturbed away from $\frac{1}{m}\calA^*( \calA(\mX^\star))$ and thus $\mU_0$ may not be close enough to $\calU$. To mitigate the influence of outliers, Li et al. \cite{li2017nonconvex} have recently proposed a truncated spectral method for initialization, in which the spectral method is applied to an operator that is formed by using those measurements whose absolute values do not deviate too much from the median of the absolute values of certain sampled measurements; see \Cref{alg:intilization}. They showed that under appropriate conditions, the truncated spectral method can output an initialization that satisfies the requirement of \Cref{thm:rms}.

\begin{algorithm}
	\caption{Truncated Spectral Method for Initialization \cite{li2017nonconvex}}
	\mbox{{\bf Input:} measurement vector $\vy$; sensing matrices $\mA_1,\ldots,\mA_m$; threshold $\beta>0$;}
	\vspace{-0.9\baselineskip}
	\begin{algorithmic}[1]	
		\STATE set $\vy_1 = \{y_i\}_{i = 1}^{\lfloor m/2\rfloor}$, $\vy_2 =  \{y_i\}_{\lfloor m/2\rfloor+1 }^{m}$;
		\STATE Compute the rank-$r$ SVD of
		\[ \mE = \frac{1}{\lfloor m/2 \rfloor} \sum_{i=1}^{\lfloor m/2\rfloor}  y_i \mA_i \setI_{\{ |y_i|\leq \beta\cdot \text{median}(|\vy_2|)  \}} \]
		and denote it by $\mP\mPi\mQ^T$, where
		\[ \setI_{\{ |y_i|\leq \beta\cdot \text{median}(|\vy_2|)  \}} = \left\{
		\begin{array}{c@{\quad}l}
		1 & \mbox{if } |y_i|\leq \beta\cdot \text{median}(|\vy_2|), \\
		0 & \mbox{otherwise};
		\end{array}
		\right.
		\]
	\end{algorithmic}
	{\bf Output:}  $\mU_0 = \mP\mPi^{1/2}$, {$\mV_0 = \mQ\mPi^{1/2}$};
	\label{alg:intilization}
\end{algorithm}
{
\begin{thm}[proximity of initialization to optimal set: PSD case; cf.~{\cite[Theorem 3.3]{li2017nonconvex}}] \label{thm:initialization}
Let $r\ge1$ be given and set $\overline c = \tfrac{\|\mX^\star\|_F}{\sqrt{r} \sigma_r(\mX^\star)}$. Suppose that the matrices $\mA_1,\ldots,\mA_m \in \R^{n\times n}$ defining the linear measurement operator $\calA$ are symmetric and have i.i.d. standard Gaussian entries on and above the diagonal, and that the number of measurements $m$ satisfies $m \gtrsim \beta^2\overline{c}^2nr^2\log n$, where $\beta = 2\log\left( r^{1/4} \overline c^{1/2} + 20 \right)$. Furthermore, suppose that the fraction of outliers $p$ in the measurement vector $\vy$ satisfies $p \lesssim \frac{1}{\sqrt{r}\overline{c}}$. Then, with overwhelming probability, \Cref{alg:intilization} outputs an initialization $\mU_0 \in \R^{n\times r}$ satisfying $\dist(\mU_0, \calU) \lesssim \sigma_r(\mX^\star)$ and hence also the requirement of \Cref{thm:rms} (as $ \sigma_r(\mX^\star)$ is of the same order as $\frac{2\alpha}{\tau}$).
\end{thm}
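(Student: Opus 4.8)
The plan is to reduce the claim to an operator-norm perturbation bound for the matrix $\mE$ formed in \Cref{alg:intilization} and then to bound that perturbation by carefully accounting for the errors introduced by the median truncation. First I would record the deterministic reduction. In the symmetric case the rank-$r$ SVD of $\mE$ can be taken with $\mQ=\mP$, so $\mU_0\mU_0^\T=\mP\mPi\mP^\T$; once one knows that $\|\mE-\mX^\star\|$ is smaller than $\sigma_r(\mX^\star)$, Weyl's inequality together with the positive semidefiniteness of the resulting rank-$r$ truncation gives $\|\mU_0\mU_0^\T-\mE\|\le\|\mE-\mX^\star\|$. Since $\mU_0\mU_0^\T-\mX^\star$ has rank at most $2r$,
\[ \|\mU_0\mU_0^\T-\mX^\star\|_F \le \sqrt{2r}\,\|\mU_0\mU_0^\T-\mX^\star\| \le 2\sqrt{2r}\,\|\mE-\mX^\star\|, \]
and \Cref{lem:Procrustes problem} then yields $\dist(\mU_0,\calU)\lesssim\sqrt{r}\,\|\mE-\mX^\star\|/\sigma_r(\mX^\star)$. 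Hence it suffices to show that, with overwhelming probability,
\e
\|\mE-\mX^\star\| \lesssim \frac{\sigma_r^2(\mX^\star)}{\sqrt{r}},
\label{eq:E-target}
\ee
because $\sigma_r(\mX^\star)$ is of the same order as $\tfrac{2\alpha}{\tau}$ (compare~\eqref{eq:sharness parameter} and~\eqref{eq:weak convexity parameter}); then~\eqref{eq:E-target} gives $\dist(\mU_0,\calU)\lesssim\sigma_r(\mX^\star)$ and the hypothesis of \Cref{thm:rms}.

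To establish~\eqref{eq:E-target} I would first use that $\vy_2$, which sets the truncation level, is independent of $\mA_1,\dots,\mA_{\lfloor m/2\rfloor}$, which build $\mE$; conditioning on $\vy_2$ lets me treat $\tau_0:=\beta\cdot\mathrm{median}(|\vy_2|)$ as a fixed number. For a clean index $i$, $\langle\mA_i,\mX^\star\rangle$ is a centered Gaussian of standard deviation comparable to $\|\mX^\star\|_F$, and since the outliers are a minority of $\vy_2$, a Chernoff bound on the relevant order statistics shows $\tau_0\asymp\beta\|\mX^\star\|_F$ with overwhelming probability. On that event I would write $\mE=\mE_{\mathrm{clean}}+\mE_{\mathrm{out}}$, where $\mE_{\mathrm{clean}}$ collects the surviving clean terms $\langle\mA_i,\mX^\star\rangle\,\setI_{\{|\langle\mA_i,\mX^\star\rangle|\le\tau_0\}}\,\mA_i$ and $\mE_{\mathrm{out}}$ collects the surviving outlying terms, each of which has coefficient of magnitude at most $\tau_0$.

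For the clean part I would split $\mE_{\mathrm{clean}}-\mX^\star=(\mE_{\mathrm{clean}}-\E\,\mE_{\mathrm{clean}})+(\E\,\mE_{\mathrm{clean}}-\mX^\star)$. After the normalization that makes $\tfrac1m\calA^*\calA$ near-isometric on low-rank matrices, the bias $\E\,\mE_{\mathrm{clean}}-\mX^\star$ is proportional to $\mX^\star$ with a factor of the truncated Gaussian tail $\E[z^2\,\setI_{\{|z|\gtrsim\beta\}}]$, which decays fast enough in $\beta$ that the logarithmic choice $\beta=2\log(r^{1/4}\overline{c}^{1/2}+20)$ makes it $\lesssim\sigma_r^2(\mX^\star)/\sqrt{r}$ (here one uses $\|\mX^\star\|\le\|\mX^\star\|_F=\sqrt{r}\,\overline c\,\sigma_r(\mX^\star)$). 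The fluctuation $\mE_{\mathrm{clean}}-\E\,\mE_{\mathrm{clean}}$ is symmetric, so its operator norm equals $\sup_{\|\vu\|=1}|\vu^\T(\mE_{\mathrm{clean}}-\E\,\mE_{\mathrm{clean}})\vu|$; I would control it by a Bernstein inequality for the sub-exponential, boundedly-truncated summands at each fixed $\vu$ together with an $\varepsilon$-net over the unit sphere of $\R^n$, which is exactly what forces $m\gtrsim\beta^2\overline{c}^2 n r^2\log n$. For the outlier part, since every surviving coefficient has magnitude at most $\tau_0$, I would bound $\|\mE_{\mathrm{out}}\|\le\tfrac{\tau_0}{\lfloor m/2\rfloor}\sup\{\|\sum_{i\in S}v_i\mA_i\|:S\subseteq\{1,\dots,\lfloor m/2\rfloor\},\,\|\vv\|_2\le\sqrt{|\Omega|}\}$ and bound the restricted operator norm of the sensing ensemble by $O(\sqrt{|\Omega|}\,(\sqrt n+\sqrt m))$ via Gordon's inequality and a union bound over subsets $S$, obtaining $\|\mE_{\mathrm{out}}\|\lesssim\sqrt{p}\,\beta\,\|\mX^\star\|_F$, which is $\lesssim\sigma_r^2(\mX^\star)/\sqrt{r}$ under the stated bound $p\lesssim\frac{1}{\sqrt{r}\,\overline c}$. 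Summing the three estimates gives~\eqref{eq:E-target}; this whole scheme is essentially~\cite[Theorem 3.3]{li2017nonconvex} specialized to the present $\ell_1$ formulation.

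The main obstacle I anticipate is the outlier term $\mE_{\mathrm{out}}$: because $\Omega$ is arbitrary and the median truncation only caps the magnitude of each surviving outlier rather than eliminating a fixed fraction of them, $\mE_{\mathrm{out}}$ cannot simply be discarded, and one is left bounding the operator norm of an adversarially chosen subset of heavy-tailed, non-rank-one sensing matrices---this is what couples the tolerable outlier fraction to $\overline c$ and $r$ and introduces the extra $\log n$ in the sample complexity. A secondary delicate point is the choice of $\beta$: it must be large enough to kill the clean-data truncation bias yet small enough that the capped outliers do not swamp $\mX^\star$, and the logarithmic value $\beta=2\log(r^{1/4}\overline{c}^{1/2}+20)$ is what balances these two requirements.
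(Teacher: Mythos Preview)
The paper does not give a proof of this theorem at all: it is stated with the qualifier ``cf.~\cite[Theorem 3.3]{li2017nonconvex}'' and followed immediately by a paragraph discussing its sample-complexity gap relative to \Cref{thm: L1-2 RIP for Gaussian} and \Cref{thm:rms}, with no argument in between. The result is imported wholesale from the cited reference, so there is no in-paper proof for your proposal to be compared against.

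That said, your sketch is essentially the argument behind \cite[Theorem 3.3]{li2017nonconvex}: (i) reduce $\dist(\mU_0,\calU)$ to an operator-norm bound on $\mE-\mX^\star$ via Weyl's inequality and \Cref{lem:Procrustes problem}; (ii) condition on the truncation threshold from the independent half $\vy_2$; (iii) split $\mE$ into a bias term, a fluctuation term handled by Bernstein plus an $\varepsilon$-net, and an outlier term whose coefficients are capped by $\tau_0$. Your identification of the outlier term as the bottleneck that couples $p$ to $\overline c,r$ and forces the extra $\log n$ is accurate, and the logarithmic choice of $\beta$ is indeed what balances the truncation bias against the capped outliers. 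One small point worth tightening in your deterministic reduction: you need $\|\mE-\mX^\star\|<\sigma_r(\mX^\star)$ not only to invoke Weyl but also to guarantee that the top-$r$ singular vectors of the symmetric matrix $\mE$ correspond to \emph{positive} eigenvalues, so that $\mU_0\mU_0^\T=\mP\mPi\mP^\T$ really is the best rank-$r$ approximation to $\mE$; this follows from the same Weyl bound, but it is worth stating since $\mE$ is only symmetric, not a priori positive semidefinite.
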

}

Note that the requirements on the number of measurements and the fraction of outliers that can be tolerated are slightly more stringent than those in \Cref{thm: L1-2 RIP for Gaussian} and \Cref{thm:rms}.  However, as will be illustrated in \Cref{sec:experiments}, our numerical experiments show that even a randomly initialized SubGM can very efficiently find the global minimum and hence recover the ground-truth matrix $\mX^\star$. A theoretical justification of such a phenomenon will be the subject of a future study. We suspect that it may be possible to relax the requirement on the initialization in \Cref{thm:rms} or to show that the SubGM enters the region $\left\{ \mU : \dist(\mU, \calU) < \frac{2\alpha}{\tau} \right\}$ very quickly even though the random initialization lies outside of this region.

\section{Nonconvex Robust Low-Rank Matrix Recovery: General Case}
\label{sec:nonsymmetric}

In this section we consider the general setting where $\mX^\star$ is a rank-$r$ $n_1 \times n_2$ matrix. To extend the nonsmooth nonconvex formulation~\eqref{eq:rms factorization} to this setting, a natural approach is to use the factorization $\mX =\mU\mV^\T$ with $\mU\in\R^{n_1\times r}$ and $\mV\in\R^{n_2\times r}$. However, such a factorization is ambiguous in the sense that if $\mX = \mU\mV^\T$, then $\mX = (\mU\mT)(\mV \mT^{-\T})^\T$ for any invertible matrix $\mT \in \R^{r\times r}$. To address this issue, we introduce the nonsmooth nonconvex regularizer
\e\label{eq:nonsmooth regularizer}
\phi(\mU,\mV):=\|\mU^\T\mU - \mV^\T\mV\|_F,
\ee
which aims to balance the factors $\mU$ and $\mV$, and solve the following regularized problem:
\e
\minimize_{\mU\in\R^{n_1\times r},\mV\in\R^{n_2\times r}} \left\{ g(\mU,\mV):= \frac{1}{m}\|\vy - \calA(\mU\mV^\T) \|_1 + \lambda \|\mU^\T\mU - \mV^\T\mV\|_F \right\}.
\label{eq:rms-nonsymmetric}\ee
Here, $\lambda>0$ is a regularization parameter. We remark that a similar regularizer, namely,
\[ \widetilde \phi(\mU,\mV):=\|\mU^\T\mU - \mV^\T\mV\|_F^2, \]
has been introduced in \cite{tu2015low,park2016non,zhu2018global} to account for the ambiguities caused by invertible transformations when minimizing the squared $\ell_2$-loss function $(\mU,\mV) \mapsto \frac{1}{m} \|\vy - \calA(\mU\mV^\T)\|_2^2$. However, such a regularizer is not entirely suitable for the $\ell_1$-loss function, as it is no longer clear that the resulting problem will satisfy the sharpness condition in~\Cref{def:shaprness}.

To simplify notation, we stack $\mU$ and $\mV$ together as $\mW = \begin{bmatrix} \mU \\ \mV \end{bmatrix}$ and write $g(\mW)$ for $g(\mU,\mV)$. Observe that the regularizer $\phi$ achieves its minimum value of $0$ when $\mU$ and $\mV$ have the same Gram matrices; i.e., $ \mU^\T\mU = \mV^\T \mV$. Now, let $\mX^\star = \mPhi\mSigma\mPsi^\T$ be a rank-$r$ SVD of $\mX^\star$, where $\mPhi\in\R^{n_1\times r}, \mPsi\in\R^{n_2\times r}$ have orthonormal columns and $\mSigma \in \R^{r\times r}$ is a diagonal matrix.  Define
\[
\mU^\star = \mPhi\mSigma^{1/2},\quad \mV^\star = \mPsi\mSigma^{1/2}, \quad \mW^\star = \begin{bmatrix} \mU^\star \\ \mV^\star \end{bmatrix}.
\] 
The orthogonal invariance of $g$ (i.e., $g(\mW) = g(\mW\mR)$ for any $\mR \in \calO_r$) implies that $g$ is constant on the set
\[
\calW:=\left\{\mW^\star\mR: \mR\in\calO_r \right\}.
\]

\subsection{{Sharpness and Exact Recovery}} \label{subsec:general sharp}
Our immediate goal is to show that $\calW$ is the set of global minima of~\eqref{eq:rms-nonsymmetric}. Towards that end, let $0<\delta<\frac{1}{3}\sqrt{\frac{2}{\pi}}$ be given. Suppose that the fraction of outliers $p$ in the measurement vector $\vy$ satisfies~\eqref{eq:p and delta}, and that the linear operators $\calA:\R^{n_1\times n_2}\rightarrow\R^m$ and $\calA_{\Omega^c}:\R^{n_1\times n_2} \rightarrow \R^{|\Omega^c|}$ possess the $\ell_1/\ell_2$-RIP~\eqref{eq:L1-2 RIP for Gaussian} and~\eqref{eq:RIP partial A}, respectively.\footnote{It can be shown that modulo the constants, the Gaussian measurement operator $\calA:\R^{n_1\times n_2} \rightarrow \R^m$ will possess the $\ell_1/\ell_2$-RIPs~\eqref{eq:L1-2 RIP for Gaussian} and~\eqref{eq:RIP partial A} with high probability as long as $m \gtrsim \max\{n_1,n_2\}r$. To avoid any distraction caused by the new constants, we shall simply use the $\ell_1/\ell_2$-RIPs~\eqref{eq:L1-2 RIP for Gaussian} and~\eqref{eq:RIP partial A} in our derivation.} Using the argument in the proof of \Cref{lem:sharpness rms}, we get

\e
\overline g(\mW) -\overline g(\mW^\star) \geq \left(2(1-p) \left( \sqrt{\frac{2}{\pi}} - \delta \right) - \left(\sqrt{\frac{2}{\pi}} + \delta\right)\right)\|  \mU\mV^\T - \mX^\star \|_F,
\label{eq:sharpness rms first part}
\ee
where
\[ \overline g(\mW) = \frac{1}{m} \|\vy - \calA(\mU\mV^\T) \|_1. \]
In particular, we see that $\overline g(\mW) >\overline g(\mW^\star)$ whenever $\mU\mV^\T \neq \mX^\star$. Since $\mU^{\star\T}\mU^\star = \mV^{\star\T}\mV^\star$ by construction, we conclude that $\mW^\star$ is a global minimum of \eqref{eq:rms-nonsymmetric}, as $\mW^\star$ is a global minimum of both the first term $\overline g$ and the second term $\phi$ of $g$. It then follows from the orthogonal invariance of $g$ that every element in $\calW$ is a global minimum of~\eqref{eq:rms-nonsymmetric}. The following result further establishes that $\calW$ is exactly the set of global minima of~\eqref{eq:rms-nonsymmetric} and $g$ is sharp.

\begin{prop}[sharpness and exact recovery with outliers: general case] 
Let $0<\delta<\frac{1}{3}\sqrt{\frac{2}{\pi}}$ be given. Suppose that the fraction of outliers $p$ satsifies~\eqref{eq:p and delta}, and that the linear operators $\calA$ and $\calA_{\Omega^c}$ possess the $\ell_1/\ell_2$-RIP~\eqref{eq:L1-2 RIP for Gaussian} and~\eqref{eq:RIP partial A}, respectively. Then, the objective function $g$ in \eqref{eq:rms-nonsymmetric} satisfies
\[
	g(\mW) - g(\mW^\star) \geq \alpha  \dist(\mW,\calW)
\]
for any $\mW \in \R^{(n_1+n_2)\times r}$, where
	\e
	\alpha = \sqrt{\sqrt{2}-1}\cdot \min\left\{ 2(1-p)\left(\sqrt{\frac{2}{\pi}} - \delta\right) - \left(\sqrt{\frac{2}{\pi}} + \delta\right), 2\lambda \right\}\cdot \sigma_r(\mX^\star) >0.
	\label{eq:sharpness parameter nonsymmetric}\ee
	In particular, the set $\calW$ is precisely the set of global minima of \eqref{eq:rms-nonsymmetric} and the objective function $g$ is sharp with parameter $\alpha>0$.
	\label{lem:sharpness nonsymmetric}
\end{prop}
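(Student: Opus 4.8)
The plan is to split the objective as $g = \overline g + \lambda\phi$, with $\overline g(\mW) = \tfrac{1}{m}\|\vy - \calA(\mU\mV^\T)\|_1$ and $\phi(\mW) = \|\mU^\T\mU - \mV^\T\mV\|_F$, bound each piece from below, and then appeal to a Procrustes-type inequality adapted to the rectangular factorization. First I would record that $\phi(\mW^\star) = \|\mU^{\star\T}\mU^\star - \mV^{\star\T}\mV^\star\|_F = 0$ by construction while $\phi \ge 0$ everywhere, and that $\mW^\star$ already minimizes $\overline g$ by virtue of~\eqref{eq:sharpness rms first part}, so $g(\mW^\star) = \overline g(\mW^\star)$. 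Setting $c_1 := 2(1-p)\big(\sqrt{2/\pi}-\delta\big) - \big(\sqrt{2/\pi}+\delta\big)$, which is strictly positive exactly because $p$ obeys~\eqref{eq:p and delta},~\eqref{eq:sharpness rms first part} yields
\[
g(\mW) - g(\mW^\star) \;\ge\; \big[\overline g(\mW) - \overline g(\mW^\star)\big] + \lambda\,\phi(\mW) \;\ge\; c_1\,\|\mU\mV^\T - \mX^\star\|_F + \lambda\,\|\mU^\T\mU - \mV^\T\mV\|_F .
\]

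Next, using the elementary bound $c_1 A + \lambda B \ge \min\{c_1,2\lambda\}\,(A + \tfrac12 B)$ valid for all $A,B\ge0$ (when $c_1 \le 2\lambda$ shrink the coefficient of $B$ down to $c_1/2$; when $c_1 > 2\lambda$ shrink the coefficient of $A$ down to $2\lambda$), the problem reduces to establishing the purely geometric estimate
\e \label{eq:gen-procrustes-plan}
\|\mU\mV^\T - \mX^\star\|_F + \tfrac12\,\|\mU^\T\mU - \mV^\T\mV\|_F \;\ge\; \sqrt{\sqrt2-1}\;\sigma_r(\mX^\star)\;\dist(\mW,\calW).
\ee
Chaining this with the previous display gives $g(\mW) - g(\mW^\star) \ge \alpha\,\dist(\mW,\calW)$ with $\alpha$ precisely as in~\eqref{eq:sharpness parameter nonsymmetric}; and since $c_1>0$ and $\lambda>0$ force $\alpha>0$, sharpness rules out any global minimizer lying outside $\calW$, while $g$ is constant on $\calW$ by its orthogonal invariance and $\mW^\star\in\calW$ has already been identified as a global minimizer, so $\calW$ is exactly the set of global minima.

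The crux, and the step I expect to be the main obstacle, is~\eqref{eq:gen-procrustes-plan} --- a rectangular counterpart of~\Cref{lem:Procrustes problem}. The balancing term is indispensable: without it the left-hand side can vanish while $\dist(\mW,\calW)$ is arbitrarily large, since $(\mU,\mV)\mapsto(\mU\mT,\mV\mT^{-\T})$ leaves $\mU\mV^\T$ unchanged for every invertible $\mT$. I would either invoke~\eqref{eq:gen-procrustes-plan} from the literature on factored low-rank recovery (e.g.~\cite{tu2015low,park2016non,zhu2018global}), or prove it by mirroring the argument behind~\Cref{lem:Procrustes problem}: choose $\mR\in\calO_r$ attaining $\dist(\mW,\calW)$ and replace $\mW^\star$ by $\mW^\star\mR$, so $\dist(\mW,\calW)=\|\mW-\mW^\star\|_F$; write $\mU=\mU^\star+\mDelta_U$, $\mV=\mV^\star+\mDelta_V$ and expand $\mU\mV^\T-\mX^\star = \mU^\star\mDelta_V^\T + \mDelta_U\mV^{\star\T} + \mDelta_U\mDelta_V^\T$ together with $\mU^\T\mU-\mV^\T\mV = (\mU^{\star\T}\mDelta_U+\mDelta_U^\T\mU^\star) - (\mV^{\star\T}\mDelta_V+\mDelta_V^\T\mV^\star) + \mDelta_U^\T\mDelta_U - \mDelta_V^\T\mDelta_V$ (whose zeroth-order term vanishes because $\mU^{\star\T}\mU^\star=\mV^{\star\T}\mV^\star$); use the first-order (polar) optimality condition for $\mR$ to sign-control the cross terms; and combine $\sigma_r(\mW^\star)^2 = 2\sigma_r(\mX^\star)$ with a Young-type estimate that absorbs the quadratic-in-$\mDelta$ remainders into the dominant linear part, which is where the factor $\sqrt2-1$ originates. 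Once~\eqref{eq:gen-procrustes-plan} is available, the proposition follows from the two displays above; the remaining points (positivity of $\alpha$, and identification of $\calW$ with the minimizer set) are immediate.
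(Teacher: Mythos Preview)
Your high-level plan coincides with the paper's proof: split $g=\overline g+\lambda\phi$, use~\eqref{eq:sharpness rms first part} for the first term, pull out $\min\{c_1,2\lambda\}$, and reduce everything to the geometric inequality~\eqref{eq:gen-procrustes-plan}. That skeleton is exactly right, as is your remark that the balancing term is what kills the $GL(r)$ ambiguity.

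Where you diverge from the paper is in how to establish~\eqref{eq:gen-procrustes-plan}. You propose to redo a Procrustes-type argument from scratch (expand in $\mDelta_U,\mDelta_V$, use polar optimality of $\mR$, control quadratic remainders). The paper instead \emph{lifts to the symmetric case}: it first passes from $A+\tfrac12 B$ to $\sqrt{A^2+\tfrac14 B^2}$, then proves the algebraic identity
\[
\|\mU\mV^\T-\mX^\star\|_F^2+\tfrac14\|\mU^\T\mU-\mV^\T\mV\|_F^2
=\tfrac14\|\mW\mW^\T-\mW^\star\mW^{\star\T}\|_F^2+\nu(\mW),
\]
and checks (by expanding and using $\mU^{\star\T}\mU^\star=\mV^{\star\T}\mV^\star$) that $\nu(\mW)=\tfrac12\|\mU^\T\mU^\star-\mV^\T\mV^\star\|_F^2\ge0$. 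Now \Cref{lem:Procrustes problem} applies \emph{directly} to the stacked factor $\mW$ and the PSD matrix $\mW^\star\mW^{\star\T}$, together with $\sigma_r(\mW^\star)=\sqrt{2}\,\sigma_r^{1/2}(\mX^\star)$. This is shorter and delivers the constant $\sqrt{\sqrt2-1}$ without further work, whereas your perturbative sketch would have to reproduce that constant by hand; as written, that part of your plan is only a heuristic and it is not clear you would land on the stated $\alpha$. If you want to follow your own route, the cleanest fix is precisely this lifting identity rather than a fresh Procrustes computation.
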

\begin{proof}[Proof of \Cref{lem:sharpness nonsymmetric}]
	Let $\zeta(p,\delta) = 2(1-p)\left( \sqrt{\frac{2}{\pi}} - \delta\right) - \left( \sqrt{\frac{2}{\pi}} + \delta\right)$. Since $\mU^{\star\T}\mU^{\star} = \mV^{\star\T}\mV^{\star}$, we have $\phi(\mW^\star) = 0$ by~\eqref{eq:nonsmooth regularizer}  and
\begin{align*}
		&g(\mW) - g(\mW^\star) = \frac{1}{m}\|\vy - \calA(\mU\mV^\T) \|_1 - \frac{1}{m}\|\vy - \calA(\mX^\star) \|_1 + \lambda \|\mU^\T\mU - \mV^\T\mV\|_F \\
		&\geq \zeta(p,\delta)\| \mX^\star  - \mU\mV^\T \|_F + \lambda \|\mU^\T\mU - \mV^\T\mV\|_F\\
		&  \geq \min\left\{ \zeta(p,\delta), 2\lambda \right\} \left( \| \mX^\star  - \mU\mV^\T \|_F + \frac{1}{2}\|\mU^\T\mU - \mV^\T\mV\|_F\right)\\
		& \geq \min\left\{ \zeta(p,\delta), 2\lambda \right\} \sqrt{\| \mX^\star  - \mU\mV^\T \|_F^2 +  \frac{1}{4}\|\mU^\T\mU - \mV^\T\mV\|_F^2}\\
		&\geq \min\left\{ \frac{\zeta(p,\delta)}{2}, \lambda \right\} \|\mW\mW^\T - \mW^{\star}\mW^{\star\T}\|_F\\
		& \geq \min\left\{ \frac{\zeta(p,\delta)}{2}, \lambda \right\} \sqrt{2\left(\sqrt{2}-1\right)}\sigma_r(\mW^\star)\dist(\mW,\calW)\\
		&=  \min\left\{ \zeta(p,\delta), 2\lambda \right\} \sqrt{\sqrt{2}-1}\sigma_r^{1/2}(\mX^\star)\dist(\mW,\calW),
\end{align*}
where the first inequality follws from~\eqref{eq:sharpness rms first part}, the fourth inequality follows from
\begin{align*}
& \|\mX^\star - \mU\mV^\T \|_F^2 + \frac{1}{4} \|\mU^\T\mU - \mV^\T \mV\|_F^2 = \| \mU^\star\mV^{\star\T} - \mU\mV^\T \|_F^2 + \frac{1}{4} \|\mU^\T\mU - \mV^\T \mV\|_F^2 \\
&= \frac{1}{4}\|\mW\mW^\T - \mW^\star\mW^{\star\T}\|_F^2 + \nu(\mW)
\end{align*}
with
\begin{align*}
	\nu(\mW) &= \frac{1}{2}\|\mU\mV^\T - \mU^\star\mV^{\star\T}\|_F^2  + \frac{1}{4} \|\mU^\T\mU - \mV^\T \mV\|_F^2\\
	&\quad  - \frac{1}{4}\|\mU\mU^\T - \mU^\star\mU^{\star\T}\|_F^2 - \frac{1}{4}\|\mV\mV^\T - \mV^\star\mV^{\star\T}\|_F^2\\
	&= {\frac{1}{2}\|\mU^\T\mU^\star\|_F^2  + \frac{1}{2}\|\mV^\T\mV^\star\|_F^2 - \left\langle \mU\mV^\T, \mU^\star\mV^{\star\T} \right\rangle} \\
	&\quad { +\frac{1}{2}\|\mU^\star\mV^{\star\T}\|_F^2 -\frac{1}{4}\|\mU^\star\mU^{\star\T}\|_F^2 - \frac{1}{4}\|\mV^\star\mV^{\star\T}\|_F^2}\\
	& = \frac{1}{2}\|\mU^\T\mU^\star - \mV^\T\mV^\star\|_F^2 + \frac{1}{2}\|\mU^\star\mV^{\star\T}\|_F^2 -\frac{1}{4}\|\mU^\star\mU^{\star\T}\|_F^2 - \frac{1}{4}\|\mV^\star\mV^{\star\T}\|_F^2\\
	& = \frac{1}{2}\|\mU^\T\mU^\star - \mV^\T\mV^\star\|_F^2 \geq 0
	\end{align*}
 {(recall that $\mU^{\starT} \mU^\star = \mV^{\starT} \mV^\star$)}, the fifth inequality is from \Cref{lem:Procrustes problem}, and the last equality follows from the fact that $\sigma_r(\mW^\star) = \sqrt{2}\sigma_r^{1/2}(\mX^\star)$. This completes the proof.
\end{proof}		

By comparing \Cref{lem:sharpness rms} and \Cref{lem:sharpness nonsymmetric}, we see that the fraction of outliers that can be tolerated for exact recovery is the same in both the symmetric positive semidefinite and general cases. Moreover, the sharpness parameter $\alpha$ in~\eqref{eq:sharpness parameter nonsymmetric} demonstrates the role that the regularizer $\phi$ plays: When the regularizer $\phi$ is absent (which corresponds to $\lambda = 0$), although every element in $\calW$ is still a global minimum of~\eqref{eq:rms-nonsymmetric}, we cannot guarantee that there is no other global minimum. Indeed, when $\lambda = 0$, the pair $(\mU^\star \mT,\mV^\star \mT^{-\T})$ is a global minimum of~\eqref{eq:rms-nonsymmetric} for any invertible matrix $\mT\in\R^{r\times r}$. However, when $\lambda>0$, the regularizer $\phi$ ensures that the pair $(\mU^\star \mT,\mV^\star \mT^{-\T})$ is a global minimum of~\eqref{eq:rms-nonsymmetric} only when $\mT\in\calO_r$.

\subsection{Weak Convexity} \label{subsec:general weak-cvx}
Let us now establish the weak convexity of the objective function $g$ in \eqref{eq:rms-nonsymmetric}.
\begin{prop}[weak convexity: general case]\label{lem:weak convex nonsymmetric}
	Suppose that the measurement operator $\calA$ satisfies the $\ell_1/\ell_2$-RIP \eqref{eq:L1-2 RIP for Gaussian}. Then, the objective function $g$ in \eqref{eq:rms-nonsymmetric} is weakly convex with parameter
	\e
	\tau = {\sqrt{\frac{2}{\pi}} + \delta} + 2\lambda.
	\label{eq:weak convexity parameter nonsymmetric}	\ee
\end{prop}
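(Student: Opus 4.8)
The plan is to split $g = \overline g + \lambda\phi$, where $\overline g(\mW) = \tfrac1m\|\vy - \calA(\mU\mV^\T)\|_1$ and $\phi(\mW) = \|\mU^\T\mU - \mV^\T\mV\|_F$, and to exploit the elementary fact that weak convexity parameters are additive: if $h_i + \tfrac{\tau_i}{2}\|\cdot\|^2$ is convex for $i=1,2$, then $(h_1+h_2) + \tfrac{\tau_1+\tau_2}{2}\|\cdot\|^2$ is a sum of convex functions, hence convex. It therefore suffices to show that $\overline g$ is weakly convex with parameter $\sqrt{2/\pi}+\delta$ and that $\phi$ is weakly convex with parameter $2$, since then $\lambda\phi$ has parameter $2\lambda$ and adding gives $\tau = \sqrt{2/\pi}+\delta + 2\lambda$. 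In both sub-claims I would verify the equivalent characterization~\eqref{eq:weak convexity}, namely $h(\mW') - h(\mW) \geq \langle \mD, \mW'-\mW\rangle - \tfrac{\tau}{2}\|\mW'-\mW\|_F^2$ for all $\mD \in \partial h(\mW)$.

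For $\overline g$ I would repeat the argument in the proof of~\Cref{prop:no outlier weak convex} with the bilinear map $(\mU,\mV)\mapsto\mU\mV^\T$ in place of the quadratic map $\mU\mapsto\mU\mU^\T$. Fix $\mW,\mW'$, set $\mDelta_U = \mU'-\mU$, $\mDelta_V = \mV'-\mV$, and expand $\mU'\mV'^\T = \mU\mV^\T + \mU\mDelta_V^\T + \mDelta_U\mV^\T + \mDelta_U\mDelta_V^\T$. Peel off $\calA(\mDelta_U\mDelta_V^\T)$ by the triangle inequality, bound $\tfrac1m\|\calA(\mDelta_U\mDelta_V^\T)\|_1 \le (\sqrt{2/\pi}+\delta)\|\mDelta_U\mDelta_V^\T\|_F$ using the $\ell_1/\ell_2$-RIP~\eqref{eq:L1-2 RIP for Gaussian} (note $\rank(\mDelta_U\mDelta_V^\T)\le r\le 2r$), apply the convex subgradient inequality to $\|\cdot\|_1$ composed with the map $(\mDelta_U,\mDelta_V)\mapsto \calA(\mU\mV^\T-\mX^\star+\mU\mDelta_V^\T+\mDelta_U\mV^\T)-\vs^\star$ (which is affine in $(\mDelta_U,\mDelta_V)$), and finally use $\|\mDelta_U\mDelta_V^\T\|_F \le \|\mDelta_U\|_F\|\mDelta_V\|_F \le \tfrac12(\|\mDelta_U\|_F^2+\|\mDelta_V\|_F^2) = \tfrac12\|\mW'-\mW\|_F^2$. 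This yields $\overline g(\mW')\ge\overline g(\mW)+\langle\mD,\mW'-\mW\rangle-\tfrac{\sqrt{2/\pi}+\delta}{2}\|\mW'-\mW\|_F^2$ for a specific linear functional $\mD$; identifying $\mD$ with a generic element of $\partial\overline g(\mW)$ through the chain rule for subdifferentially regular compositions (\cite[Corollary 8.11 and Theorem 10.6]{RW04}), exactly as in~\eqref{eq:subdifferential f}, gives the desired bound. (The gain of the factor $\tfrac12$ relative to the PSD case, via the arithmetic–geometric mean inequality, is precisely why the $\overline g$-contribution is $\sqrt{2/\pi}+\delta$ rather than $2(\sqrt{2/\pi}+\delta)$.)

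For $\phi$ I would write $\mU^\T\mU - \mV^\T\mV = \mW^\T\mJ\mW$ with $\mJ = \begin{bmatrix}\mId_{n_1}&\mzero\\\mzero&-\mId_{n_2}\end{bmatrix}$, so that $\phi(\mW) = \|\mW^\T\mJ\mW\|_F$. Setting $\mDelta = \mW'-\mW$ and expanding $(\mW')^\T\mJ\mW' = \mW^\T\mJ\mW + \mW^\T\mJ\mDelta + \mDelta^\T\mJ\mW + \mDelta^\T\mJ\mDelta$, the same recipe applies: peel off the quadratic term $\mDelta^\T\mJ\mDelta$, which obeys $\|\mDelta^\T\mJ\mDelta\|_F \le \|\mDelta\|\,\|\mJ\mDelta\|_F = \|\mDelta\|\,\|\mDelta\|_F \le \|\mDelta\|_F^2$ because $\mJ$ is orthogonal; then apply the convex subgradient inequality to $\|\cdot\|_F$ composed with the affine map $\mDelta\mapsto\mW^\T\mJ\mW+\mW^\T\mJ\mDelta+\mDelta^\T\mJ\mW$ and use the chain rule to land the resulting functional in $\partial\phi(\mW)$. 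This gives $\phi(\mW')\ge\phi(\mW)+\langle\mD,\mW'-\mW\rangle-\|\mW'-\mW\|_F^2$ for every $\mD\in\partial\phi(\mW)$, i.e., weak convexity with parameter $2$. Combining the $\overline g$- and $\phi$-contributions via the additivity noted above completes the proof.

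The step I expect to require the most care is the subdifferential bookkeeping: one must verify that the linear functionals produced by the two convex subgradient inequalities genuinely lie in $\partial\overline g(\mW)$ and $\partial\phi(\mW)$ (through the chain rule for subdifferentially regular functions, and the sum rule $\partial g = \partial\overline g + \lambda\partial\phi$ if one prefers to combine before passing to subgradients), and to keep the rank of every matrix fed into the $\ell_1/\ell_2$-RIP at most $2r$. Everything else is the same routine expand-and-estimate computation as in the symmetric positive semidefinite case.
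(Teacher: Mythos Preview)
Your proposal is correct and follows essentially the same approach as the paper: split $g=\overline g+\lambda\phi$, show $\overline g$ is weakly convex with parameter $\sqrt{2/\pi}+\delta$ via the bilinear expansion, RIP, and the AM--GM bound $\|\mDelta_U\mDelta_V^\T\|_F\le\tfrac12\|\mW'-\mW\|_F^2$, show $\phi$ is weakly convex with parameter $2$ by peeling off the quadratic term and using the convex subgradient inequality for $\|\cdot\|_F$, then add. Your $\mJ$-matrix formulation $\phi(\mW)=\|\mW^\T\mJ\mW\|_F$ is just a notational variant of the paper's $\underline\mW=\begin{bmatrix}\mU\\-\mV\end{bmatrix}$ (since $\underline\mW=\mJ\mW$), and the subdifferential bookkeeping you flag is handled in the paper exactly as you anticipate, via the chain rule for subdifferentially regular compositions.
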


\begin{proof}[Proof of \Cref{lem:weak convex nonsymmetric}]
Since $g=\overline{g}+\lambda\phi$, it suffices to show that $\overline{g}$ and $\phi$ are both weakly convex. Similar to~\eqref{eq:subdifferential f}, we apply the chain rule for subdifferentials~\cite[Corollary 8.11 and Theorem 10.6]{RW04} to get
\[ \partial\overline{g}(\mW) = \frac{1}{m}
\begin{bmatrix}
\calA^*\left( \Sign\left( \calA(\mU\mV^\T)-\vy \right)\right) \mV \\
\left( \calA^*\left( \Sign\left( \calA(\mU\mV^\T)-\vy \right)\right) \right)^\T \mU
\end{bmatrix}. 
\]
Using this and the argument in the proof of \Cref{prop:no outlier weak convex}, we can show that for any $\mW,\mW' \in \R^{(n_1+n_2)\times r}$,
	\begin{align*}
		\overline g(\mW') &\geq \overline g(\mW) + \left\langle \mD,\mW' - \mW \right\rangle - \left( \sqrt{\frac{2}{\pi}} + \delta\right) \|(\mU' - \mU)(\mV' - \mV)^\T \|_F \\
		&\ge \overline g(\mW) + \left\langle \mD,\mW' - \mW \right\rangle - \left(\frac{\sqrt{{2}/{\pi}} + \delta}{{2}}\right) \| \mW'-\mW \|_F^2 , \ \forall \ \mD \in \partial \overline g(\mW);
	\end{align*}
	i.e., the function $\overline{g}$ is weakly convex with parameter $\tau_{\overline{g}}=\sqrt{\frac{2}{\pi}} + \delta$.
	
	Next, define the matrices
	\[
	\underline \mW =  \begin{bmatrix} \mU \\ -\mV \end{bmatrix}, \quad \underline \mW' =  \begin{bmatrix} \mU' \\ -\mV' \end{bmatrix}
	\]
	and note that $\underline \mW^\T \mW = \mU^\T\mU - \mV^\T\mV$. Furthermore, define the function $\psi:\R^{r\times r} \rightarrow \R$ by
	\[
	\psi(\mC) = \|\mC\|_F,
	\]
	whose subdifferential is
	\[
	\partial \psi(\mC) = \left\{\begin{matrix}  \left\{ \frac{\mC}{\|\mC\|_F} \right\}, & \mC \neq \mzero,\\ \left\{ \mB\in \R^{r\times r} : \|\mB\|_F \le 1  \right\},   & \mC = \mzero.        \end{matrix}    \right.
	\]
	Upon setting $\mDelta = \mW' - \mW$ and $\underline\mDelta = \underline\mW' - \underline\mW$, we compute
	\e\begin{split}
		&\phi(\mW')=\|\underline\mW '^\T \mW'\|_F \\
		&=
		\|\underline\mW^\T\mW +\underline \mW^\T \mDelta +  \underline\mDelta^\T \mW + \underline\mDelta^\T \mDelta\|_F \\
		& \geq \|\underline\mW^\T\mW +\underline \mW^\T  \mDelta +  \underline\mDelta^\T \mW\|_F- \| \underline\mDelta^\T \mDelta\|_F \\
		& \geq  \|\underline\mW^\T\mW\|_F + \left\langle  \mPsi, \underline \mW^\T  \mDelta +  \underline \mDelta^\T \mW\right\rangle - \|\underline\mDelta^\T \mDelta\|_F,
	\end{split}
	\label{eq:weak convex rho}\ee
	where the last inequality holds for any $\mPsi\in \partial \psi(\underline\mW^\T\mW)$ due to the convexity of the Frobenius norm. Since the Frobenius norm is subdifferentially regular~\cite[Example 7.27]{RW04}, the chain rule for subdifferentials~\cite[Corollary 8.11 and Theorem 10.6]{RW04} yields
\e
\partial \phi(\mW) = \left\{ \underline \mW (\mPsi+\mPsi^\T) :  \mPsi \in \partial \psi(\underline\mW^\T\mW) \right\}.
\label{eq:reg-subdiff}
\ee
	It follows from~\eqref{eq:weak convex rho} and~\eqref{eq:reg-subdiff} that
	\begin{align*}
		\phi(\mW') &\geq \phi(\mW) + \left\langle  \mPhi , \mW' - \mW\right\rangle  - \|\underline\mDelta^\T\mDelta\|_F \\
		&\ge \phi(\mW) + \left\langle  \mPhi , \mW' - \mW\right\rangle - \| \mW'-\mW \|_F^2, \ \forall \mPhi \in \partial \phi(\mW);
	\end{align*}
	i.e., the function $\phi$ is weakly convex with parameter $\tau_\phi = 2$.
	
	Putting the above results together, we conclude that $g = \overline{g}+\lambda\phi$ is weakly convex with parameter $\tau = \tau_{\overline{g}} + \lambda\tau_\phi$, as desired.
\end{proof}

Unlike the sharpness condition in \Cref{lem:sharpness nonsymmetric} that requires $\lambda>0$, the weak convexity condition in~\Cref{lem:weak convex nonsymmetric} holds even when $\lambda = 0$. Although the parameters $\alpha$ and $\tau$ in \eqref{eq:sharpness parameter nonsymmetric} and \eqref{eq:weak convexity parameter nonsymmetric} increase as $\lambda$ increases from $0$, the former becomes constant when $\lambda \geq \frac{2(1-p) \left( \sqrt{{2}/{\pi}} - \delta \right) - \left(\sqrt{{2}/{\pi}} + \delta\right)}{2}$. In view of \Cref{thm:linear convergence of SubGM}, it is desirable to choose $\lambda$ so that the local linear convergence region $\left\{ \vx : \dist(\vx,\calX) < \frac{2\alpha}{\tau} \right\}$ of the SubGM is as large as possible. Such consideration suggests that we should set
\[
\lambda = \frac{2(1-p)\left( \sqrt{{2}/{\pi}} - \delta \right) - \left( \sqrt{{2}/{\pi}} + \delta \right)}{2}.
\]

\subsection{{Putting Everything Together}} 
{As in \Cref{subsec:summarize symmetric guarantees}, before we can invoke~\Cref{thm:linear convergence of SubGM} to establish convergence guarantees for the SubGM when applied to the general robust low-rank matrix recovery problem \eqref{eq:rms-nonsymmetric}, we need to bound the norm of any subgradient of $g$ in a neighborhood of~$\calW$. This is achieved by the following result:}

\begin{prop}[bound on subgradient norm: general case] 
{Suppose that the measurement operator $\calA$ satisfies the $\ell_1/\ell_2$-RIP \eqref{eq:L1-2 RIP for Gaussian}. Then, for any $\mW\in\R^{(n_1 + n_2)\times r}$ satisfying $\dist(\mW,\calW) \leq \frac{2\alpha}{\tau}$, we have} \label{lem:nonsymmetric bound kappa} 
	\begin{align}
	{ \|\mD \|_F \leq \kappa = \max\left\{\sqrt{\frac{2}{\pi}}+ \delta, \lambda \right\} \left( \|\mW^\star\|_F + \frac{2\alpha}{\tau} \right), \ \forall \ \mD\in \partial g(\mW). }
	\label{eq:kappa-gen}
	\end{align}
\end{prop}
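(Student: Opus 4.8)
The plan is to mirror the proof of \Cref{lem:symmetric bound kappa}, now carried out for the stacked variable $\mW=\begin{bmatrix}\mU\\\mV\end{bmatrix}$ and with the extra regularizer present. First I would invoke the definition of the Fr\'echet subdifferential \eqref{eq:subdifferential}: any $\mD\in\partial g(\mW)$ satisfies $\liminf_{\mW'\to\mW}\tfrac{g(\mW')-g(\mW)-\langle\mD,\mW'-\mW\rangle}{\|\mW'-\mW\|_F}\ge0$, so taking $\mW'=\mW+t\mD$ and letting $t\downarrow0$ shows that $\|\mD\|_F$ is at most the one-sided local Lipschitz modulus of $g$ at $\mW$ along the direction $\mD$. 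Everything then reduces to bounding that modulus by a constant multiple of $\|\mW\|_F$.

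Next I would split $g=\overline g+\lambda\phi$ and accordingly decompose $\mD=\mD_1+\lambda\mD_2$ with $\mD_1\in\partial\overline g(\mW)$ and $\mD_2\in\partial\phi(\mW)$, which is legitimate by the sum rule for subdifferentially regular functions (just as in the proof of \Cref{lem:weak convex nonsymmetric}). For the data-fit part, the reverse triangle inequality for $\|\cdot\|_1$ together with the $\ell_1/\ell_2$-RIP \eqref{eq:L1-2 RIP for Gaussian} gives $|\overline g(\mW')-\overline g(\mW)|\le\tfrac1m\|\calA(\mU'\mV'^\T-\mU\mV^\T)\|_1\le(\sqrt{2/\pi}+\delta)\|\mU'\mV'^\T-\mU\mV^\T\|_F$ for $\mW'$ near $\mW$; expanding $\mU'\mV'^\T-\mU\mV^\T=(\mU'-\mU)\mV'^\T+\mU(\mV'-\mV)^\T$, applying Cauchy--Schwarz to the two blocks of $\|\mW'-\mW\|_F^2=\|\mU'-\mU\|_F^2+\|\mV'-\mV\|_F^2$, and sending $\mW'\to\mW$ yields $\|\mD_1\|_F\le(\sqrt{2/\pi}+\delta)\|\mW\|_F$. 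Equivalently, $\partial\overline g(\mW)$ is built from $\mB:=\tfrac1m\calA^*(\Sign(\calA(\mU\mV^\T)-\vy))$, and the $\ell_1/\ell_2$-RIP, tested on rank-one (hence rank-$\le 2r$) matrices, forces $\|\mB\|\le\sqrt{2/\pi}+\delta$, whence $\|\mD_1\|_F^2=\|\mB\mV\|_F^2+\|\mB^\T\mU\|_F^2\le\|\mB\|^2\|\mW\|_F^2$. For the regularizer I would use the explicit subdifferential \eqref{eq:reg-subdiff}: $\mD_2=\underline\mW(\mPsi+\mPsi^\T)$ with $\|\mPsi\|_F\le1$, and since $\underline\mW^\T\underline\mW=\mU^\T\mU+\mV^\T\mV=\mW^\T\mW$ one has $\|\underline\mW\|=\|\mW\|$, so $\|\mD_2\|_F$ is controlled by $\|\mW\|$ up to an absolute factor.

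Finally I would combine these. The fact that tightens the constant is that $\mD_1\perp\mD_2$ in the Frobenius inner product: with $\mM:=\mPsi+\mPsi^\T$, which is symmetric because $\mU^\T\mU-\mV^\T\mV$ is, a short trace identity using $\mM^\T=\mM$ gives $\langle\mB\mV,\mU\mM\rangle=\langle\mB^\T\mU,\mV\mM\rangle$, so the $\mU$- and $\mV$-block contributions to $\langle\mD_1,\mD_2\rangle$ cancel; hence $\|\mD\|_F^2=\|\mD_1\|_F^2+\lambda^2\|\mD_2\|_F^2$, an $\ell_2$- rather than $\ell_1$-combination of the two bounds. Assembling the estimates of the previous step --- choosing operator- versus Frobenius-norm factors on the blocks of $\mD_2=\underline\mW\mM$ so as not to lose anything --- then produces a bound of the form $\|\mD\|_F\le\max\{\sqrt{2/\pi}+\delta,\lambda\}\|\mW\|_F$. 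To conclude, since every element of $\calW$ is a rotation of $\mW^\star$ and hence has Frobenius norm $\|\mW^\star\|_F$, the triangle inequality yields $\|\mW\|_F\le\|\mW^\star\|_F+\dist(\mW,\calW)\le\|\mW^\star\|_F+\tfrac{2\alpha}{\tau}$ on the region in question, which is \eqref{eq:kappa-gen}. I expect the reduction and the $\overline g$-part to be essentially verbatim from \Cref{lem:symmetric bound kappa}; the main obstacle is the constant bookkeeping in this last step --- using the symmetry of $\mPsi+\mPsi^\T$ and the identity $\|\underline\mW\|=\|\mW\|$ to pass from the crude triangle-inequality bound $\|\mD\|_F\le\tau\|\mW\|_F$ (with $\tau=(\sqrt{2/\pi}+\delta)+2\lambda$ from \eqref{eq:weak convexity parameter nonsymmetric}) to the sharper $\max\{\sqrt{2/\pi}+\delta,\lambda\}\|\mW\|_F$ claimed in the statement.
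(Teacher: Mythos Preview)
Your route differs from the paper's. The paper never decomposes $\mD=\mD_1+\lambda\mD_2$ or invokes orthogonality; it simply bounds $|g(\mW')-g(\mW)|\le|\overline g(\mW')-\overline g(\mW)|+\lambda|\phi(\mW')-\phi(\mW)|$, applies the reverse triangle inequality and the $\ell_1/\ell_2$-RIP to each piece, and then mimics the limiting argument of \Cref{lem:symmetric bound kappa}. Your orthogonality identity $\langle\mD_1,\mD_2\rangle=0$ is correct (with $\mM=\mPsi+\mPsi^\T$ symmetric, $\trace(\mM\mU^\T\mB\mV)=\trace(\mM\mV^\T\mB^\T\mU)$) and is a genuine refinement not present in the paper.

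However, the step you flag as the ``main obstacle'' is a real gap and cannot be closed: the constant $\max\{\sqrt{2/\pi}+\delta,\lambda\}$ in the stated $\kappa$ is too small. Take $n_1=n_2=r=1$ and $\mW=(u,v)^\T$ with $u^2\neq v^2$. Then $\mD_1=\mB(v,u)^\T$ with $|\mB|\le c_1:=\sqrt{2/\pi}+\delta$, while $\nabla\phi(\mW)=\pm 2(u,-v)^\T$, so by your orthogonality $\|\mD\|_F=\sqrt{\mB^2+4\lambda^2}\,\|\mW\|_F$, which exceeds $\max\{c_1,\lambda\}\|\mW\|_F$ whenever $\lambda>0$ and $|\mB|$ is near $c_1$. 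The paper's own proof stumbles at precisely the same place: its final displayed inequality
\[
(\cdots)\ \le\ \max\{c_1,\lambda\}\bigl(\|\mW\|_F+\|\mW'\|_F\bigr)\|\mW-\mW'\|_F
\]
is false --- set $c_1=\lambda$ and $\|\mU\|_F=\|\mU'\|_F=\|\mV\|_F=\|\mV'\|_F=\|\mU-\mU'\|_F=\|\mV-\mV'\|_F=1$, giving $6c_1$ on the left and $4c_1$ on the right. What your argument \emph{does} deliver is $\|\mD\|_F\le\sqrt{c_1^2+4\lambda^2}\,\|\mW\|_F$, and the paper's Lipschitz route (done correctly) yields the looser $(c_1+2\lambda)\|\mW\|_F=\tau\|\mW\|_F$; either, combined with $\|\mW\|_F\le\|\mW^\star\|_F+2\alpha/\tau$, furnishes a valid $\kappa$ that suffices for every downstream use in \Cref{thm:general rms}.
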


\begin{proof}[Proof of \Cref{lem:nonsymmetric bound kappa}] 
Observe that for any $\mW, \mW'\in\R^{(n_1 + n_2)\times r}$,
\begin{align*}
&\left|g(\mW')- g(\mW)\right| \leq \left|\overline g(\mW')- \overline g(\mW)\right| + \lambda\left|\phi(\mW')- \phi(\mW)\right| \\
& \leq \frac{1}{m} \left\| \calA(\mU\mV^\T - \mU'\mV'^\T) \right\|_1 +\lambda\left( \left\| \mU^\T\mU - \mU'^\T\mU' \right\|_F + \left\|\mV^\T\mV - \mV'^\T\mV' \right\|_F \right) \\
& \leq \left(\sqrt{\frac{2}{\pi}}+ \delta \right) \left\| \mU\mV^\T - \mU'\mV'^\T \right\|_F + \lambda\left( \left\| \mU^\T\mU - \mU'^\T\mU' \right\|_F + \left\| \mV^\T\mV - \mV'^\T\mV' \right\|_F \right)\\
& \leq \left( \sqrt{\frac{2}{\pi}}+ \delta \right) \left(\|\mV\|_F \|\mU - \mU'\|_F + \|\mU'\|_F \|\mV - \mV'\|_F \right)\\
& \quad + \lambda\left(\|\mU\|_F + \|\mU'\|_F  \right)\|\mU - \mU'\|_F
 + \lambda \left(\|\mV\|_F + \|\mV'\|_F  \right)\|\mV - \mV'\|_F\\
& \leq \max\left\{ \sqrt{\frac{2}{\pi}}+ \delta, \lambda \right\} \left(\|\mW\|_F + \|\mW'\|_F \right) \| \mW - \mW'\|_F ,
\end{align*}
where the third inequality follows from the $\ell_1/\ell_2$-RIP~\eqref{eq:L1-2 RIP for Gaussian}. Thus, similar to the derivation of~\eqref{eq:uniform bound for subgradient}, for any $\mW\in\R^{(n_1+n_2)\times r}$ satisfying $\dist(\mW,\calW)\leq \frac{2\alpha}{\tau}$, where $\alpha$ and $\tau$ are given in~\eqref{eq:sharpness parameter nonsymmetric} and \eqref{eq:weak convexity parameter nonsymmetric}, respectively, we have
\begin{align*}
\|\mD \|_F &\leq  \max\left\{ \sqrt{\frac{2}{\pi}}+ \delta, \lambda \right\}\|\mW\|_F \\
&\leq \max\left\{\sqrt{\frac{2}{\pi}}+ \delta, \lambda \right\} \left( \|\mW^\star\|_F + \frac{2\alpha}{\tau} \right), \ \forall \ \mD\in \partial g(\mW).
\end{align*}
\end{proof}

By collecting \Cref{lem:sharpness nonsymmetric}, \Cref{lem:weak convex nonsymmetric}, and \Cref{lem:nonsymmetric bound kappa} together and invoking \Cref{thm:linear convergence of SubGM}, 
{we obtain the following guarantees when the SubGM is used to solve the general robust low-rank matrix recovery problem \eqref{eq:rms-nonsymmetric}:}


{
\begin{thm}[nonconvex robust low-rank matrix recovery: general case] 
	Consider the measurement model~\eqref{eq:rms model}, where $\mX^\star$ is an $n_1\times n_2$ rank-$r$  matrix. Let $0<\delta<\frac{1}{3}\sqrt{\frac{2}{\pi}}$ be given. Suppose that the fraction of outliers $p$ in the measurement vector $\vy$ satisfies~\eqref{eq:p and delta}, and that the linear operators $\calA$, $\calA_{\Omega^c}$ possess the $\ell_1/\ell_2$-RIP~\eqref{eq:L1-2 RIP for Gaussian},~\eqref{eq:RIP partial A}, respectively. Let $\alpha$, $\tau$, and $\kappa$ be given by~\eqref{eq:sharpness parameter nonsymmetric}, \eqref{eq:weak convexity parameter nonsymmetric}, and~\eqref{eq:kappa-gen}, respectively. Under such setting, suppose that we apply the SubGM in~\Cref{alg:sgd for general problem} to solve~\eqref{eq:rms-nonsymmetric}, where the initial point $\mW_0$ satisfies $\dist(\mW_0,\calW)< \frac{2\alpha}{\tau}$ and the geometrically diminishing step sizes $\mu_k = \rho^k\mu_0$ are used with $\mu_0$, $\rho$ satisfying~\eqref{eq:requirement on mu0}, \eqref{eq:requirement on rho}, respectively. Then, the sequence of iterates $\{\mW_k\}_{k\ge0}$ generated by the SubGM will converge to a point in $\calW$ at a linear rate:
	\[
	\dist(\mW_k,\calW) \leq \rho^k \max\left\{\dist(\mW_0,\calW), \mu_0\frac{\max\{\kappa^2,2\alpha^2\}}{\alpha} \right\}.
	\]
	Moreover, the ground-truth matrix $\mX^\star$ can be exactly recovered by any point $\mW^\star \in \calW$ via $\mX^\star = \mU^\star \mV^{\star \T}$.
	\label{thm:general rms}
\end{thm}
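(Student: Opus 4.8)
The plan is to verify that the general objective $g$ in~\eqref{eq:rms-nonsymmetric}, regarded as a function on $\R^{(n_1+n_2)r}$ via vectorization (so that the Frobenius norm plays the role of the Euclidean norm), satisfies every hypothesis of~\Cref{thm:linear convergence of SubGM}, and then to read off the conclusion. There is no new analytic content here; the work is in assembling the three structural results already proved in this section and matching the notation of the abstract framework of~\Cref{sec:preliminaries}.

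First I would collect the three facts. By~\Cref{lem:sharpness nonsymmetric}, under the bound~\eqref{eq:p and delta} on $p$ together with the $\ell_1/\ell_2$-RIP assumptions~\eqref{eq:L1-2 RIP for Gaussian} and~\eqref{eq:RIP partial A}, the set of global minima of $g$ is exactly $\calW$ and $g$ is sharp with parameter $\alpha>0$ from~\eqref{eq:sharpness parameter nonsymmetric}; thus the set $\calX$ of the abstract setup is $\calW$ and $\dist(\cdot,\calX)=\dist(\cdot,\calW)$. By~\Cref{lem:weak convex nonsymmetric}, $g$ is weakly convex with parameter $\tau$ from~\eqref{eq:weak convexity parameter nonsymmetric}, which is finite and strictly positive, so $\frac{2\alpha}{\tau}$ is a well-defined positive radius. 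By~\Cref{lem:nonsymmetric bound kappa}, every $\mD\in\partial g(\mW)$ with $\dist(\mW,\calW)\le\frac{2\alpha}{\tau}$ satisfies $\|\mD\|_F\le\kappa$ with $\kappa$ from~\eqref{eq:kappa-gen}; hence the quantity~\eqref{eq:kappa} associated with $h=g$ is at most this $\kappa$, and $\kappa\ge\alpha$ also holds (cf.\ the discussion following~\eqref{eq:kappa}). Since $g$ is finite-valued and weakly convex, $\partial g(\mW)\neq\emptyset$ for every $\mW$, so the SubGM in~\Cref{alg:sgd for general problem} is well-defined.

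Next I would observe that the remaining hypotheses transfer verbatim: the assumption $\dist(\mW_0,\calW)<\frac{2\alpha}{\tau}$ is exactly the initialization requirement of~\Cref{thm:linear convergence of SubGM} with $h=g$ and $\calX=\calW$, and $\mu_0,\rho$ are assumed to satisfy~\eqref{eq:requirement on mu0} and~\eqref{eq:requirement on rho} with the present $\alpha,\tau,\kappa$, which are precisely the requirements there (the fact that~\Cref{lem:nonsymmetric bound kappa} only upper-bounds $\|\mD\|_F$ rather than computing~\eqref{eq:kappa} exactly is harmless, since the proof of~\Cref{thm:linear convergence of SubGM} uses only the bound $\|\vd_k\|\le\kappa$ along iterates, and the iterates remain in the neighborhood $\dist(\cdot,\calW)\le\frac{2\alpha}{\tau}$ by the very induction establishing~\eqref{eq:R linear decay}). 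Invoking~\Cref{thm:linear convergence of SubGM} then yields $\dist(\mW_k,\calW)\le\rho^k\overline\dist_0$ with $\overline\dist_0=\max\{\dist(\mW_0,\calW),\mu_0\frac{\max\{\kappa^2,2\alpha^2\}}{\alpha}\}$, which is the stated bound. Finally, since any limit point lies in $\calW=\{\mW^\star\mR:\mR\in\calO_r\}$ and for any $\mR\in\calO_r$ one has $(\mU^\star\mR)(\mV^\star\mR)^\T=\mU^\star\mV^{\star\T}=\mPhi\mSigma\mPsi^\T=\mX^\star$, exact recovery of $\mX^\star$ follows.

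I do not expect a genuine obstacle. The two points that need care are: (i) identifying $\R^{(n_1+n_2)\times r}$ with $\R^{(n_1+n_2)r}$ under the Frobenius inner product, so that ``subgradient'' and ``distance'' have the same meaning in the abstract theorem and in the matrix formulation; and (ii) the upper-bound-versus-supremum subtlety in the definition of $\kappa$ noted above. Both are routine; the rest is bookkeeping in substituting~\eqref{eq:sharpness parameter nonsymmetric},~\eqref{eq:weak convexity parameter nonsymmetric}, and~\eqref{eq:kappa-gen} into the abstract convergence statement.
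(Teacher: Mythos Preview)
Your proposal is correct and matches the paper's own argument exactly: the theorem is obtained by collecting \Cref{lem:sharpness nonsymmetric}, \Cref{lem:weak convex nonsymmetric}, and \Cref{lem:nonsymmetric bound kappa} and invoking \Cref{thm:linear convergence of SubGM}. The two points you flag (Frobenius/Euclidean identification and the upper-bound-versus-supremum issue for $\kappa$) are indeed routine and handled implicitly in the paper as well.
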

}

\subsection{{Initializing the SubGM}} 
{In the general case, we can still use the truncated spectral method in \Cref{alg:intilization} to obtain a good initialization for the SubGM. Specifically, we take $\mW_0 = \begin{bmatrix}\mU_0^\T &\mV_0^\T \end{bmatrix}^\T$ as the initialization, where $\mU_0,\mV_0$ are the outputs of~\Cref{alg:intilization}. Then, we have the following result, which is essentially a restatement of~\cite[Theorem 3.3]{li2017nonconvex}:
\begin{thm}[proximity of initialization to optimal set: general case]
Let $r\ge1$ be given and set $n=n_1+n_2$, $\overline c = \tfrac{\|\mX^\star\|_F}{\sqrt{r} \sigma_r(\mX^\star)}$. Suppose that the matrices $\mA_1,\ldots,\mA_m \in \R^{n_1\times n_2}$ defining the linear measurement operator $\calA$ have i.i.d. standard Gaussian entries, and that the number of measurements $m$ satisfies $m \gtrsim \beta^2\overline{c}^2nr^2\log n$, where $\beta = 2\log\left( r^{1/4} \overline c^{1/2} + 20 \right)$. Furthermore, suppose that the fraction of outliers $p$ in the measurement vector $\vy$ satisfies $p \lesssim \frac{1}{\sqrt{r}\overline{c}}$. Then, with overwhelming probability, \Cref{alg:intilization} outputs an initialization $\mW_0 \in \R^{(n_1+n_2)\times r}$ satisfying $\dist(\mW_0, \calU) \lesssim \sigma_r(\mX^\star)$ and hence also the requirement of \Cref{thm:general rms}.
\end{thm}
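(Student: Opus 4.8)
The plan is to follow the blueprint of~\Cref{thm:initialization} (i.e.\ of~\cite[Theorem 3.3]{li2017nonconvex}) essentially verbatim, the only new wrinkle being that one works with a rectangular SVD in place of a symmetric eigendecomposition. Write $y_i = \langle \mA_i, \mX^\star\rangle + s_i^\star$ and recall that~\Cref{alg:intilization} forms the truncated, outlier-screened backprojection $\mE = \tfrac{1}{\lfloor m/2\rfloor}\sum_{i\le \lfloor m/2\rfloor} y_i \mA_i \setI_{\{|y_i| \le \beta\cdot\mathrm{median}(|\vy_2|)\}}$, extracts its rank-$r$ SVD $\mP\mPi\mQ^\T$, and returns $\mU_0 = \mP\mPi^{1/2}$, $\mV_0 = \mQ\mPi^{1/2}$, $\mW_0 = \begin{bmatrix}\mU_0^\T & \mV_0^\T\end{bmatrix}^\T$. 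The proof then splits into two halves: (i) show that, with overwhelming probability, $\|\mE - \mX^\star\|_{\mathrm{op}}\le c\,\sigma_r(\mX^\star)$ for any prescribed small constant $c>0$ once the constant hidden in $m\gtrsim\beta^2\overline c^2 nr^2\log n$ is taken large enough; and (ii) invoke a singular-value perturbation bound to transfer this closeness to the SVD factors, which yields $\dist(\mW_0,\calW)\lesssim\sigma_r(\mX^\star)$ and hence, since $\sigma_r(\mX^\star)$ is of the same order as $\tfrac{2\alpha}{\tau}$ with $\alpha,\tau$ as in~\eqref{eq:sharpness parameter nonsymmetric},~\eqref{eq:weak convexity parameter nonsymmetric}, places $\mW_0$ in the convergence region of~\Cref{thm:general rms}.

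For half (i) I would condition on the second batch $\vy_2$ of measurements and argue in three sub-steps. First, the clean entries $\langle\mA_i,\mX^\star\rangle$ are $\calN(0,\|\mX^\star\|_F^2)$ and at most a $p$-fraction of entries are corrupted, so a concentration estimate for order statistics pins $\mathrm{median}(|\vy_2|)$ inside a constant-factor window around $\sqrt{2/\pi}\,\|\mX^\star\|_F$; thus the screening threshold is of order $\beta\|\mX^\star\|_F$. Second, split the sum defining $\mE$ into its clean part ($i\in\Omega^c$) and corrupted part ($i\in\Omega$): for the clean part one uses $\E[\langle\mA_i,\mX^\star\rangle\mA_i]=\mX^\star$ (valid for i.i.d.\ Gaussian $\mA_i$, with no symmetry required here), notes that with $\beta = 2\log(r^{1/4}\overline c^{1/2}+20)$ the truncation discards only a negligible fraction of clean terms, so the bias it introduces is small, and then controls the deviation of the surviving clean average from $\mX^\star$ in operator norm by a matrix-concentration / $\epsilon$-net argument over the ambient $n$-dimensional space --- the source of the $\log n$ factor and of the $m\gtrsim\beta^2\overline c^2 nr^2\log n$ requirement. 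Third, every surviving corrupted term has $|y_i|\le\beta\cdot\mathrm{median}(|\vy_2|)\lesssim\beta\|\mX^\star\|_F$ and there are at most $pm$ of them, so the operator norm of their normalized sum is at most of order $\beta\|\mX^\star\|_F\sqrt{pn/m}$; substituting $\|\mX^\star\|_F = \sqrt r\,\overline c\,\sigma_r(\mX^\star)$, $p\lesssim\tfrac{1}{\sqrt r\,\overline c}$, and the bound on $m$ makes this again a small fraction of $\sigma_r(\mX^\star)$. Summing the three contributions gives the claimed bound on $\|\mE-\mX^\star\|_{\mathrm{op}}$.

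For half (ii), since $\mX^\star$ has rank $r$ and $\|\mE-\mX^\star\|_{\mathrm{op}}$ is a small fraction of $\sigma_r(\mX^\star)$, Weyl's inequality opens a gap between the $r$-th and $(r+1)$-th singular values of $\mE$, and Wedin's theorem bounds the principal-angle distances between the top-$r$ left and right singular subspaces of $\mE$ and of $\mX^\star=\mPhi\mSigma\mPsi^\T$ by a constant multiple of $\|\mE-\mX^\star\|_{\mathrm{op}}/\sigma_r(\mX^\star)$; because the same $\mSigma$ appears on both sides of $\mX^\star$, the left and right alignments are realized by a single $\mR\in\calO_r$, and stacking the two factor estimates gives $\dist(\mW_0,\calW)\lesssim\sigma_r(\mX^\star)$, hence the conclusion via~\Cref{thm:general rms}. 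I expect the main obstacle to be the second sub-step of half (i): one must argue \emph{simultaneously} that the random threshold $\beta\cdot\mathrm{median}(|\vy_2|)$ is large enough that discarding clean measurements costs essentially nothing, yet small enough that the surviving, arbitrarily large, corrupted measurements cannot distort $\mX^\star$ in operator norm. It is this coupling --- handled by conditioning on $\vy_2$ and union bounds over an $\epsilon$-net --- that forces the stronger requirements $m\gtrsim\beta^2\overline c^2 nr^2\log n$ and $p\lesssim\tfrac{1}{\sqrt r\,\overline c}$ compared with~\Cref{thm: L1-2 RIP for Gaussian} and~\Cref{thm:general rms}.
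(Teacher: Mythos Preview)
The paper does not give its own proof of this theorem: it is presented as ``essentially a restatement of~\cite[Theorem 3.3]{li2017nonconvex}'' and is simply cited, with no argument reproduced. Your proposal is therefore not comparable to a paper proof in the usual sense; rather, it is a plausible reconstruction of the argument in the cited reference, and as such the overall architecture---(i) concentration of the truncated backprojection $\mE$ around $\mX^\star$ in operator norm, then (ii) perturbation of singular factors---is exactly what one expects.

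One point in your step (ii) deserves a caution. Wedin's theorem controls the left and right singular subspaces of $\mE$ relative to those of $\mX^\star$, but each alignment is realized by its own orthogonal matrix; the assertion that ``because the same $\mSigma$ appears on both sides of $\mX^\star$, the left and right alignments are realized by a single $\mR\in\calO_r$'' is not correct as stated. To get a bound on $\dist(\mW_0,\calW)=\min_{\mR\in\calO_r}\|\mW_0-\mW^\star\mR\|_F$ you need a single rotation acting simultaneously on $\mU_0$ and $\mV_0$. The standard route is not Wedin directly on the factors but rather: from $\|\mE-\mX^\star\|_{\mathrm{op}}\le c\,\sigma_r(\mX^\star)$ deduce a bound on $\|\mU_0\mV_0^\T-\mX^\star\|_F$ (via Weyl plus the best-rank-$r$ approximation property) and note that $\mU_0^\T\mU_0=\mV_0^\T\mV_0=\mPi$ by construction, so $\|\mU_0^\T\mU_0-\mV_0^\T\mV_0\|_F=0$; then bound $\|\mW_0\mW_0^\T-\mW^\star\mW^{\star\T}\|_F$ from these two quantities (as in the proof of \Cref{lem:sharpness nonsymmetric}) and finally invoke \Cref{lem:Procrustes problem} to pass to $\dist(\mW_0,\calW)$. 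This fixes the gap without changing the spirit of your outline.
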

 }

\section{Experiments\label{sec:experiments}}

In this section we conduct experiments to illustrate the performance of the SubGM when applied to robust low-rank matrix recovery problems. The experiments on synthetic data show that the SubGM can exactly and efficiently recover the underlying low-rank matrix from its linear measurements even in the presence of outliers, thus corroborating the result in~\Cref{thm:rms}.

We generate the underlying low-rank matrix $\mX^\star = \mU^\star\mU^{\star \T}$ by generating $\mU^\star\in\R^{n\times r}$ with \emph{i.i.d.}  standard Gaussian entries. Similarly, we generate the entries of the $m$ sensing matrices $\mA_1,\ldots,\mA_m\in\R^{n\times n}$ (which define the linear measurement operator $\calA$) in an \emph{i.i.d.} fashion according to the standard Gaussian distribution. To generate the outlier vector $\vs^\star\in\R^m$, we first randomly select $pm$ locations. Then, we fill each of the selected location with an \emph{i.i.d.} mean 0 and variance 100 Gaussian entry, while the remaining locations are set to 0.  Here, $p$ is the ratio of the nonzero elements in $\vs^\star$. According to \eqref{eq:rms model}, the measurement vector $\vy$ is then generated by $\vy = \calA(\mX^\star)+ \vs^\star$; i.e.,  $y_i = \langle \mA_i, \mX^\star\rangle + s^\star_i$ for $i=1,\ldots,m$.

To illustrate the performance of the SubGM for recovering the underlying low-rank matrix $\mX^\star$ from $\vy$, we first set $n = 50$, $r = 5$, and $p = 0.3$. Throughout the experiments, we initialize the SubGM with a randomly generated standard Gaussian vector, as it gives similar practical performance as the one obtained by the truncated spectral method in \Cref{alg:intilization}. {We first run the SubGM for $10^4$ iterations using the geometrically diminishing step sizes $\mu_k = \rho^k\mu_0$, where the initial step size $\mu_0$ and decay rate $\rho$ are selected from $\{0.1,0.5,1,10\}$ and $\{0.80,0.81,0.82,\ldots,0.99\}$, respectively.
For each pair of parameters $(\mu_0,\rho)$, we plot the distance of the last iterate to $\calU$ (i.e., $\dist(\mU_{10^4}, \calU)$) in~\Cref{fig: rms-grid}. When the SubGM diverges, we simply set $\dist(\mU_{10^4}, \calU) = 10^4$ for the purpose of presenting all results in the same figure. As observed from \Cref{fig: rms-grid}, the SubGM diverges when $\mu_0$ is large, say, $\mu_0 = 10$. On the other hand, it converges to a global minimum when $\mu_0=1$, $\rho \in [0.93, 0.99]$ and $\mu_0=0.5$, $\rho \in [0.95, 0.99]$. It is worth noting that the SubGM converges to a global minimum when $\mu_0 =1, \rho = 0.93$, but not when $\mu_0 =0.5, \rho = 0.93$. This is consistent with~\Cref{thm:linear convergence of SubGM}, which shows that a larger initial step size $\mu_0$ allows for a smaller decay rate $\rho$. Such a phenomenon can also be observed in the case where $\mu_0 = 0.1$, for which the SubGM fails to find a global minimum even when $\rho \in [0.95, 0.99]$. 
  
In \Cref{fig: rms-gd}, we fix $\mu_0 = 1$ and plot the convergence behavior of the SubGM with $\rho \in\{0.9, 0.93,0.96,0.99\}$. As observed from the figure, when $\rho$ is not too small (say, larger than $0.93$), the distances $\{\dist(\mU_k,\calU)\}_{k\ge0}$ converge to $0$ at a linear rate, thus implying that the SubGM with geometrically diminishing step sizes can exactly recover the underlying low-rank matrix $\mX^\star$. We observe that a smaller $\rho$ gives faster convergence. This corroborates the results in~\Cref{thm:linear convergence of SubGM}, which guarantee that $\{\dist(\mU_k,\calU)\}_{k\ge0}$ decays at the rate $O(\rho^k)$ as long as $\rho$ is not too small (i.e., satisfying \eqref{eq:requirement on rho}). We also consider the SubGM with the Polyak step size rule~\cite{P69}, which, in the context of~\eqref{eq:rms factorization}, is given by $\mu_k = \tfrac{f(\mU_k) - f^\star}{\|\vd_k\|^2}$, where $f^\star$ is the optimal value of~\eqref{eq:rms factorization} and $\vd_k\in \partial f(\mU_k)$ (the method terminates when $\vd_k = \vzero$). The convergence rate of such method for sharp weakly convex minimization has been analyzed in~\cite{davis2018subgradient}. We plot the convergence behavior of the SubGM with the Polyak step size rule in~\Cref{fig: rms-gd}, which also shows its linear convergence. However, we note that the Polyak step size rule is generally not easy to implement, as it requires the knowledge of $f^\star$.
 
Then, we consider the SubGM with piecewise geometrically diminishing step sizes, which dates as far back as to the work~\cite{SS68} and has recently been used in~\cite{Zhu18DPCP}. Specifically, we set $\mu_k = \frac{1}{2^{\lfloor k/N\rfloor}}$ with $N\in \{50, 100, 200\}$. Compared to the vanilla strategy~\eqref{eq:linear step size}, the piecewise strategy allows for a smaller decay rate $\rho$ (here, we use $\rho=\tfrac{1}{2}$) and keeps the same step size for $N$ iterations.
As can be seen from \Cref{fig: rms-pgd}, the method converges at a piecewise linear rate. Nevertheless, we observe that the piecewise strategy is slightly less efficient than the vanilla one in general. 

We also consider a modified backtracking line search strategy in \cite{nocedal2006numerical} to choose the step size. Although such a strategy is generally designed for smooth problems, it is empirically used in \cite{Zhu18DPCP} for a nonsmooth nonconvex optimization problem to achieve fast convergence.  Inspired by the strategy of choosing geometrically diminishing step sizes, we modify the backtracking line search strategy in \cite{nocedal2006numerical} by (i) setting $\mu_k = \mu_{k-1}$ and (ii) reducing it according to $\mu_k \leftarrow \mu_k \rho$ until the condition $f(\mU_k - \mu_k \vd_k) > f(\mU_k) - \eta \mu_k \|\vd_k\|$ is satisfied. We set $\eta = 10^{-3}$, $\rho = 0.85$, $\mu_0=1$ and plot the convergence behavior of the resulting method in \Cref{fig: rms-line}. As can be seen from the figure, the method converges at a linear rate. Moreover, we observe empirically that the choice of parameters above works for other settings (i.e., different $n, r, m, p$). We leave the convergence analysis of the SubGM with backtracking line search as a future work.
}

\begin{figure}[!htp]
		\begin{subfigure}{0.49\linewidth}
			\centerline{
				\includegraphics[width=1\linewidth]{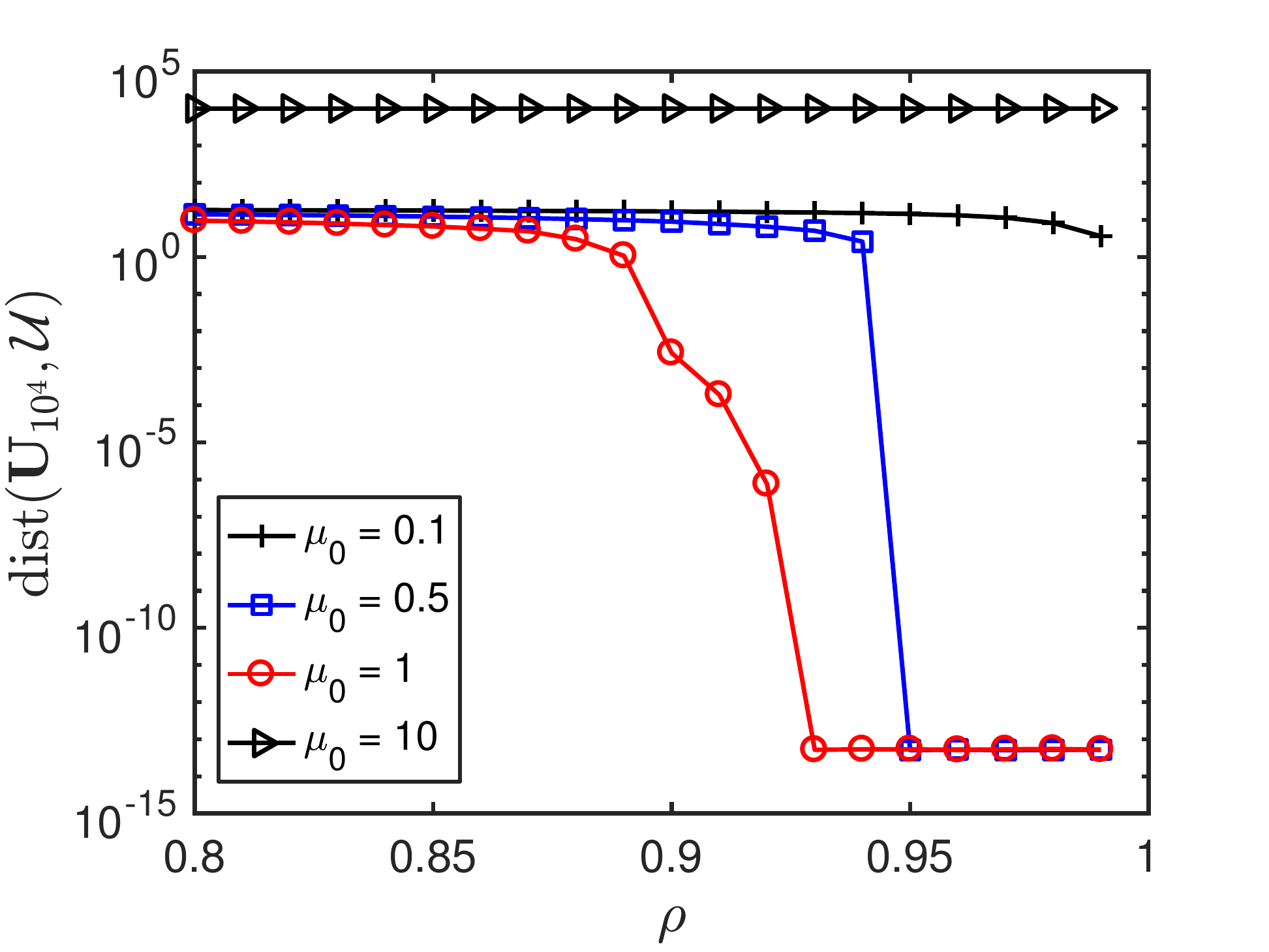}}
			\caption{Distance of last iterate to optimal set with $\mu_0 \in \{0.1,0.5,1,10\}$ and $\rho \in \{0.80,0.81,\ldots,0.99\}$}
			\label{fig: rms-grid}
		\end{subfigure}
		\begin{subfigure}{0.49\linewidth}
			\centerline{
				\includegraphics[width=1\linewidth]{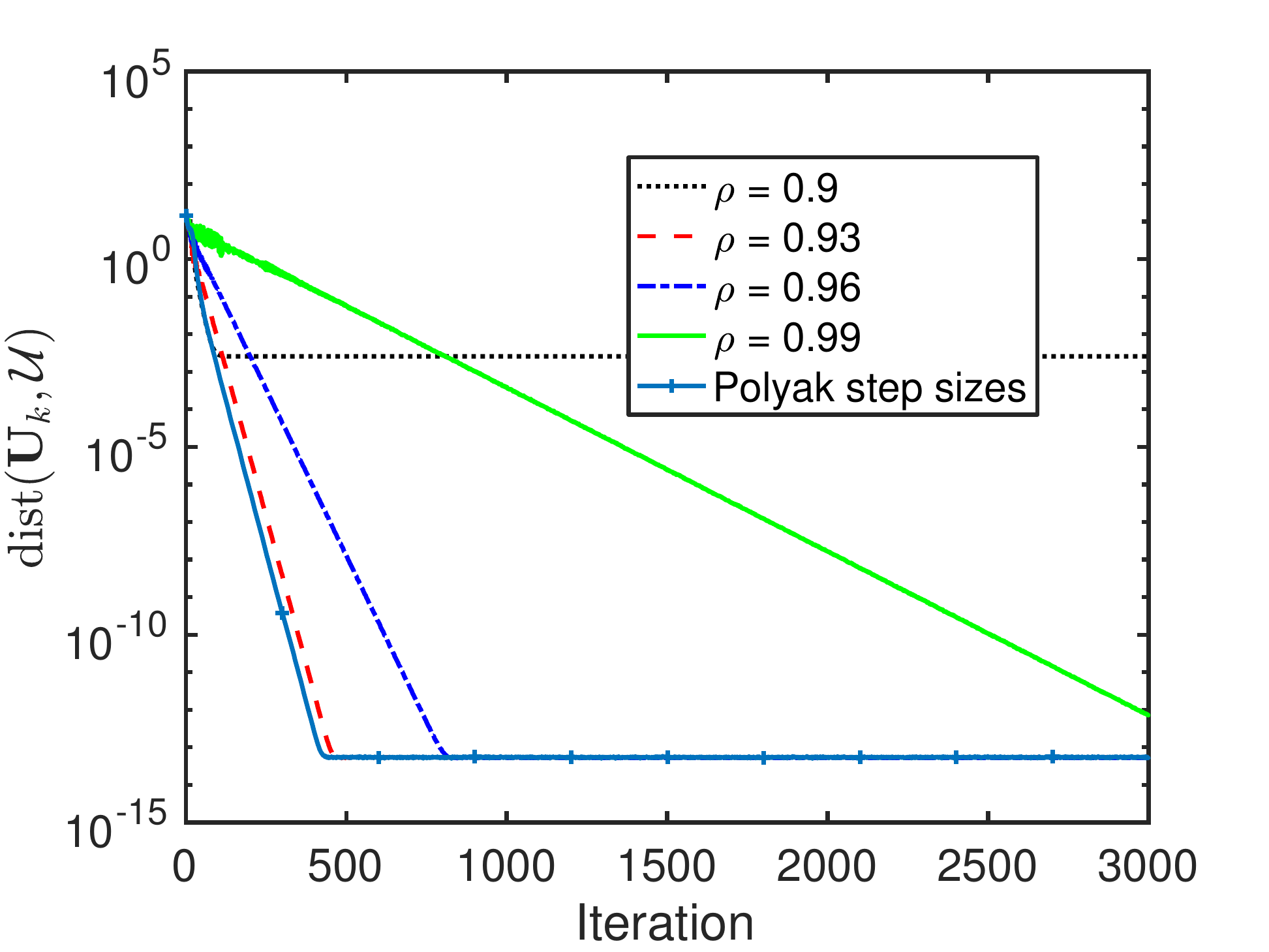}}
			\caption{Convergence of SubGM with geometrically diminishing ($\mu_k=\rho^k$, $\rho\in\{0.90,0.93,0.96,0.99\}$) and Polyak step sizes}
			\label{fig: rms-gd}
		\end{subfigure}\\
	\vfill
	    \begin{subfigure}{0.49\linewidth}
	    	\centerline{
	    		\includegraphics[width=1\linewidth]{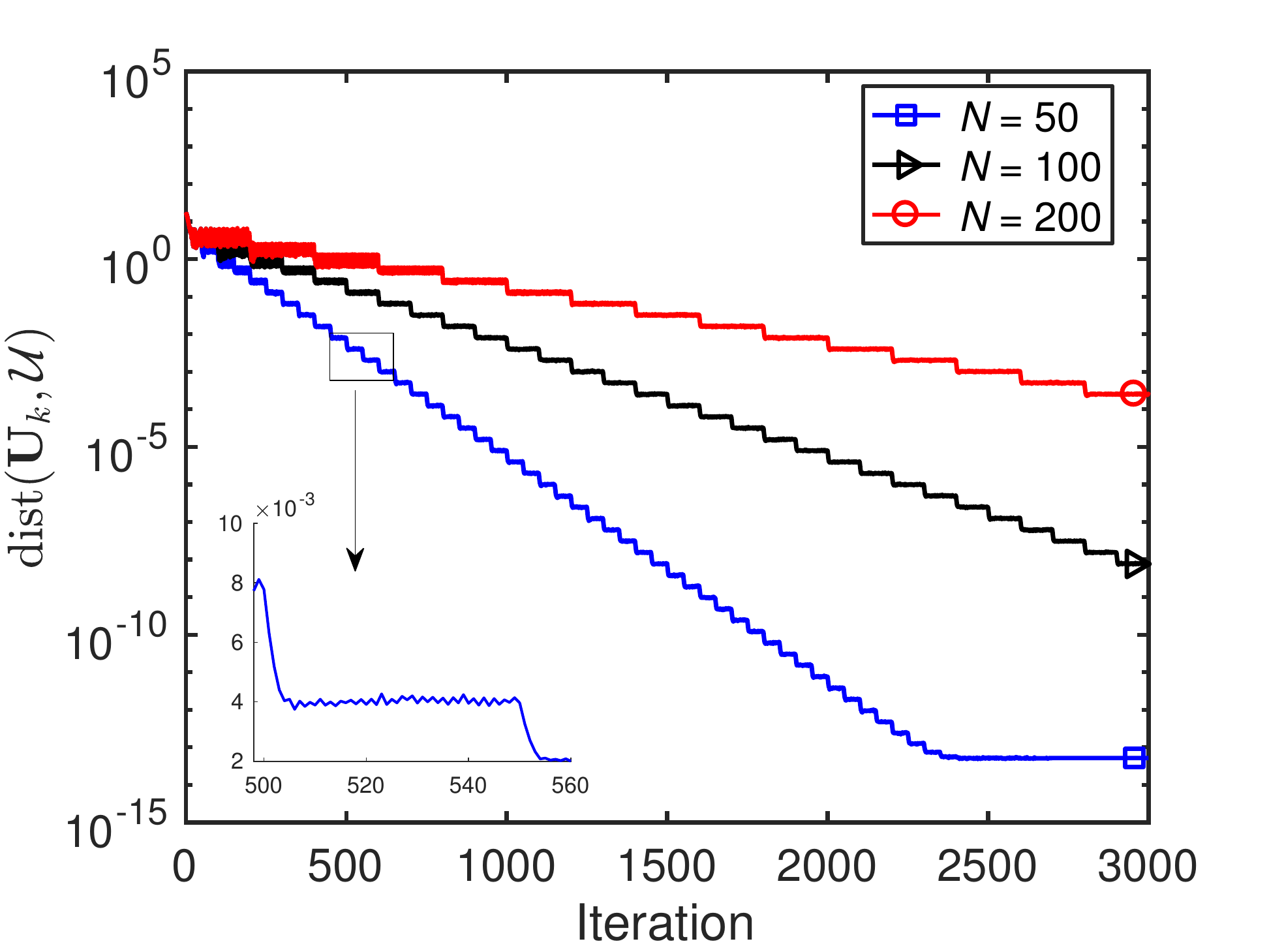}}
	    	\caption{Convergence of SubGM with piecewise geometrically diminishing ($\mu_k = \tfrac{1}{2^{\lfloor k/N \rfloor}}$, $N\in\{50,100,200\}$) step sizes}
	    	\label{fig: rms-pgd}
	    \end{subfigure}
	    \begin{subfigure}{0.49\linewidth}
	    	\centerline{
	    		\includegraphics[width=1\linewidth]{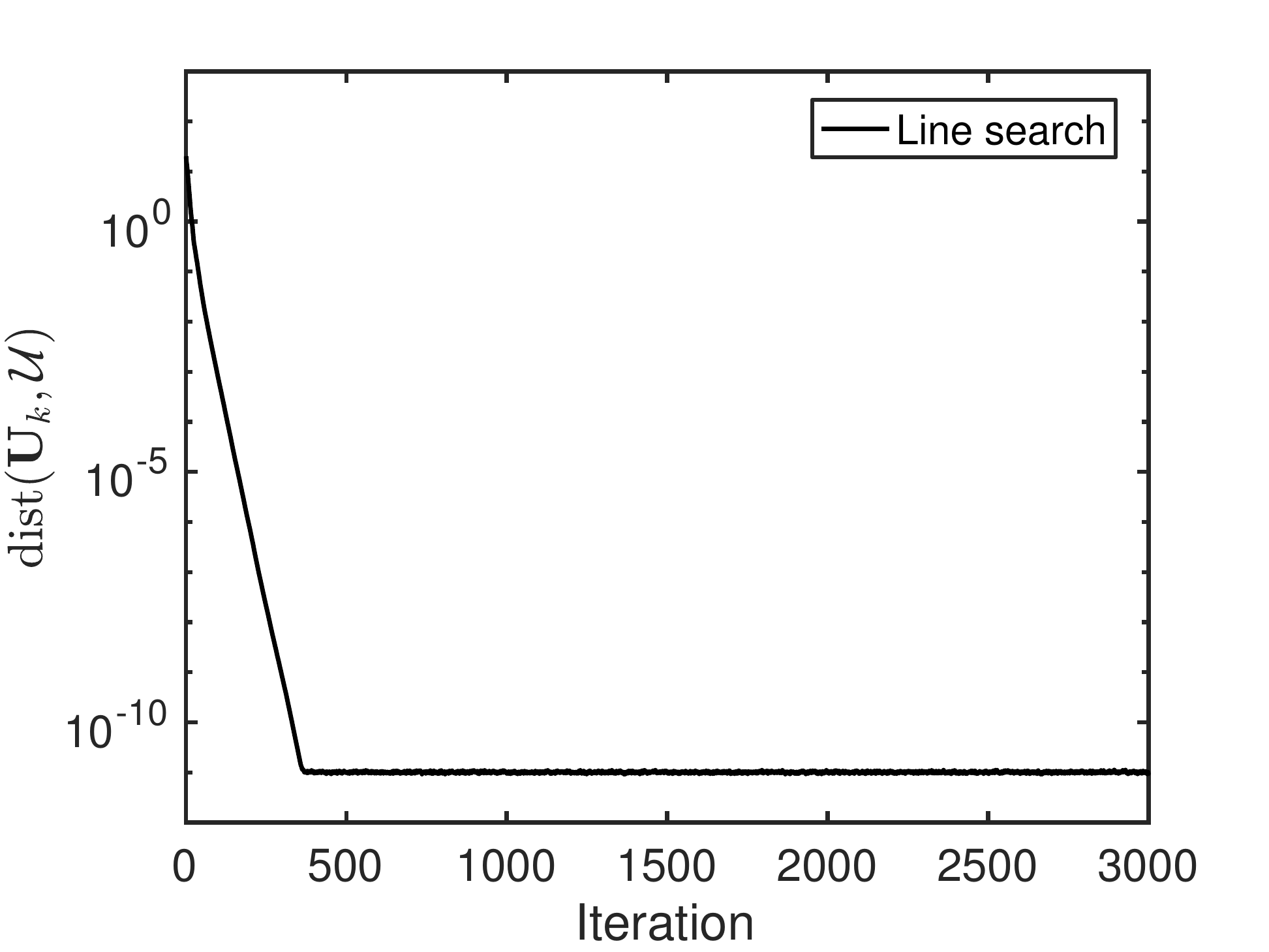}}
	    	\caption{Convergence of SubGM with modified backtracking line search ($\eta=10^{-3}$, $\rho=0.85$, $\mu_0=1$)}
	    	\label{fig: rms-line}
	    \end{subfigure}
	\caption{{Behavior of SubGM when applied to robust low-rank matrix recovery with $n=50$, $r=5$, $m = 5nr$, and $p = 0.3$.
}}\label{fig: rms}
\end{figure}

Next, we study the performance of the SubGM with geometrically diminishing step sizes by varying the outlier ratio $p$ and the number of measurements $m$. In these experiments we run the SubGM for $2\times 10^3$ iterations with initial step size $\mu_0 = 1$ and decay rate $\rho = 0.99$. We also conduct experiments on the median-truncated gradient descent (MTGD) with the setting used in~\cite{li2017nonconvex}. In particular, we initialize the MTGD with the truncated spectral method in \Cref{alg:intilization} and run it for $10^4$ iterations. For each pair of $p$ and $m$, 10 Monte Carlo trials are carried out, and for each trial we declare the recovery to be successful if the relative reconstruction error satisfies $\frac{\|\widehat \mX - \mX^\star\|_F}{\|\mX^\star\|_F} \leq 10^{-6},$ where $\widehat \mX$ is the reconstructed matrix.  \Cref{fig: rms phase transition} displays the phase transition of MTGD and SubGM using the average result of 10 independent trials. In this figure, white indicates successful recovery while black indicates failure.  It is of interest to observe that when the outlier ratio $p$ is small, both the SubGM and MTGD can exactly recover the underlying low-rank matrix $\mX^\star$ even with only $m = 2nr$ measurements. On the other hand, given sufficiently large number of measurements (say $m = 7nr$), the SubGM is able to exactly recover the ground-truth matrix even when half of the measurements are corrupted by outliers, while the MTGD fails in this case. In particular, by comparing \Cref{fig:phase MTGD} with \Cref{fig:phase SGM}, we observe that the SubGM is more robust to outliers than MTGD, especially in the case of high outlier ratio. We also observe from \Cref{fig: rms phase transition} that with more measurements, the robust low-rank matrix recovery formulation \eqref{eq:rms factorization} can tolerate not only more outliers but also a higher fraction of outliers. This provides further explanation to the observations made after the proof of \Cref{lem:sharpness rms}.

\begin{figure}[!htp]
		\begin{subfigure}{0.48\linewidth}
	\centerline{
		\includegraphics[width=2.5in]{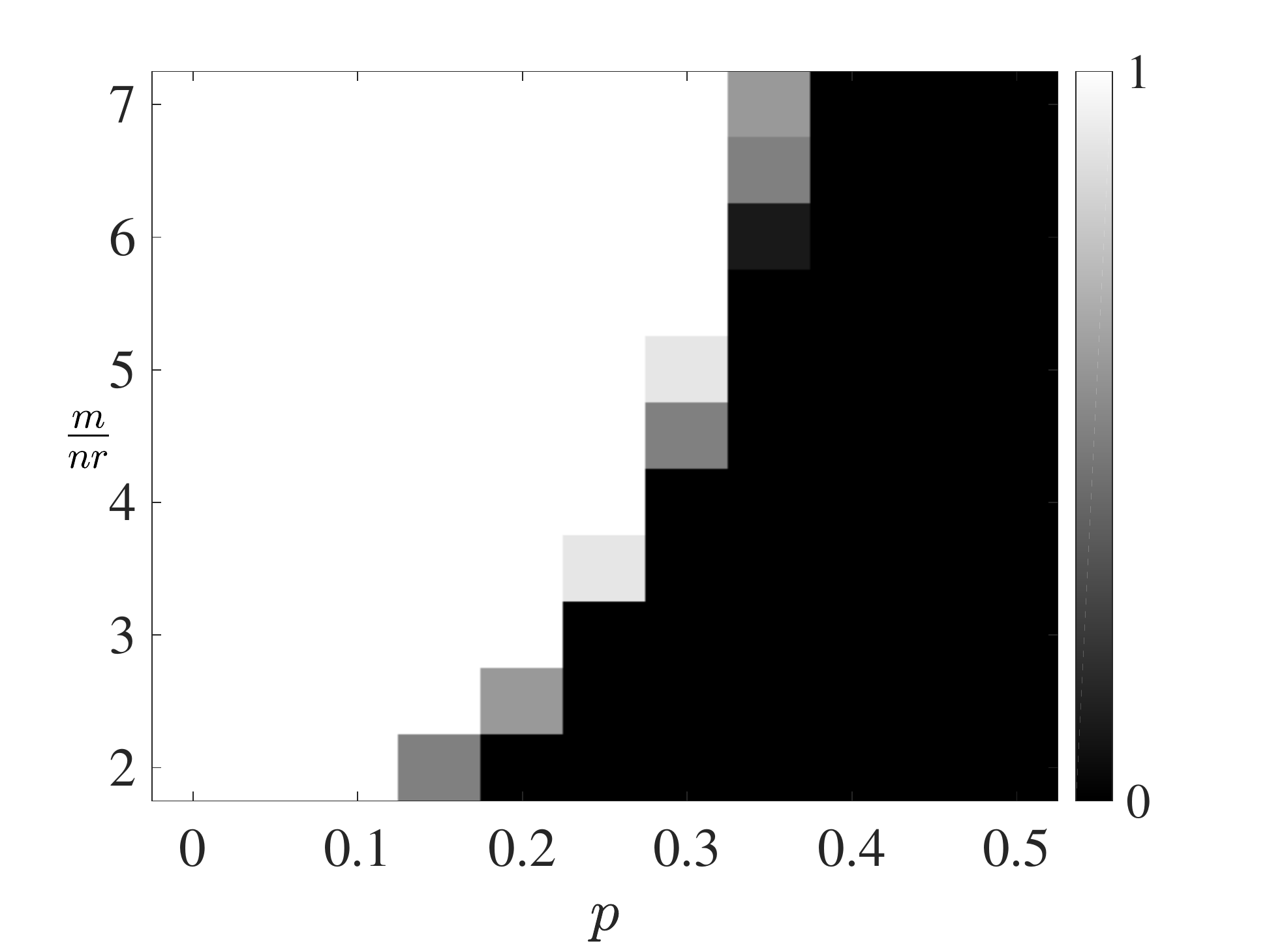}}
\caption{}\label{fig:phase MTGD}
	\end{subfigure}
		\begin{subfigure}{0.48\linewidth}
	\centerline{
		\includegraphics[width=2.5in]{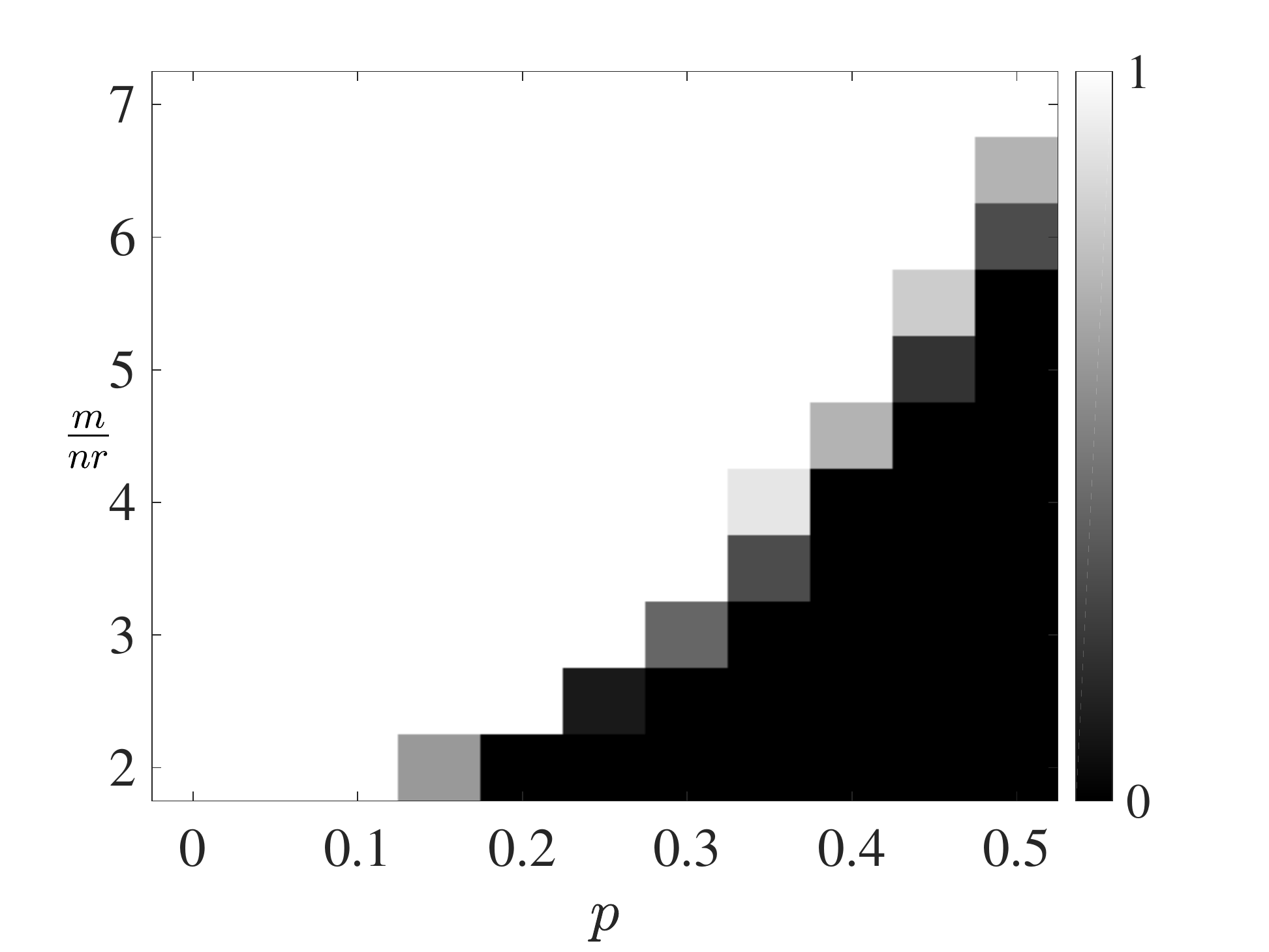}}
	\caption{}\label{fig:phase SGM}
\end{subfigure}
	\caption{Phase transition of robust low-rank matrix recovery using (a) median-truncated gradient descent (MTGD) \cite{li2017nonconvex} and (b) SubGM. Here, we fix $n = 50$, $r = 5$ and vary the outlier ratio $p$ from $0$ to $0.5$. In addition, we vary $m$ so that the ratio $\frac{m}{nr}$ varies from $2$ to $7$. Successful recovery is indicated by white and failure by black. Results are averaged over 10 independent trials.}\label{fig: rms phase transition}
\end{figure}

\section{Conclusion}
In this paper we gave a nonsmooth nonconvex formulation of the problem of recovering a rank-$r$ matrix $\mX^\star \in \R^{n_1\times n_2}$ from corrupted linear measurements. The formulation enforces the low-rank property of the solution by using a factored representation of the matrix variable and employs an $\ell_1$-loss function to robustify the solution against outliers. We showed that even when close to half of the measurements are arbitrarily corrupted, as long as certain measurement operators arising from the measurement model satisfy the $\ell_1/\ell_2$-RIP, the formulation will be sharp and weakly convex. Consequently, the ground-truth matrix can be exactly recovered from any of its global minimum. Moreover, when suitably initialized, the SubGM with geometrically diminishing step sizes will converge to the ground-truth matrix at a linear rate.

As the reader may note, our numerical experiments in \Cref{sec:experiments} suggest that the SubGM can efficiently find the underlying low-rank matrix even with a random initialization. This raises the question of whether there are spurious local minima in our formulation of the robust low-rank matrix recovery problem. Another question is whether the SubGM with a random initialization can escape saddle points and converge to a local minimum (which is also a global minimum if there is no spurious local minimum), just like the gradient descent for smooth problems~\cite{lee2016gradient}. We leave the study of these questions as future work.


\section{Acknowledgment} We thank the Associate Editor and two anonymous reviewers for their detailed and helpful comments.

\bibliographystyle{siamplain}
\bibliography{nonconvex}

\appendix
\section{Proof of \Cref{thm: L1-2 RIP for Gaussian}}
\label{sec:prf l1l2 rip}
\subsection{Preliminaries}
We say that a random variable $X$ is sub-Gaussian if
\[ 	\P{|X|>t}\leq \exp\left(1-\frac{t^2}{K_1^2}\right), \ \forall \ t\geq0 \]
for some constant $K_1>0$. This is equivalent to
\begin{equation} \label{eq:equivalence for subgaussian}
(\E[|X|^p])^{1/p}\leq K_2 \sqrt{p}, \ \forall \ p\geq 1
\end{equation}
for some constant $K_2>0$. The constants $K_1$ and $K_2$ differ from each other by at most an absolute constant factor; see~\cite[Lemma 5.5]{vershynin2010introduction}. The sub-Gaussian norm of a sub-Gaussian random variable $X$ is defined as
\[
\|X\|_{\psi_2} = \sup_{p\geq 1} \left\{ p^{-1/2} \E[|X|^p]^{1/p} \right\}.
\]
We then have the following Hoeffding-type inequalty:
\begin{lem}[{\cite[Proposition 5.10]{vershynin2010introduction}}]
	Let $X_{1},\ldots,X_m$ be independent sub-Gaussian random variables with $\E[X_i] = 0$ for $i=1,\ldots,m$ and $K = \max_{i\in\{1,\ldots,m\}} \|X_i\|_{\psi_2}$. Then, for any $t>0$, we have
	\e\label{eq:Hoeffding type inequality}
	\P{\frac{1}{m}\left|\sum_{i=1}^m X_i\right| >t }  \leq 2 \exp\left(-\frac{cmt^2}{K^2}\right)
	\ee
	for some constant $c>0$.
	\label{lem:Hoeffding type inequality}\end{lem}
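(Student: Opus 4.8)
The plan is to obtain this tail estimate by the classical Chernoff (exponential Markov) argument, the only substantive input being a sub-Gaussian bound on the moment generating functions of the $X_i$'s, which I would establish first. For any centered sub-Gaussian $X$ there is an absolute constant $c_1>0$ such that $\E[e^{\lambda X}]\le\exp(c_1\lambda^2\|X\|_{\psi_2}^2)$ for all $\lambda\in\R$. To see this, expand $e^{\lambda X}=\sum_{p\ge0}\lambda^p X^p/p!$, take expectations, and discard the $p=1$ term (since $\E[X]=0$); for $p\ge2$ the definition of the $\psi_2$-norm gives $|\E[X^p]|\le\E[|X|^p]\le(\sqrt{p}\,\|X\|_{\psi_2})^p$, so with Stirling's inequality $p!\ge(p/e)^p$ the $p$-th term of the series is at most $(e\|X\|_{\psi_2}|\lambda|/\sqrt{p})^p$. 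Summing this essentially geometric tail yields $\E[e^{\lambda X}]\le 1+c_1\lambda^2\|X\|_{\psi_2}^2\le\exp(c_1\lambda^2\|X\|_{\psi_2}^2)$ for $|\lambda|\lesssim 1/\|X\|_{\psi_2}$, and the bound extends to all $\lambda$ after enlarging $c_1$ (this is also one of the equivalent characterizations recorded in~\cite[Lemma 5.5]{vershynin2010introduction}).

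With this in hand, independence and $\E[X_i]=0$ give
\[ \E\left[\exp\!\left(\lambda\sum_{i=1}^m X_i\right)\right]=\prod_{i=1}^m\E\left[e^{\lambda X_i}\right]\le\exp\!\left(c_1\lambda^2\sum_{i=1}^m\|X_i\|_{\psi_2}^2\right)\le\exp\!\left(c_1\lambda^2 mK^2\right). \]
Hence, for $s>0$, Markov's inequality applied to $\exp(\lambda\sum_i X_i)$ yields $\P{\sum_{i=1}^m X_i>s}\le\exp(-\lambda s+c_1\lambda^2 mK^2)$; minimizing the exponent over $\lambda>0$ (at $\lambda=s/(2c_1 mK^2)$) gives $\P{\sum_{i=1}^m X_i>s}\le\exp(-s^2/(4c_1 mK^2))$. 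Setting $s=mt$ and running the identical argument with $X_i$ replaced by $-X_i$, a union bound produces $\P{\tfrac{1}{m}\left|\sum_{i=1}^m X_i\right|>t}\le 2\exp(-cmt^2/K^2)$ with $c=1/(4c_1)$, which is the claimed inequality.

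The one delicate point is the first step: the $\psi_2$-norm is defined through moments, whereas the naive Taylor expansion only controls $\E[e^{\lambda X}]$ for $|\lambda|$ small relative to $1/\|X\|_{\psi_2}$, so one must argue separately that the clean quadratic form is valid for \emph{all} $\lambda$ (for instance via a crude bound for large $|\lambda|$, via symmetrization, or by simply quoting the equivalence lemma). Once the global MGF bound is available, the Chernoff computation in the second step is entirely routine.
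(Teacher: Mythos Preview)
Your argument is correct and is exactly the standard Chernoff/MGF route: bound $\E[e^{\lambda X_i}]$ via the moment characterization of the $\psi_2$-norm, multiply by independence, and optimize over $\lambda$. The paper does not supply its own proof of this lemma---it simply cites \cite[Proposition 5.10]{vershynin2010introduction}---and the proof there proceeds along precisely the lines you describe, including the appeal to the equivalence \cite[Lemma 5.5]{vershynin2010introduction} to obtain the global MGF bound. So there is nothing to compare: your proposal matches the cited source.
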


We also need the following result on the covering number of the set of low-rank matrices:
\begin{lem}[{\cite[Lemma 3.1]{candes2011tight}}] Let $\setS_r = \{\mX\in\R^{n\times n}: \|\mX\|_F = 1, \rank(\mX)\leq r\}$. Then, there exists an $\epsilon$-net $\overline \setS_{r,\epsilon} \subset \setS_r$ with respect to the Frobenius norm (i.e., for any $\mX\in \setS_{r}$, there exists an $\overline \mX\in\overline \setS_{r,\epsilon}$ such that $\|\mX - \overline\mX\|_F\leq \epsilon$) satisfying
	\e
	|\overline \setS_{r,\epsilon}|\leq \left(\frac{9}{\epsilon}\right)^{(2n+1)r}.
	\label{eq:number in Sr}\ee  \label{lem:covering number}
\end{lem}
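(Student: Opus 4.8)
The plan is to construct $\overline\setS_{r,\epsilon}$ from the singular value decomposition, by covering the three factors separately at resolution $\bar\epsilon:=\epsilon/3$ and taking the product. Every $\mX\in\setS_r$ can be written as $\mX=\mU\mSigma\mV^\T$ with $\mU,\mV\in\R^{n\times r}$ having orthonormal columns and $\mSigma\in\R^{r\times r}$ diagonal; since $\mU,\mV$ are orthonormal frames, $\|\mX\|_F=\|\mSigma\|_F$, so $\|\mX\|_F=1$ forces the diagonal of $\mSigma$ onto the unit sphere of $\R^r$. First I would cover (i) the set of such diagonal $\mSigma$ in Frobenius norm, and (ii)--(iii) the set of orthonormal frames $\calO_{n,r}:=\{\mU\in\R^{n\times r}:\mU^\T\mU=\mId\}$ in the operator norm $\|\cdot\|$, once for the left factor and once for the right.

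For the cardinality bounds I would use a standard volumetric (packing) argument: if $K$ is a subset of the unit ball of a $d$-dimensional normed space, then a maximal $\bar\epsilon$-separated subset $P\subseteq K$ is an $\bar\epsilon$-net of $K$ with centers lying \emph{in} $K$, and comparing the volumes of the disjoint radius-$\bar\epsilon/2$ balls around $P$ (all contained in $(1+\bar\epsilon/2)$ times the unit ball) yields $|P|\le(1+2/\bar\epsilon)^d\le(3/\bar\epsilon)^d$ for $\bar\epsilon\le1$. Applying this with $d=r$ to (i) gives a net $\mathcal N_\Sigma$ of size at most $(3/\bar\epsilon)^r$ whose members again have unit Frobenius norm; applying it with $d=nr$ to (ii) and (iii), with $\calO_{n,r}$ viewed inside the operator-norm unit ball of $\R^{n\times r}$, gives nets $\mathcal N_U,\mathcal N_V$ of size at most $(3/\bar\epsilon)^{nr}$ each, whose members are again orthonormal frames. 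Because the centers stay inside the respective sets, each product $\mU'\mSigma'\mV'^\T$ with $(\mU',\mSigma',\mV')\in\mathcal N_U\times\mathcal N_\Sigma\times\mathcal N_V$ has rank at most $r$ and Frobenius norm $\|\mSigma'\|_F=1$, hence lies in $\setS_r$. Setting $\overline\setS_{r,\epsilon}$ to be the collection of these products gives a subset of $\setS_r$ with
\[
|\overline\setS_{r,\epsilon}|\le(3/\bar\epsilon)^{r}(3/\bar\epsilon)^{nr}(3/\bar\epsilon)^{nr}=(3/\bar\epsilon)^{(2n+1)r}=(9/\epsilon)^{(2n+1)r},
\]
which is exactly the asserted bound.

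To see that this is an $\epsilon$-net, I would fix $\mX=\mU\mSigma\mV^\T\in\setS_r$, pick $\mU',\mSigma',\mV'$ from the three nets with $\|\mU-\mU'\|\le\bar\epsilon$, $\|\mSigma-\mSigma'\|_F\le\bar\epsilon$, $\|\mV-\mV'\|\le\bar\epsilon$, and telescope:
\[
\|\mX-\mU'\mSigma'\mV'^\T\|_F\le\|(\mU-\mU')\mSigma\mV^\T\|_F+\|\mU'(\mSigma-\mSigma')\mV^\T\|_F+\|\mU'\mSigma'(\mV-\mV')^\T\|_F.
\]
Combining $\|\mA\mB\|_F\le\|\mA\|\,\|\mB\|_F$ and $\|\mA\mB\|_F\le\|\mA\|_F\,\|\mB\|$ with the fact that orthonormal frames have unit operator norm and $\|\mSigma\|_F=\|\mSigma'\|_F=1$, each of the three terms is at most $\bar\epsilon$, so $\|\mX-\mU'\mSigma'\mV'^\T\|_F\le3\bar\epsilon=\epsilon$.

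The main obstacle is the covering estimate for the orthonormal-frame set $\calO_{n,r}$: I must cover it in the operator norm while keeping the net points \emph{inside} $\calO_{n,r}$ (so the products genuinely land on the unit sphere of rank-$r$ matrices and the net is a subset of $\setS_r$) and, simultaneously, maintain the clean constant $3/\bar\epsilon$. The maximal-separated-set/volumetric argument above resolves both requirements at once, since it produces centers in the set and yields the bound $(3/\bar\epsilon)^{nr}$; the remaining bookkeeping (the telescoping and the product of cardinalities) is routine and produces the constant $9$ precisely through the choice $\bar\epsilon=\epsilon/3$.
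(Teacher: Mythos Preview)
Your proposal is correct and follows precisely the standard argument of Cand\`es--Plan (the cited reference), which this paper invokes without reproducing: SVD-based factor-wise covering at resolution $\bar\epsilon=\epsilon/3$, a volumetric packing bound that keeps the net points inside each factor set, and the three-term telescoping to recombine. The one minor point you leave implicit is the restriction $\bar\epsilon\le 1$ (i.e., $\epsilon\le 3$) needed for $(1+2/\bar\epsilon)\le 3/\bar\epsilon$; for larger $\epsilon$ the claim is trivial since a single point of $\setS_r$ already serves as an $\epsilon$-net.
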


\subsection{Isometry Property of a Given Matrix}
\begin{lem}\label{lem: L1-2 RIP fxied X}
	Suppose that the matrices $\mA_1,\ldots,\mA_m \in \R^{n\times n}$ defining the linear measurement operator $\calA$ have \emph{i.i.d.} standard Gaussian entries. Then, for any $\mX \in \R^{n\times n}$ and $0<\delta<1$, there exists a constant $c_1>0$ such that with probability exceeding $1-2\exp(-c_1\delta^2 m)$, we have
	\e \label{eq:L1-2 RIX fixed X}
	\left( \sqrt{\frac{2}{\pi}} - \delta \right) \|\mX\|_F \leq	\frac{1}{m}  \|\calA( \mX)\|_1  \leq  \left( \sqrt{\frac{2}{\pi}} + \delta \right) \|\mX\|_F.
	\ee
\end{lem}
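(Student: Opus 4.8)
The plan is to reduce \eqref{eq:L1-2 RIX fixed X} to the standard concentration bound for i.i.d.\ sub-Gaussian random variables stated in \Cref{lem:Hoeffding type inequality}. By homogeneity we may assume $\|\mX\|_F = 1$ (the case $\mX = \mzero$ being trivial), since both sides of \eqref{eq:L1-2 RIX fixed X} are linear in $\|\mX\|_F$. Write $a_i := \langle \mA_i, \mX \rangle$ for $i = 1,\ldots,m$. Since each $\mA_i$ has i.i.d.\ standard Gaussian entries, $a_i = \sum_{j,k} (\mA_i)_{jk}\mX_{jk}$ is a linear combination of i.i.d.\ $N(0,1)$ variables with coefficient vector equal to the entries of $\mX$; hence $a_i \sim N(0,\|\mX\|_F^2) = N(0,1)$, and the $a_i$ are independent across $i$. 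Consequently $\tfrac1m\|\calA(\mX)\|_1 = \tfrac1m\sum_{i=1}^m |a_i|$ is an average of i.i.d.\ copies of the half-normal variable $|a_1|$, whose mean is $\E|a_1| = \sqrt{2/\pi}$ (the elementary Gaussian integral $\frac{1}{\sqrt{2\pi}}\int_{-\infty}^{\infty}|t|e^{-t^2/2}\,\mathrm dt = \sqrt{2/\pi}$).

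Next I would center the summands. Set $X_i := |a_i| - \sqrt{2/\pi}$, so that $\E[X_i] = 0$ and the $X_i$ are independent. Each $X_i$ is sub-Gaussian with $\|X_i\|_{\psi_2} \le K$ for some absolute constant $K$: indeed $|a_i|$ has the same tail as the standard Gaussian $a_i$, so $\||a_i|\|_{\psi_2}$ is an absolute constant, and subtracting the constant $\sqrt{2/\pi}$ changes the $\psi_2$-norm by at most an absolute constant factor (equivalently, the crude bound $|X_i| \le |a_i| + \sqrt{2/\pi}$ and the equivalent characterization \eqref{eq:equivalence for subgaussian} of sub-Gaussianity suffice). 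Applying \Cref{lem:Hoeffding type inequality} to $X_1,\ldots,X_m$ with $t = \delta$ then gives
\[
\P{\left| \tfrac1m \sum_{i=1}^m |a_i| - \sqrt{\tfrac{2}{\pi}} \right| > \delta} = \P{\tfrac1m\left|\sum_{i=1}^m X_i\right| > \delta} \le 2\exp\!\left(-\frac{c m \delta^2}{K^2}\right).
\]
Setting $c_1 := c/K^2$ and recalling $\|\mX\|_F = 1$, the complement of this event is precisely the event that \eqref{eq:L1-2 RIX fixed X} holds, which finishes the argument.

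I do not expect any serious obstacle here; the single point that warrants a line of care is the assertion that the centered half-normal variable $X_i$ is sub-Gaussian with an absolute $\psi_2$-norm, but as noted this follows immediately from \eqref{eq:equivalence for subgaussian} together with the triangle inequality for $\|\cdot\|_{\psi_2}$ (or from $|X_i| \le |a_i| + \sqrt{2/\pi}$). It is worth emphasizing that this lemma uses no structural assumption on $\mX$ whatsoever—in particular no rank constraint—which is exactly what is needed later, where the rank-$2r$ restriction in \Cref{thm: L1-2 RIP for Gaussian} will be recovered by a union bound over an $\epsilon$-net of the low-rank sphere (cf.\ \Cref{lem:covering number}). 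The hypothesis $0 < \delta < 1$ is used only to keep the probability bound nonvacuous, and the $\delta^2 m$ scaling of the exponent is inherited directly from the quadratic Hoeffding scaling in \Cref{lem:Hoeffding type inequality}.
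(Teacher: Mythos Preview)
Your proposal is correct and follows essentially the same approach as the paper: recognize that each $\langle \mA_i,\mX\rangle$ is Gaussian with variance $\|\mX\|_F^2$, compute the half-normal mean $\sqrt{2/\pi}\|\mX\|_F$, center, verify sub-Gaussianity of the centered variables, and apply \Cref{lem:Hoeffding type inequality}. The only cosmetic difference is that you normalize to $\|\mX\|_F=1$ up front and argue sub-Gaussianity via the triangle inequality for $\|\cdot\|_{\psi_2}$, whereas the paper keeps $\|\mX\|_F$ explicit and writes out a direct tail bound for $Z_i = |\langle \mA_i,\mX\rangle| - \E[|\langle \mA_i,\mX\rangle|]$ before invoking~\eqref{eq:equivalence for subgaussian}.
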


\begin{proof}[Proof of \Cref{lem: L1-2 RIP fxied X}]
	Since $\mA_i$ has \emph{i.i.d.} standard Gaussian entries, the random variable $\langle \mA_i,\mX \rangle$ is Gaussian with mean zero and variance $\|\mX\|_F^2$. It follows that
	\e
	\E[|\langle \mA_i,\mX\rangle| ] =  \sqrt{\frac{2}{\pi}}\|\mX\|_F, \ \ \E[\|\calA(\mX)\|_1 ] = m\sqrt{\frac{2}{\pi}}\|\mX\|_F.
	\label{eq:mean of <A,X>}\ee
	Now, let $Z_i = |\langle \mA_i,\mX\rangle| - \E[|\langle \mA_i,\mX\rangle| ]$, which satisfies $\E[Z_i] = 0$. We claim that $Z_i$ is a sub-Gaussian random variable. To establish the claim, it suffices to bound the sub-Gaussian norm of $Z_i$. Towards that end, we first observe that
	\[
	\P{|\langle \mA_i,\mX\rangle|>t}\leq 2\exp\left(-\frac{t^2}{2\|\mX\|_F^2}\right).
	\]
	Together with \eqref{eq:mean of <A,X>}, this implies that for any $t> \E[|\langle \mA_i,\mX\rangle|]$,
	\begin{align*}
	\P{|Z_i|>t} &= \P{|\langle \mA_i,\mX\rangle|>t + \E[|\langle \mA_i,\mX\rangle|]} + \P{|\langle \mA_i,\mX\rangle|<-t + \E[|\langle \mA_i,\mX\rangle|]} \\
	& \leq 2 \exp\left(-\frac{\left( t + \E[|\langle \mA_i,\mX\rangle|] \right)^2}{2\|\mX\|_F^2}\right) +  \P{|\langle \mA_i,\mX\rangle|<-t + \E[|\langle \mA_i,\mX\rangle|]}\\
	&  \leq 2 \exp\left(-\frac{\left( t + \E[|\langle \mA_i,\mX\rangle|] \right)^2}{2\|\mX\|_F^2}\right) \leq \exp\left( 1 - \frac{t^2}{\|\mX\|_F^2}\right),
	\end{align*}
	where the second inequality is from the fact that $\P{|\langle \mA_i,\mX\rangle|<-t + \E[|\langle \mA_i,\mX\rangle|]} = 0$ for all $t> \E[|\langle \mA_i,\mX\rangle|]$. Since $\exp\left( 1 - \frac{t^2}{\|\mX\|_F^2} \right) \geq 1$ for all $t\leq E[|\langle \mA_i,\mX\rangle|] = \sqrt{\frac{2}{\pi}}\|\mX\|_F$, we then have
	\[
	\P{|Z_i|>t}  \leq \exp\left(1 - \frac{t^2}{\|\mX\|_F^2}\right), \ \forall t\geq 0.
	\]
	This, together with \eqref{eq:equivalence for subgaussian}, implies that
	\[
	(\E[|Z_i|^p])^{1/p}\leq cp^{1/2}\|\mX\|_F, \ \forall \ p\geq 1,
	\]
	where $c>0$ is a constant. It follows that
	\e \label{eq:subexponential norm}
	\|Z_i\|_{\psi_2} \leq c\|\mX\|_F;
	\ee
	i.e., $Z_i$ is a sub-Gaussian random variable, as desired.
	
	Now, applying the Hoeffding-type inequality in \Cref{lem:Hoeffding type inequality} with $t = \delta \|\mX\|_F$ and $K = c\|\mX\|_F$ gives
	\[
	\P{\frac{1}{m}\left| \|\calA(\mX)\|_1 - \E[\|\calA(\mX)\|_1] \right| > \delta \|\mX\|_F} \leq 2\exp(-c_1m\delta^2)
	\]
	for some constant $c_1>0$. Using~\eqref{eq:mean of <A,X>}, we conclude that~\eqref{eq:L1-2 RIX fixed X} holds with probability at least $1 - 2\exp(-c_1m \delta^2)$. This completes the proof.
\end{proof}

\subsection{Proof of \Cref{thm: L1-2 RIP for Gaussian}}
We now utilize an $\epsilon$-net argument to show that \eqref{eq:L1-2 RIX fixed X} holds for all rank-$r$ matrices with high probability as long as $m\gtrsim nr$.  Since the inequality~\eqref{eq:L1-2 RIX fixed X} is scale invariant, without loss of generality, we may assume that $\|\mX\|_F = 1$ and focus on the set $\setS_{r}$ defined in \Cref{lem:covering number}.
\begin{proof}[Proof of \Cref{thm: L1-2 RIP for Gaussian}]
	We begin by showing that~\eqref{eq:L1-2 RIX fixed X} holds for all $\mX \in \overline \setS_{r,\epsilon}$ with high probability. Indeed, upon setting $\epsilon = \frac{\delta\sqrt{\pi}}{16}$ in \eqref{eq:number in Sr} and utilizing a union bound together with \cref{lem: L1-2 RIP fxied X}, we have
	\e\begin{split}
		&\P{\max_{\overline\mX\in\overline \setS_{r,\epsilon}}  \frac{1}{m}\left|\|\calA(\overline\mX)\|_1 - m\sqrt{\frac{2}{\pi}}\|\overline\mX\|_F \right| \geq \frac{\delta}{2}} \leq 2|\overline\setS_{r,\epsilon}| \exp(-c_1m\delta^2)\\
		&\leq 2 \left(\frac{9}{\epsilon}\right)^{(2n+1)r} \exp(-c_1 m \delta^2) \leq \exp(-c_2m\delta^2)
	\end{split}\label{eq:net-conc}
	\ee
	whenever $m\gtrsim nr$.
	
	Next, we show that~\eqref{eq:L1-2 RIX fixed X} holds for all $\mX \in \setS_r$. Towards that end, set
	\e
	\kappa_r = \frac{1}{m} \sup_{\mX\in\setS_r} \left\|\calA(\mX)\right\|_1
	\label{eq:kappa r}\ee
	and let $\mX\in \setS_r$ be \emph{arbitrary}. Then, there exists an $\overline \mX\in \overline \setS_{r,\epsilon}$ such that $\|\mX - \overline\mX\|_F \leq \epsilon$. It follows from~\eqref{eq:net-conc} that with high probability,
	\e\begin{split}
		\frac{1}{m}\|\calA(\mX)\|_1 &= \frac{1}{m}\|\calA(\mX - \overline\mX) + \calA( \overline\mX)\|_1 \leq \frac{1}{m}\|\calA(\mX - \overline\mX) \|_1 + \frac{1}{m}\| \calA(\overline\mX)\|_1\\
		&\leq \frac{1}{m}\|\calA(\mX - \overline\mX) \|_1 + \sqrt{\frac{2}{\pi}} + \frac{\delta}{2}.
	\end{split}\label{eq:bound covering 1}\ee
	Noting that $\mX - \overline\mX$ has rank at most $2r$, we can decompose it as $\mX - \overline\mX = \mDelta_1 + \mDelta_2$, where $\langle \mDelta_1, \mDelta_2\rangle = 0$ and $\rank(\mDelta_1),\rank(\mDelta_2)\leq r$ (this follows essentially from the SVD). Hence, we can compute
	\e\begin{split}
		& \frac{1}{m} \|\calA(\mX - \overline\mX) \|_1 \leq \frac{1}{m}[\|\calA(\mDelta_1)\|_1 + \|\calA(\mDelta_2)\|_1] \\
		&= \frac{1}{m}[\|\mDelta_1\|_F\|\calA(\mDelta_1/\|\mDelta_1\|_F)\|_1 + \|\mDelta_2\|_F\|\calA(\mDelta_2/\|\mDelta_2\|_F)\|_1]\\ & \leq \kappa_r (\|\mDelta_1\|_F + \|\mDelta_2\|_F) \leq \sqrt{2} \kappa_r \epsilon,
	\end{split}\nonumber\ee
	where the last inequality is due to $\|\mDelta_1\|_F^2 + \|\mDelta_2\|_F^2 = \|\mX - \overline\mX\|_F^2 \leq \epsilon^2$. This, together with \eqref{eq:bound covering 1}, gives
	\begin{equation} \label{eq:ub}
	\frac{1}{m}\|\calA(\mX)\|_1 \leq \sqrt{\frac{2}{\pi}} + \frac{\delta}{2} + \sqrt{2} \kappa_r \epsilon.
	\end{equation}
	In particular, using the definition of $\kappa_r$ in~\eqref{eq:kappa r}, we obtain
	\[
	\kappa_r \leq \sqrt{\frac{2}{\pi}} + \frac{\delta}{2} + \sqrt{2}\kappa_r\epsilon,
	\]
	or equivalently,
	\[ \kappa_r \leq \frac{\sqrt{{2}/{\pi}} + \delta/2}{ 1- \sqrt{2}\epsilon}. \]
	Plugging in our choice of $\epsilon$ yields 
$\sqrt{2}\kappa_r\epsilon \le  \frac{\delta}{2}$. This, together with~\eqref{eq:ub} and the fact that $\|\mX\|_F=1$, implies
	\[ \frac{1}{m}\|\calA(\mX)\|_1 \le \left( \sqrt{\frac{2}{\pi}} + \delta \right) \|\mX\|_F. \]
	Similarly, using~\eqref{eq:net-conc}, we have
	\begin{align*}
	& \frac{1}{m}\|\calA(\mX)\|_1 \geq  \frac{1}{m}\| \calA(\overline\mX)\|_1 -  \frac{1}{m}\|\calA(\mX - \overline\mX) \|_1 \\
	& \ge  \sqrt{\frac{2}{\pi}}  - \frac{\delta}{2} -  \frac{1}{m}\|\calA(\mX - \overline\mX) \|_1 \\
	&\ge \sqrt{\frac{2}{\pi}}  - \frac{\delta}{2} - \sqrt{2}\kappa_r\epsilon \ge \sqrt{\frac{2}{\pi}}  - \delta \\
	&= \left( \sqrt{\frac{2}{\pi}}  - \delta \right) \|\mX\|_F
	\end{align*}
	with high probability. This completes the proof.
\end{proof}

\end{document}